\definecolor{blueviolet}{rgb}{0.2, 0.2, 0.6}
\definecolor{webgreen}{rgb}{0,.5,0}
\definecolor{webbrown}{rgb}{.6,0,0}
\newtheorem{theorem}{Theorem}
\newtheorem{definition}{Definition}
\newtheorem{corollary}{Corollary}
\newtheorem{lemma}{Lemma}
\newtheorem{fact}{Fact}
\newtheorem*{theorem*}{Theorem}
\newtheorem*{task*}{Task}
\newtheorem*{proposition*}{Proposition}
\DeclareMathOperator{\poly}{poly}
\newcommand{\bs}{\boldsymbol}
\newcommand{\be}{\begin{equation}}
\newcommand{\ee}{\end{equation}}
\newcommand{\eps}{\varepsilon}
\newcommand{\PD}{\Pi_{\text{dist}}}
\newcommand{\expect}{\mathop{\mathbb{E}}}
\newcommand{\dist}{\mathrm{dist}}
\DeclareMathOperator*{\E}{{\mathbb{E}}}
\begin{document}

\title{Unitary designs in nearly optimal depth}

\author{Laura Cui}
\thanks{These authors contributed equally to this work.}
\affiliation{Institute for Quantum Information and Matter and Department of Physics, California Institute of Technology,
Pasadena, California 91125, USA}

\author{Thomas Schuster}
\thanks{These authors contributed equally to this work.}
\affiliation{Walter Burke Institute for Theoretical Physics, California Institute of Technology, Pasadena, California 91125, USA}
\affiliation{Institute for Quantum Information and Matter and Department of Physics, California Institute of Technology,
Pasadena, California 91125, USA}
\affiliation{Google Quantum AI, Venice, California 90291, USA}

\author{Fernando Brand{\~a}o}
\affiliation{AWS Center for Quantum Computing, Pasadena, California 91125, USA}
\affiliation{Institute for Quantum Information and Matter and Department of Physics, California Institute of Technology,
Pasadena, California 91125, USA}

\author{Hsin-Yuan Huang}
\affiliation{Institute for Quantum Information and Matter and Department of Physics, California Institute of Technology,
Pasadena, California 91125, USA}
\affiliation{Google Quantum AI, Venice, California 90291, USA}

\begin{abstract}
We construct $\varepsilon$-approximate unitary $k$-designs on $n$ qubits in circuit depth $\mathcal{O}(\log k  \log \log n k / \varepsilon)$.
The depth is exponentially improved over all known results in all three parameters $n$, $k$, $\varepsilon$.
We further show that each dependence is optimal up to exponentially smaller factors.
Our construction uses $\tilde{\mathcal{O}}(nk)$ ancilla qubits and $\mathcal{O}(nk)$ bits of randomness, which are also optimal up to $\log(n k)$ factors.
An alternative construction achieves a smaller ancilla count $\tilde{\mathcal{O}}(n)$ with circuit depth $\mathcal{O}(k  \log \log nk/\varepsilon)$.
To achieve these efficient unitary designs, we introduce a highly-structured random unitary ensemble that leverages long-range two-qubit gates and low-depth implementations of random classical hash functions.
We also develop a new analytical framework for bounding errors in quantum experiments involving many queries to random unitaries.
As an illustration of this framework's versatility, we provide a succinct alternative proof of the existence of pseudorandom unitaries.
\end{abstract}

\maketitle

Random unitaries are ubiquitous across quantum science, serving both as fundamental theoretical tools and practical building blocks for quantum technologies.
They provide useful models for understanding chaotic many-body dynamics~\cite{fisher2023random,nahum2017entgrowth,cotler2022fluctuations}, quantum gravity phenomena~\cite{sekino2008fast,hayden2007black,brown2023quantum}, and thermalization in isolated quantum systems~\cite{deutsch1991quantum,srednicki1994chaos,rigol2008thermalization}.
Beyond their theoretical significance, random unitaries have been essential for device benchmarking~\cite{emerson2005scalable,knill2008randomized,elben2023randomized}, state tomography~\cite{guta2020fast, huang2020predicting,zhao2021fermionic}, quantum advantage demonstrations~\cite{arute2019quantum, morvan2023phase, abanin2025constructive}, and quantum cryptography~\cite{ji2018pseudorandom,ananth2022cryptography,kretschmer2023quantum}.
From an analytical perspective, the uniform Haar measure over unitaries enables tractable mathematical investigations through its elegant structure and a wealth of established results in random matrix theory~\cite{weingarten1978asymptotic,collins2003moments,collins2006integration, mehta2004random,tao2012topics,anderson2010introduction}.

These fundamental roles have sparked an immense effort to understand in what situations and time-scales quantum systems can realize random unitaries.
This effort is spearheaded by two notions of approximate random unitaries: \emph{unitary $k$-designs} \cite{emerson2003pseudo,gross2007evenly,dankert2005efficient,dankert2009exact,brandao2016local, haah2024efficient, chen2024incompressibility} and \emph{pseudorandom unitaries} (PRUs) \cite{metger2024simple, chen2024efficient, ma2024construct}.
Intriguingly, recent work has revealed that both objects can form in drastically lower times, or \emph{circuit depths}, than previously thought.
In particular, unitary designs on $n$ qubits can form in depth $\tilde{\mathcal{O}}(k) \cdot \log (n/\varepsilon)$ on any circuit geometry, where $\varepsilon$ quantifies the approximation error~\cite{laracuente2024approximate,schuster2024random}.
Meanwhile, PRUs can form in depth $\text{poly}(\log n)$ in one-dimensional systems, and depth $\text{poly}(\log \log n)$ in systems with long-range two-qubit gates~\cite{schuster2024random}.
At the heart of these findings is a surprising realization: that quantum systems can appear Haar-random even when their light-cones are only a small fraction of the total system size.

Despite this progress, key open questions remain, especially regarding unitary designs.
While existing circuit depths for designs are optimal for \emph{one-dimensional} systems, lower bounds suggest they may be \emph{exponentially worse} than optimal for more general circuit architectures~\cite{schuster2024random}.
Resolving this question could have substantial practical importance, as many leading quantum computing platforms feature long-range circuit geometries~\cite{bluvstein2024logical,iqbal2024non,slussarenko2019photonic}.

\begin{figure}[t]
    \centering
    \includegraphics[width=\linewidth]{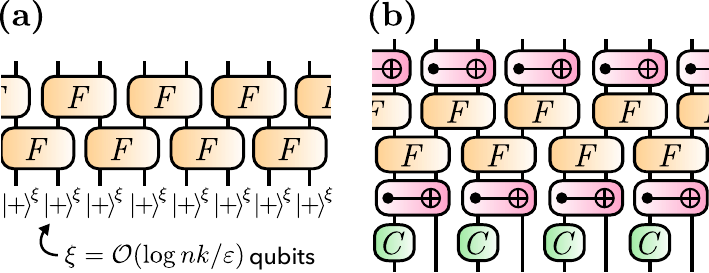}
    \caption{Schematic of our low-depth constructions of state and unitary designs. Black vertical lines denote local patches of $\xi = \mathcal{O}(\log nk/\varepsilon)$ qubits. 
    \textbf{(a)} Our random state designs apply a two-layer circuit of random phase gates $F$ to the plus state. The random phases are drawn from $k$-wise independent hash functions on $2\xi$ bits.  
    \textbf{(b)} Our random unitary designs sandwich the two-layer phase circuit
    between local 2-designs $C$ and conditional-shuffle gates. The latter  shuffle bitstrings on even patches conditional on odd patches (and vice versa), and are also instantiated using $k$-wise independent  functions.} 
    \label{fig: 1}
\end{figure}

In this work, we show that, with long-range two-qubit gates, $\varepsilon$-approximate unitary $k$-designs over $n$ qubits can be realized in a remarkably low circuit depth of $\mathcal{O}(\log k \cdot \log \log nk/\varepsilon)$.
This represents an exponential improvement in the dependence on all three parameters $n, k$, and $\varepsilon$ compared to all known results; the exponentially reduced dependence on $k$ is particularly striking.
Our construction uses $\tilde{\mathcal{O}}(nk)$ ancilla qubits (which begin and end in the zero state), $\mathcal{O}(nk)$ bits of randomness, and two-qubit gates of range $\mathcal{O}(\log nk/\varepsilon)$.
We achieve these designs by introducing a novel random unitary ensemble (Fig.~\ref{fig: 1}) that combines existing classical hash function constructions with inherently quantum ingredients to enable ultra-low circuit depths for random unitary designs.

\begin{table*}
\begin{ruledtabular}
\begin{tabular}{lccc}
 Unitary $k$-designs & Circuit depth & Ancilla count & Error \\
 \hline
 Random 1D circuit~\cite{brandao2016local,chen2024incompressibility} & $\mathcal{O}((nk+\log1/\varepsilon) \cdot \poly\log k)$ & None & Relative \\
 Blocked random 1D circuit~\cite{schuster2024random} & $\mathcal{O}(k \cdot \poly\log k \cdot \log n/\varepsilon)$ & None & Relative \\
 PFC circuit~\cite{metger2024simple,ma2024construct} & $\mathcal{O}(nk \cdot \poly\log(n))$ & None & Measurable \\
 Blocked LRFC circuit \emph{(this work)} & $\mathcal{O}(\log k \cdot \log \log  nk/\varepsilon)$ & $k n \cdot \tilde{\mathcal{O}}(\log \log n / \varepsilon)$ & Measurable \\
  & $\mathcal{O}(k \cdot \log \log nk/\varepsilon)$ & $n \cdot \tilde{\mathcal{O}}(\log \log n k / \varepsilon)$ & Measurable \vspace{1mm} \\
 \hline \hline
 State $k$-designs & Circuit depth & Ancilla count & Error \\
 \hline
 Random phase state~\cite{ji2018pseudorandom,brakerski2019pseudo} & $\mathcal{O}(\log k \cdot \log n)$ & $k n \cdot \tilde{\mathcal{O}}(\log n)$ & Additive \\
 Blocked random phase state \emph{(this work)} & $\mathcal{O}(\log k \cdot \log \log nk/\varepsilon)$ & $k n \cdot \tilde{\mathcal{O}}(\log \log n / \varepsilon)$ & Additive \\
  & $\mathcal{O}(k \cdot \log \log nk/\varepsilon)$ & $n \cdot \tilde{\mathcal{O}}(\log \log n k / \varepsilon)$ & Additive \\
 Lower bound \emph{(this work)} & $\Omega(\log k + \log \log n/\varepsilon)$ & Any & Additive  \\
  & $\Omega(k + \log \log n/\varepsilon)$ & $\mathcal{O}(n)$ & Additive  \\
\end{tabular}
\end{ruledtabular}
\caption{\label{tab:design depths} Comparison of the circuit depth and ancilla count of several known constructions of designs.
The Permutation-Function-Clifford (PFC) ensemble and random phase state have fixed approximation error $\varepsilon = \mathcal{O}(k^2/2^n)$. The PFC and blocked LRFC circuits can be improved to relative error at the cost of an additional factor of $k$ in the circuit depth~\cite{metger2024simple,supp}.
}
\end{table*}

Our main contributions are detailed as follows.
First, we introduce our new random unitary ensemble and prove that it forms a design in the ultra-low circuit depths described above. We also provide a streamlined construction of low-depth state designs.
Second, we introduce a new notion of approximation error for unitary $k$-designs that captures the strongest experimentally-achievable distinguishability.
This \emph{measurable error} quantifies how well a unitary ensemble can be distinguished from Haar-random by any quantum experiment that makes $k$ queries to the unitary.
Our results hold under this strong form of approximation error \footnote{The expert reader may recall an even stronger notion of approximation error in the design literature, the \emph{relative error}. Our designs achieve relative error at the cost of an additional factor of $k$ in the circuit depth (Table~I). As discussed later in the main text, the relative error cannot be measured in any (sub-exponential-time) quantum experiment~\cite{supp}.}.
Third, we prove a complementary lower bound, showing that unitary designs, with any reasonable notion of error, require circuit depth at least $\Omega(\log k + \log \log n/\varepsilon )$.
To establish this bound, we provide a simple and efficient test that can distinguish any lower-depth quantum state or unitary from Haar-random. 
Fourth, a core technical contribution of our work is the development of a new analytic approach for proving that random unitary ensembles are indistinguishable from Haar-random.
To demonstrate the versatility of our approach, we apply it to give a succinct alternative proof of the existence of pseudorandom unitaries (following recent work by Ma and Huang~\cite{ma2024construct}).
We anticipate that this analytical framework will facilitate continued developments in the field.

\emph{Background.}---We begin with a short review of state and unitary designs. 
A random state ensemble $\mathcal{S}$ is a state $k$-design with additive error $\varepsilon$ if its $k$-th moment,
\begin{equation}
    \chi^{}_{\mathcal{S}} \equiv \E_{\psi \in \mathcal{S}} \big[ \dyad{\psi}^{\otimes k}\big],
\end{equation}
is equal to the $k$-th moment $\chi_H$ of the Haar ensemble on the unit sphere of $\mathbbm{C}^{2^n}$ up to small trace-norm error, $\lVert \chi_\mathcal{S} - \chi_H \rVert_1 \leq \varepsilon$.
This trace norm condition ensures that $\mathcal S$ is indistinguishable from Haar-random in any experiment that queries a random state $k$ times.

A random unitary ensemble $\mathcal E$ is a unitary $k$-design with additive error $\varepsilon$ if its $k$-th moment, i.e.~the channel
\begin{equation} \label{eq: def Phi}
    \Phi_\mathcal{E}(\cdot) \equiv \E_{U \in \mathcal{E}} \left[ U^{\otimes k} (\cdot) U^{\dagger,\otimes k} \right],
\end{equation}
is equal to the $k$-th moment $\Phi_H$ of the Haar ensemble on $U(2^n)$ up to small diamond-norm error, $\lVert \Phi_\mathcal{E} - \Phi_H \rVert_\diamond \equiv \max_\rho \lVert \Phi_\mathcal{E}(\rho) - \Phi_H(\rho) \rVert_1 \leq \varepsilon$. Here, $\rho$ could be supported on a larger system containing both the $n$ qubits and an arbitrary number of ancilla qubits.
As mentioned earlier, our unitary design results in fact hold under a much stronger notion of approximation error, the measurable error, which captures the maximum distinguishability achievable by any experiment making $k$ queries to $U$. We defer a formal definition to the technical sections below.

At present, the best-scaling constructions of state and unitary $k$-designs are one-dimensional random circuits.
Recent works have established that such circuits, in which each gate is drawn independently and randomly from the two-qubit unitary group, create designs in depth $\mathcal{O}(k \cdot \text{poly} \log k \cdot \log(n/\varepsilon))$~\cite{brandao2016local,chen2024incompressibility,schuster2024random}.
Two key challenges have prevented improvements over this scaling, even when allowing long-range two-qubit gates.
First, standard methods to amplify designs~\cite{brandao2016local} decrease the design error  exponentially in the circuit depth; achieving a sub-logarithmic depth scaling is thus difficult.
Second, the dependencies $\Omega(k)$ and $\Omega(\log n/\varepsilon)$ \emph{are} in fact optimal for random circuits~\cite{brandao2016local,aharonov2023polynomial,fefferman2024anti}, whose lack of structure prevents faster design formation.
Hence, any further improvements in design depths must involve highly-structured  ensembles.

\emph{Nearly optimal state designs.}---We now introduce our construction of state designs.
Our construction utilizes a binary phase state, $\ket{\psi} = \frac{1}{\sqrt{2^n}} \sum_{x \in \{0,1\}^n} (-1)^{f(x)} \ket{x}$.
When $f: \{0,1\}^n \rightarrow \{0,1\}$ is a random function, the phase state forms a $k$-design with exponentially small additive error, $k^2/2^n$~\cite{brakerski2019pseudo}.
However, this state has very high circuit depth, since a random function on $n$ qubits requires exponentially many gates to compute.

To design a more resource-efficient state ensemble, we make two modifications.
First, and most crucially, we introduce a blocked variant of the random phase state [Fig.~\ref{fig: 1}(a)].
Blocks of random phase gates are applied to an initial product state using a two-layer brickwork structure.
Each block acts on a local patch of $2\xi$ qubits as $F_{i} = \sum_{x_{i} \in \{0,1\}^{2\xi}} (-1)^{f_i(x_{i})} \dyad{x_{i}}$.
Blocks in the first and second layer overlap on $\xi$ qubits each.
We prove that this blocked random phase state forms a state $k$-design with additive error $\varepsilon = nk^2/2^\xi$.
Hence, the size of each local patch scales only logarithmically in all three parameters, $\xi = \log_2(nk^2/\varepsilon)$.

Second, following earlier works~\cite{brakerski2019pseudo,metger2024simple}, we replace each random function $f_{i}(x_{i})$ with a \emph{$2k$-wise independent random hash function}~\cite{wegman1981new} on $2\xi$ bits.
By definition, a $2k$-wise independent hash function reproduces the $2k$-th moment of a random function.
Hence, the resulting $2k$-wise blocked random phase state achieves a state $k$-design with the same error as before.
The factor of two accounts for each function appearing in both the bra and the ket~\cite{zhandry2021PRF}.

To implement each $2k$-wise random function, we borrow a standard construction from the classical  literature~\cite{wegman1981new,brakerski2019pseudo}.
Namely, we define $f_{i}(x_{i}) = \sum_{j=0}^{k-1} \alpha_{j} (x_i)^j$ where $\alpha_{j} \in \text{GF}(2^{2\xi})$ are randomly drawn from the Galois field of dimension $2^{2\xi}$.
Each bitstring $x_{i}$ is uniquely associated with an element of the Galois field $\text{GF}(2^{2\xi})$.
All arithmetic operations occur within the field $\text{GF}(2^{2\xi})$.
In the supplemental material~\cite{supp}, we provide a detailed accounting showing that such functions can be computed in quantum circuit depth $\mathcal{O}(\log k \cdot \log \xi)$ using $\mathcal{O}(\xi k  \log \xi  \log \log \xi)$ ancilla qubits, and depth $\mathcal{O}(k  \cdot \log \xi)$ using $\mathcal{O}(\xi  \log \xi  \log \log \xi)$ ancilla qubits.
From this construction, we immediately achieve state $k$-designs in quantum circuit depth $\mathcal{O}(\log k \cdot \log \log nk/\varepsilon)$ with $nk\cdot\tilde{\mathcal{O}}(\log \log nk/\varepsilon)$ ancilla qubits, and depth $\mathcal{O}(k \cdot \log \log nk/\varepsilon)$ using $n \cdot \tilde{\mathcal{O}}(\log \log nk/\varepsilon)$ ancilla qubits.

Our proof that the blocked random phase state forms a state $k$-design proceeds via a surprisingly simple argument [Fig.~\ref{fig: proof}(a-b)].
We consider $k$ copies of the state, decomposed in the computational basis $x^{(1)}, x^{(2)}, \ldots, x^{(k)}$ on each copy.
We then project onto a subspace where the local bitstrings $x_{i}$ at each patch are distinct across the $k$ copies, i.e.~$x_{i}^{(j)} \neq x_{i}^{(j')}$ for all patches $i$ and copies $j \neq j'$.
This projection incurs only a small trace-norm error, since the original state is a superposition over all bitstrings, and most bitstrings are locally distinct.
Finally, we show that \emph{on} the local distinct subspace, the two-layer blocked structure $\prod_i F_{i}$ acts identically to a global random phase operator $F$, which in turn acts identically to the Haar twirl.
These equivalences follow from straightforward computations.

\begin{figure*}[t]
    \centering
    \includegraphics[width=0.95\linewidth]{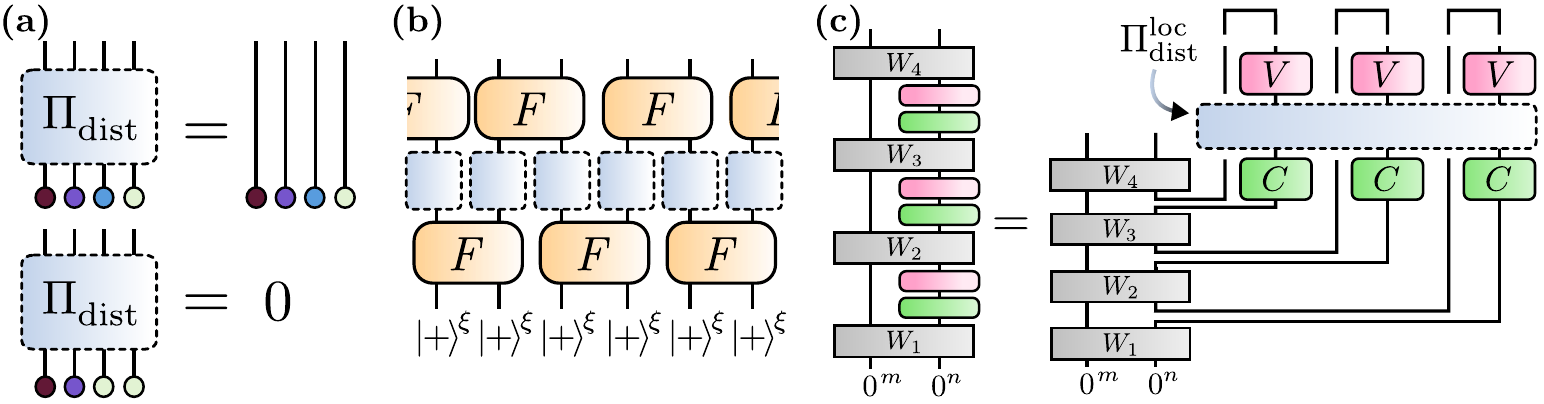}
    \caption{Illustration of key ideas from our design proofs. \textbf{(a)} The distinct subspace projector~\cite{metger2024simple} $\Pi_{\text{dist}}$ eliminates bitstrings that are equal on different copies of the state or unitary (bottom), while leaving ``distinct'' sets of bitstrings invariant (top). Here, each leg represents one of $k$ copies, and the colored circles indicate the $n$-bit string $x^{(j)}$ of a computational basis state on each copy ($j=1,\ldots,k$). \textbf{(b)} To prove that the blocked random phase state is a state $k$-design, we insert projectors onto the local distinct subspace between each random phase gate. On the local distinct subspace, the blocked random phase circuit acts identically to a Haar-random unitary, which completes the proof. \textbf{(c)} To prove that the blocked LRFC circuit forms a unitary $k$-design with small measurable error, we reformulate any quantum experiment involving sequential applications of a unitary $U = VC$ (with $V \equiv S_o \prod_i F_{i}$ and  $C \equiv S_e C_o$ for the blocked LRFC circuit) into an equivalent experiment involving parallel applications and post-selection on EPR states (top bars). We then insert projectors onto the local distinct subspace, $\Pi_{\text{dist}}^{\text{loc}}$,  between $V$ and $C$ in the parallel applications. On the local distinct subspace, the blocked LRFC circuit acts identically to a Haar-random unitary, which completes the proof.}
    \label{fig: proof}
\end{figure*}

\emph{Nearly optimal unitary designs.}---To construct low-depth random unitary designs, we combine the two-layer brickwork structure consisting of blocks of random phase gates with several new ingredients. 
These new ingredients are necessary because a quantum  experiment can apply a unitary to any initial state, including those outside the local distinct subspace.

Our unitary ensemble is inspired by the Luby-Rackoff-Function-Clifford (LRFC) ensemble, $U = S_L  S_R FC$, introduced in~\cite{SRU2025}.
Here, $S_L$ randomly shuffles the left $n/2$ qubits based on the right $n/2$ qubits, $S_L \ket{x_L , x_R} = \ket{x_L + h_L(x_R) , x_R}$, where $h_L: \{0,1\}^{n/2} \rightarrow \{0,1\}^{n/2}$ is a random function.
$S_R$ is the analogous operation for shuffling the right $n/2$ qubits, $F$ is a random phase unitary on all $n$ qubits, and $C$ is a random Clifford unitary.
The LRFC ensemble forms a unitary design with an exponentially small additive error, $6k^2/2^{n/2}$~\cite{SRU2025}.

Before presenting our simplest and most resource-efficient unitary ensemble, let us first describe an alternative approach.
Consider a two-layer blocked circuit similar to Fig.~\ref{fig: 1}, but where each block of gates is a random LRFC circuit on $2\xi$ qubits.
We know that the LRFC ensemble forms a design with additive error $6k^2/2^{\xi}$~\cite{SRU2025}.
Hence, the two-layer blocked circuit consisting of small LRFC circuits is indistinguishable from a two-layer blocked circuit consisting of small Haar-random unitaries.
From the gluing lemma of~\cite{schuster2024random}, the two-layer Haar-random blocked circuit forms a unitary design with error $nk^2/2^\xi$.
%
%
The shuffle operations $S_L$, $S_R$ and phase unitary $F$ can be implemented in short depth using $2k$-wise independent functions~\cite{wegman1981new,supp}.
The random Clifford $C$ can be replaced with an exact unitary 2-design, with circuit depth $\mathcal{O}(\log \xi)$ and ancilla count $\tilde{\mathcal{O}}(\xi)$~\cite{cleve2015near}.
Together, this yields a unitary design with circuit depth $\mathcal{O}(\log k \log \log nk/\varepsilon)$ and ancilla count $nk \cdot \tilde{\mathcal{O}}(\log \log nk/\varepsilon)$.

We can further simplify this unitary ensemble via the construction in Fig.~\ref{fig: 1}(b).
We consider the circuit, $U = S_o (\prod_i F_{i}) S_e C_o$, where $S_o$ is a tensor product of random shuffles of odd patches conditional on even patches, $\prod_i F_{i}$ is our two-layer blocked random phase circuit, $S_e$ shuffles even patches conditional on odd patches, and $C_o$ is a tensor product of random Clifford unitaries on each odd patch.
This \emph{blocked LRFC ensemble} requires fewer shuffle operations, and both fewer and smaller Clifford unitaries, compared to the two-layer blocked LRFC circuit.
Its simplicity will also enable a stand-alone design proof, independent of previous results.
As before, the ensemble has circuit depth $\mathcal{O}(\log k \log \log nk/\varepsilon)$ and ancilla count $\tilde{\mathcal{O}}(nk)$, or depth $\mathcal{O}( k \log \log nk/\varepsilon)$ and ancilla count $\tilde{\mathcal{O}}(n)$.
It uses $3nk + 5n/2$ bits of randomness~\cite{supp}, which is near the optimal value of $2nk$~\cite{gross2007evenly,roy2009unitary}.

\emph{Measurable error of unitary designs.}---Before describing our proof of unitary designs, let us briefly digress to discuss notions of approximation error in the design literature.
Unfortunately, the additive error for unitary designs provides only a weak guarantee, that a unitary appears Haar-random in experiments that query it $k$ times in \emph{parallel}~\cite{schuster2024random}.
To address this, one common approach uses the so-called \emph{relative error},  $(1-\varepsilon)\Phi_H \preceq \Phi_\mathcal{E} \preceq (1+\varepsilon) \Phi_H$, where $\preceq$ denotes the channel inequality~\footnote{Here, $\Phi \preceq \Phi'$ indicates that $\tr(M \Phi(\rho)) \leq \tr(M \Phi'(\rho))$ for all positive operators $\rho$ and $M$.}.
The relative error is remarkably strong: it can bound any property that can be efficiently measured in any experiment, as well as many exponentially small quantities that can never be measured.
While this strength can be convenient, it often represents an overly stringent and un-physical requirement.
Indeed, recent works have established unitary ensembles that have large relative error, yet cannot be distinguished from Haar-random unitaries by any efficient experiment~\cite{ma2024construct,SRU2025}.

To address this, we define a new, natural notion of error for unitary designs: the \emph{measurable error}.
A unitary ensemble has small measurable error if it cannot be distinguished from Haar-random by any experiment that queries it $k$ times.
To be precise, if we denote the output of a general quantum experiment as $\ket*{\psi_W^U} = W_{k+1} U W_k U \cdots U W_1 \ket{0}$ where $W_j$ are arbitrary quantum operations (involving arbitrary ancilla qubits) applied between successive queries to $U$~\cite{supp}, then the measurable error demands
\begin{equation}
    \left\lVert \E_{U \sim \mathcal{E}} \Big[ \dyad*{\psi^U_W} \Big] - \E_{U \sim H} \Big[ \dyad*{\psi^U_W} \Big] \right\rVert_1 \leq \varepsilon
\end{equation}
for all $W_j$.
This resembles the notion of adaptive security in the context of PRUs \footnote{We remark that the adjectives ``adaptive'' and ``non-adaptive'' do not perfectly capture the distinction between experiments that query a unitary in parallel versus general experiments. For example, an experiment could query a unitary on many copies in parallel and then perform adaptive measurements on the many copies.}.
The measurable error is substantially stronger than the additive error, which only concerns experiments in which $U^{\otimes k}$ is applied once in parallel.
Moreover, if a unitary ensemble fails to have small measurable error, there always exists an experiment that can measure that the  ensemble is not Haar-random~\cite{supp}.

To prove that the blocked LRFC ensemble forms a unitary design with small measurable error, we develop a new general approach for bounding the measurable error of random unitary ensembles.
Our approach combines and refines recent ideas from the study of PRUs.
At a high-level, it proceeds in three steps.
First, we reformulate any experiment involving sequential applications of a random unitary as an equivalent experiment involving parallel applications and post-selection [see Fig.~\ref{fig: proof}(c)].
Then, we prove that one can insert a projector onto the local distinct subspace between key components of the parallel random unitaries in the post-selected experiment.
Finally, we show that the ensemble of interest, i.e.~the blocked LRFC ensemble, acts identically to a Haar-random unitary on the local distinct subspace, thereby completing the proof.
Following these steps, we prove that the blocked LRFC ensemble has measurable error $\varepsilon = 3(n/\xi)k^2/2^\xi$, which leads to the stated depths of our unitary designs.

\emph{Minimum depth of designs.}---Given rapid recent  improvements in design circuit depth, a natural question to ask is: Can one do any better? Unfortunately, while a simple counting argument shows that a circuit depth $\Omega(\log k)$ is indeed optimal~\cite{brandao2016local,supp} (and thus our $k$-dependence is optimal up to an exponentially smaller $\log \log k$ factor), existing lower bounds leave gaps in the dependence on $n$ and $\varepsilon$. Specifically, lower bounds for $n$ apply only to the relative error, while bounds for $\varepsilon$ apply only to unitaries and not states~\cite{schuster2024random}.

Here, we resolve this by introducing a simple and efficient test to distinguish any sufficiently low-depth quantum state from Haar-random.
Our test implies that $n$-qubit state and unitary designs with additive error $\varepsilon$ require circuit depth $\Omega(\log \log n/\varepsilon)$, as we achieve.
%
%
Our test proceeds as follows: We choose a random product basis, and measure two identical copies of the state of interest in the chosen basis.
We then divide the measurement outcomes, represented as $n$-bit strings, into local patches of $\sim \! L$ bits each, where $L$ is the size of the light-cone of the state preparation circuit.
Then, we count the number of patches where the two measurement outcomes are equal (i.e.~collide).
We prove that for any low-depth state, the number of collisions is high, whereas in a Haar-random state it is very low.
A detailed analysis yields the precise scaling with $n$ and $\varepsilon$~\cite{supp}.

Physically, our test builds upon the intuition from~\cite{schuster2024random} that large numbers of non-commuting observables are essential for quantum circuits to appear random.
We show that for circuits of insufficient depth, an extensive number of observables both have high expectation value in the state, and commute with the random product measurement.
This causes the measurement distribution to retain large amounts of information about the state.
This manifests in some local bitstrings appearing (and hence, colliding) more often than others.

\begin{figure}[t]
\includegraphics[width=0.95\linewidth]{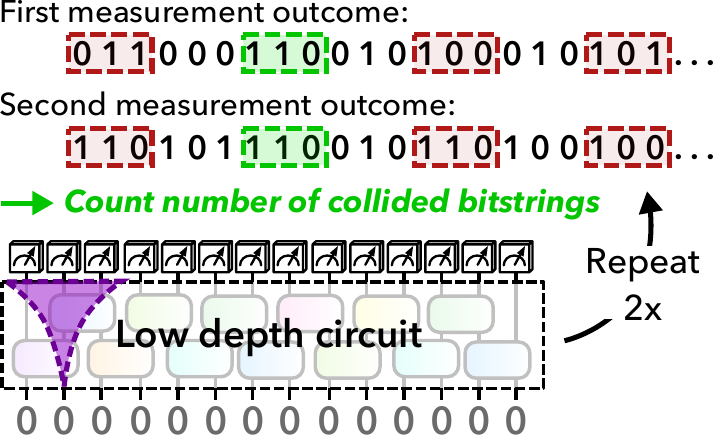}
 \caption{Illustration of our lower bound on state and unitary designs.
 We consider the output distribution when the state of interest is measured in a random product basis.
 We show that with two measurements, one can efficiently distinguish any state with circuit depth less than $\log \log n/\varepsilon$ from Haar-random, by counting local collisions (green) in the   measurement outcomes. The collisions are considered on regions (red/green) of the same size as the circuit light-cone (purple).}
\label{fig:lower-bound}
\end{figure}

\emph{Application to pseudorandom unitaries.}---Finally, although our work focuses primarily on unitary designs, the techniques we develop also have immediate applications to the study of PRUs.
A PRU is any efficient unitary ensemble indistinguishable from Haar-random unitaries in all computationally-bounded quantum experiments~\cite{ji2018pseudorandom}.
The standard approach to proving the security of PRUs involves, as an intermediary step, proving that a more-random version of the ensemble is a design with small measurable error~\cite{brakerski2019pseudo,metger2024simple,ma2024construct,SRU2025}.

Thus far, the only existing method for proving the security of PRUs is the path-recording framework proposed in~\cite{ma2024construct}, which relies on analyzing sums of computational paths through computational basis states.
Our proof that our unitary ensemble [Fig.~\ref{fig: 1}(b)] has small measurable error introduces a new methodology that complements  existing techniques.
To illustrate the versatility of our approach, we apply it in the supplemental material~\cite{supp} to give two short and self-contained proofs that the PFC~\cite{metger2024simple,ma2024construct} and LRFC~\cite{SRU2025} ensembles form PRUs.

\emph{Outlook.}---Our work comes near to closing a long line of research on the circuit depths of random unitary designs.
Nonetheless, several questions remain open.
First, constructing truly optimal designs, which requires decoupling the $\log (k)$ and $\log \log(n /\varepsilon)$ dependences, remains open.
Second, applications of $k$-designs with $k > 3$ remain comparatively sparse. 
One theoretical application concerns the growth of quantum circuit complexity~\cite{brandao2016local,chen2024incompressibility}; here, our results prove that the complexity of unitaries without ancilla qubits can grow exponentially in the depth of the unitary with ancilla qubits~\cite{sun2023asymptotically}.
A second application involves using higher-order designs to lower bound the probability that individual random unitaries will mimic lower-order average behavior~\cite{nietner2023average,schuster2024random}.
Finally, while our construction of unitary $k$-designs is nearly optimal in terms of scaling, it is not readily implementable on near-term quantum computing architectures.
The development of practical implementations of unitary and state designs remains an important open direction.

\textit{Acknowledgements}---We are tremendously grateful to Gregory D. Kahanamoku-Meyer for several  discussions on the resource requirements for quantum arithmetic. T.S. acknowledges support from the Walter Burke Institute for Theoretical Physics at Caltech.
T.S. and H.H. acknowledge support from the U.S. Department of Energy, Office of Science, National Quantum Information Science Research Centers, Quantum Systems Accelerator.
The Institute for Quantum Information and Matter is an NSF Physics Frontiers Center.

\noindent \textit{Note Added 1.} During the writing of this work, we learned of Ref.~\cite{zhang2025designs}, which constructs  unitary designs in depth $\mathcal{O}(\log \log n/\varepsilon + 2^{\mathcal{O}(k \log k)})$ and  state designs in depth $\mathcal{O}(\log \log n/\varepsilon + \log k)$ for $k < n$. Their unitary and state ensembles are very different than our own. Our lower bound proves that their construction of state designs is precisely optimal.
We thank the authors for sharing their results and several enlightening discussions.

\noindent \textit{Note Added 2.} After posting this work, we learned of Ref.~\cite{Deep2025}, which independently introduces the blocked random phase state and proves that it forms a pseudorandom state when each $f_i$ is instantiated pseudorandomly.


\let\oldaddcontentsline\addcontentsline
\renewcommand{\addcontentsline}[3]{}
\bibliography{refs}
\let\addcontentsline\oldaddcontentsline


\onecolumngrid
\newpage

{\centering
\large\bfseries
Supplementary Material: Unitary designs in nearly optimal depth
\par}

\tableofcontents


\section{Preliminaries}

In this section, we establish the foundational concepts necessary for our analysis of designs and their distinguishability properties. We begin by reviewing the mathematical framework for averaging over the unitary group via the Haar measure, then proceed to define state and unitary designs with their respective error measures.

\subsection{Haar averaging over the unitary group}

We first review properties of the moments of Haar-random unitaries, which serve as the reference point against which we compare other unitary ensembles. Let $\ket{x}$ denote a computational basis state on a system of $n$ qubits corresponding to the bitstring $x \in \{0,1\}^n$. The Haar measure provides the unique translation-invariant measure on the unitary group, defining a uniform distribution over all unitaries on $n$ qubits.
\begin{definition}[Averaging over the unitary group]
    Given a linear operator $X$ acting on $nk$ qubits, the $k$-th moment with respect to $U(2^n)$ is defined via the twirl over the unitary group:
    \begin{equation}
    \Phi_{H}(X) = \int dU \, U^{\otimes k} X (U^\dagger)^{\otimes k},
    \end{equation}
    where we have left out the implicit dependence on $k$.
\end{definition}
This definition captures how the operator $X$ transforms when we apply $k$ independent copies of a Haar-random unitary $U$. While individual Haar-random unitaries have no predictable structure, their statistical moments possess well-defined mathematical properties that can be characterized precisely.

The structure of these moments can be determined using representation theory. The key observation is that these moments must be invariant under any unitary change of basis applied to each copy separately. By applying Schur-Weyl duality \cite{fulton2013representation, goodman2009symmetry}, this symmetry constraint determines the form of the moments:
\begin{lemma}[Explicit form in terms of permutations]
    For any linear operator $X$ acting on $nk$ qubits, the $k$-th moment with respect to the unitary group can be written in the form 
    \begin{equation}
    \Phi_{H}(X) = \sum_{\sigma, \tau \in S_k} c_{\sigma, \tau} \cdot \tr(\sigma X) \cdot \tau.
    \end{equation}
    The coefficients $c_{\sigma,\tau}$ depend on $k$ and the Hilbert space dimension $2^n$.
\end{lemma}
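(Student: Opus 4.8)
The plan is to prove this via Schur-Weyl duality, following the standard representation-theoretic route. First I would recall that $\Phi_H$ is a projection onto the commutant of the diagonal action $U \mapsto U^{\otimes k}$ of $U(2^n)$ on $(\mathbbm{C}^{2^n})^{\otimes k}$. Concretely, $\Phi_H(X)$ commutes with $V^{\otimes k}$ for every $V \in U(2^n)$, since the Haar measure is left- and right-invariant: replacing $U$ by $VU$ inside the integral shows $V^{\otimes k}\,\Phi_H(X)\,(V^\dagger)^{\otimes k} = \Phi_H(X)$. Hence $\Phi_H(X)$ lies in the commutant algebra $\mathcal{A} = \{M : [M, V^{\otimes k}] = 0 \ \forall V\}$.

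The key structural input is Schur-Weyl duality, which identifies $\mathcal{A}$ with the image of the group algebra $\mathbbm{C}[S_k]$ under the representation $\sigma \mapsto P_\sigma$ that permutes the $k$ tensor factors. (When $2^n \geq k$ these operators $\{P_\tau\}_{\tau \in S_k}$ are linearly independent, but we do not even need that — spanning suffices.) Therefore I can write $\Phi_H(X) = \sum_{\tau \in S_k} d_\tau(X)\, P_\tau$ for some coefficients $d_\tau(X)$ depending linearly on $X$. The second step is to pin down the linear functionals $d_\tau$. By left-invariance again, $\Phi_H$ is itself an $S_k$-bimodule map in the appropriate sense; more directly, $\Phi_H(X)$ depends on $X$ only through its components against the commutant, because $\Phi_H$ annihilates any operator orthogonal (in Hilbert–Schmidt inner product) to $\mathcal{A}$. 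Indeed, $\Phi_H$ is self-adjoint and idempotent as a superoperator with range $\mathcal{A}$, so it is the orthogonal projector onto $\mathcal{A}$; thus $d_\tau(X)$ is a linear combination of the overlaps $\{\tr(P_\sigma^\dagger X)\}_{\sigma \in S_k} = \{\tr(P_{\sigma^{-1}} X)\}$. Reindexing $\sigma^{-1} \to \sigma$ and absorbing constants gives $d_\tau(X) = \sum_{\sigma \in S_k} c_{\sigma,\tau}\, \tr(\sigma X)$, where $c_{\sigma,\tau}$ are the entries of (a suitable generalized inverse of) the Gram matrix $G_{\sigma,\tau} = \tr(P_{\sigma^{-1}} P_\tau) = (2^n)^{\#\text{cycles}(\sigma^{-1}\tau)}$, and therefore depend only on $k$ and $2^n$. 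Substituting back yields exactly $\Phi_H(X) = \sum_{\sigma,\tau} c_{\sigma,\tau}\,\tr(\sigma X)\,\tau$.

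I expect the main obstacle to be making the ``orthogonal projector onto the commutant'' claim fully rigorous with the correct handling of the (possibly singular) Gram matrix when $2^n < k$: in that regime the $P_\tau$ are linearly dependent, so $c_{\sigma,\tau}$ is not unique and one must choose, e.g., the Moore–Penrose pseudoinverse, and check that the resulting expression is still well-defined and equals $\Phi_H(X)$. This is a routine but slightly delicate linear-algebra point; since the statement only asserts existence of such coefficients depending on $k$ and $2^n$, I would handle it by either (i) restricting attention to $2^n \geq k$ where everything is clean and remarking that the general case follows by a density/continuity argument in the dimension parameter treated formally, or (ii) simply invoking that the projector onto any subspace can be written in terms of a spanning set via the pseudoinverse of the Gram matrix. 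Everything else — invariance of the Haar measure, the two-sided twirl argument, and the citation of Schur–Weyl duality to \cite{fulton2013representation, goodman2009symmetry} — is standard and can be stated in a few lines.
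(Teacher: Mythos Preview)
Your proposal is correct and follows essentially the same route the paper indicates: the paper does not give a formal proof of this lemma but simply states it as a consequence of Schur--Weyl duality (citing \cite{fulton2013representation, goodman2009symmetry}) and remarks that the coefficients $c_{\sigma,\tau}$ are computable via Weingarten calculus \cite{collins2006integration}. Your sketch fills in precisely those details---the commutant identification, the orthogonal-projector-onto-$\mathcal{A}$ characterization of the twirl, and the Gram-matrix/pseudoinverse discussion for the degenerate regime $2^n < k$---so you are supplying more than the paper does rather than diverging from it.
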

In principle, all of the coefficients $c_{\sigma, \tau}$ can be computed exactly using combinatorial formulas derived from the Weingarten calculus of the unitary group \cite{collins2006integration}. However, the main property of the moments that we will use in the following sections is their relationship to the symmetric group on $k$ copies of the physical systems.

\subsection{State $k$-designs}

We now turn to state ensembles and their ability to mimic the statistical properties of Haar-random states under measurements on multiple copies. State designs provide a framework for constructing explicit state ensembles that approximate the behavior of Haar-random states for experiments involving a bounded number of copies. For quantum states, the approximation is typically characterized using trace distance $\norm{\cdot}_1$.
\begin{definition}[State $k$-design]
    An ensemble of states $\mathcal S$ is an $\varepsilon$-approximate quantum state $k$-design if it reproduces the twirl of the unitary group on $k$ copies of any state on $n$ qubits up to additive error $\varepsilon$, or equivalently,
    \begin{equation}
    \norm{\chi_\mathcal{S} - \chi_H}_1 \leq \varepsilon.
    \end{equation}
    where we have used the abbreviated notation
    \begin{equation}
    \chi_\mathcal{S} = \expect_{\ket{\phi} \sim \mathcal{S}} (|\phi \rangle\!\langle \phi|)^{\otimes k}
    \end{equation}
    to denote the $k$-th moment over the state ensemble $S$.
\end{definition}
This definition characterizes when an ensemble $\mathcal{S}$ cannot be distinguished from Haar-random states by any measurement performed on $k$ copies of a random state. The moments of Haar-random states are maximally mixed over the symmetric subspace on the $k$ copies~\cite{harrow2013church}.

\subsubsection{Example: Binary phase state}

To illustrate the construction of state designs, we review binary phase state ensembles, first introduced in \cite{ji2018pseudorandom} with statistical properties analyzed in \cite{brakerski2019pseudo}. These provide a concrete construction of pseudorandom quantum states using classical Boolean functions.
\begin{definition}[Random binary phase state] \label{def:randomphasestate}
    Let $f:\{0,1\}^n \to \{0,1\}$ be a uniformly random Boolean function. The corresponding \emph{random binary phase state} on $n$ qubits is defined as
    \begin{equation}
        \ket{\phi_f} = \frac{1}{2^{n/2}} \sum_{x \in \{0,1\}^n} (-1)^{f(x)} \ket{x}.
    \end{equation}
\end{definition}

The construction begins with the uniform superposition state and applies random phase factors determined by the Boolean function $f$.

\begin{lemma}[Binary phase states are designs \cite{brakerski2019pseudo}] \label{lemma:randomphasemoment}
    The ensemble of random binary phase states from Definition~\ref{def:randomphasestate} forms an $\varepsilon$-approximate state $k$-design with $\varepsilon = 4k^2/2^n.$
\end{lemma}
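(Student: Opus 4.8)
The plan is to compute the $k$-th moment $\chi_{\mathcal{S}}$ exactly in the computational basis and then compare it to $\chi_H$ through the projector onto the ``distinct'' subspace on the $k$ copies. Writing $d := 2^n$ and expanding $\chi_{\mathcal{S}} = \E_f[\dyad{\phi_f}^{\otimes k}]$ in the basis $\ket{\vec{x}} = \ket{x^{(1)}\cdots x^{(k)}}$, $\ket{\vec{y}}$, one gets $\bra{\vec{x}}\chi_{\mathcal{S}}\ket{\vec{y}} = d^{-k}\,\E_f\big[(-1)^{\sum_{j}(f(x^{(j)})+f(y^{(j)}))}\big]$. Since the bits $f(z)$, $z \in \{0,1\}^n$, are independent and uniform, this expectation factorizes over $z$ and equals $1$ exactly when every value occurs an even number of times in the combined list $(x^{(1)},\dots,x^{(k)},y^{(1)},\dots,y^{(k)})$, and $0$ otherwise. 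Hence $\chi_{\mathcal{S}} = d^{-k}\sum_{(\vec{x},\vec{y}) \in \mathrm{Even}} \ket{\vec{x}}\bra{\vec{y}}$, where $\mathrm{Even}$ is the set of such ``balanced'' pairs.

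Next I would introduce $\Pi_{\mathrm{dist}}$, the projector onto the span of $\ket{x^{(1)}\cdots x^{(k)}}$ with pairwise-distinct entries, and establish two facts. (i) Both moments commute with $\Pi_{\mathrm{dist}}$: for $\chi_H \propto \Pi_{\mathrm{sym}}$ this is immediate because the distinct subspace is $S_k$-invariant, and for $\chi_{\mathcal{S}}$ it follows because if $\vec{x}$ has distinct entries then the balance condition forces each of its $k$ distinct values to reappear an odd (hence nonzero) number of times among the $k$ entries of $\vec{y}$, leaving $\vec{y}$ a permutation of $\vec{x}$ (and so $\vec{x}$ distinct iff $\vec{y}$ distinct). (ii) Restricting the sum over $\mathrm{Even}$ to distinct $\vec{x}$ collapses it to $\sum_{\vec{x}\ \mathrm{distinct}}\sum_{\pi \in S_k}\ket{\vec{x}}\bra{\vec{x}\circ\pi} = k!\,\Pi_{\mathrm{dist}}\Pi_{\mathrm{sym}}\Pi_{\mathrm{dist}}$, so on the distinct subspace the two moments are exactly proportional: $\Pi_{\mathrm{dist}}\chi_{\mathcal{S}}\Pi_{\mathrm{dist}} = \lambda\,\Pi_{\mathrm{dist}}\chi_H\Pi_{\mathrm{dist}}$ with $\lambda = k!\binom{d+k-1}{k}/d^{k} = \prod_{i=0}^{k-1}(1 + i/d)$.

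I would then finish with a three-term triangle inequality,
\[
\lVert \chi_{\mathcal{S}} - \chi_H \rVert_1 \;\le\; \lVert \chi_{\mathcal{S}} - \Pi_{\mathrm{dist}}\chi_{\mathcal{S}}\Pi_{\mathrm{dist}} \rVert_1 \;+\; (\lambda-1)\,\lVert \Pi_{\mathrm{dist}}\chi_H\Pi_{\mathrm{dist}} \rVert_1 \;+\; \lVert \Pi_{\mathrm{dist}}\chi_H\Pi_{\mathrm{dist}} - \chi_H \rVert_1 ,
\]
where the middle term already uses (ii). By the commutation in (i), the first and last terms are trace norms of positive-semidefinite operators and therefore equal $\mathrm{Tr}[\chi_{\mathcal{S}}(\mathbbm{1}-\Pi_{\mathrm{dist}})]$ and $\mathrm{Tr}[\chi_H(\mathbbm{1}-\Pi_{\mathrm{dist}})]$ — the probabilities that $k$ measurement outcomes are not all distinct, for the random phase state (whose single-copy measurement distribution is exactly uniform, so the $k$ outcomes are i.i.d.\ uniform on $\{0,1\}^n$) and for a Haar-random state (computed exactly from $\mathrm{Tr}[\Pi_{\mathrm{sym}}\Pi_{\mathrm{dist}}]$). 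Both are $O(k^2/d)$, the former by a union bound over the $\binom{k}{2}$ pairs of copies and the latter from the explicit value $\prod_{i=0}^{k-1}\tfrac{d-i}{d+i}$. The middle term is $(\lambda-1)\,\mathrm{Tr}[\Pi_{\mathrm{dist}}\chi_H\Pi_{\mathrm{dist}}] \le \lambda - 1 = O(k^2/d)$, using $\prod_{i=0}^{k-1}(1+i/d) \le (1 - \binom{k}{2}/d)^{-1}$. Adding the three pieces and keeping track of constants yields $\lVert \chi_{\mathcal{S}} - \chi_H\rVert_1 \le 4k^2/2^n$ whenever $\binom{k}{2} \le 2^{n-1}$; in the complementary range the bound is vacuous, since then $4k^2/2^n > 2 \ge \lVert \chi_{\mathcal{S}} - \chi_H\rVert_1$.

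The only step I expect to require real care is getting a bound \emph{linear} in $k^2/2^n$ rather than the $\sqrt{k^2/2^n}$ that a generic gentle-measurement estimate would give. The resolution is exactly fact (i): because $\chi_{\mathcal{S}}$ and $\chi_H$ genuinely commute with $\Pi_{\mathrm{dist}}$, the ``leakage'' operators $\chi_{\mathcal{S}} - \Pi_{\mathrm{dist}}\chi_{\mathcal{S}}\Pi_{\mathrm{dist}}$ and $\chi_H - \Pi_{\mathrm{dist}}\chi_H\Pi_{\mathrm{dist}}$ are positive semidefinite, so their trace norm collapses to their (small) trace. Everything else is routine bookkeeping with symmetric-group projectors together with the elementary estimates $\prod_{i=0}^{k-1}(1 \pm i/d) = 1 \pm O(k^2/d)$.
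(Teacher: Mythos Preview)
Your proposal is correct and matches the paper's approach. The paper does not prove this particular lemma (it is cited from~\cite{brakerski2019pseudo}), but its proof of the blocked generalization (Theorem~\ref{thm:blocked-phase-design}) follows exactly the template you describe: compute the moment in the computational basis, observe that the ``Even'' (in the paper's language, ``stabilization'') condition collapses to $\tilde{x} = \pi x$ on the distinct subspace, use that $\chi_{\mathcal{S}}$ and $\chi_H$ commute with $\Pi_{\mathrm{dist}}$ so the projection errors are traces rather than square-root quantities, and combine via a three-term triangle inequality. Your identification of the commutation with $\Pi_{\mathrm{dist}}$ as the key to avoiding a $\sqrt{k^2/2^n}$ loss is precisely the point the paper exploits as well.
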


\subsection{Unitary $k$-designs}

We now establish the framework for unitary designs, which are ensembles of unitary transformations that approximately reproduce the statistics of Haar-random unitaries. Unlike state designs, unitary designs concern dynamical processes that can be applied to arbitrary input states and must withstand more sophisticated experimental probes.

\subsubsection{Additive Error and Parallel Security}

The most commonly studied error measure is additive error, which is based on the differences between the moments in diamond norm. Recall that diamond distance concerns the maximum trace distance on the output states for any input states of a much bigger system size.
\begin{definition}[Unitary $k$-design up to additive error]
    An ensemble of unitaries $\mathcal{E}$ is a \emph{unitary $k$-design} if it reproduces the first $k$ moments of the Haar measure. In addition, given some $\varepsilon > 0$, the ensemble $\mathcal{E}$ is an approximate unitary $k$-design up to \emph{additive error} $\varepsilon$ if
    \begin{equation}
    \norm{\Phi_\mathcal{E} - \Phi_H}_\diamond \leq \varepsilon,
    \end{equation}
    where we have used the abbreviated notation
    \begin{equation}
    \Phi_\mathcal{E}(X) = \E_{U \sim \mathcal{E}} U^{\otimes k} X (U^\dagger)^{\otimes k}
    \end{equation}
    to denote the $k$-th moment over the unitary ensemble $\mathcal E$.
\end{definition}
\noindent Explicitly, the diamond norm is given by $\lVert \Phi - \Phi' \rVert_\diamond = \max_\rho \lVert \Phi(\rho) - \Phi'(\rho) \rVert_1$, where the maximization is over all states $\rho$ on $nk+m$ qubits, where the number $m$ of ancilla qubits may be arbitrarily large.

The diamond norm measures the maximum distinguishability of two quantum channels.
The channels $\Phi_\mathcal{E}$ and $\Phi_H$ apply $k$ copies of a random unitary $U$ in parallel.
It follows that the additive error measures the maximum distinguishability of the ensemble $\mathcal E$ and the Haar-ensemble in any quantum experiment that queries $k$ copies of the unitary $U$ in parallel.
%
%
\begin{lemma}[Additive error is equivalent to parallel security]
    An ensemble $\mathcal{E}$ is an approximate unitary $k$-design up to additive error $\varepsilon$ if and only if for any quantum algorithm making a single query to $U^{\otimes k}$, i.e.~$k$ parallel queries to $U$, the output states when $U$ is sampled from $\mathcal{E}$ versus the Haar ensemble are $\varepsilon$-close in trace distance.
\end{lemma}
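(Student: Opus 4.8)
The plan is to recognize this statement as the operational characterization of the diamond norm and to prove the two implications separately, invoking only (i) monotonicity of trace distance under quantum channels (data processing) and (ii) the standard fact that the diamond norm between two channels of equal input dimension is attained on a pure input state together with an ancilla of matching dimension. No computation is required; the entire content is unwinding the definitions of "single-query algorithm'' and of $\lVert \cdot \rVert_\diamond$.

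For the "only if'' direction, I would model an arbitrary algorithm making a single query to $U^{\otimes k}$ as follows: it prepares a state $\sigma$ on the $nk$ query qubits together with an arbitrary ancilla register $A$, applies $U^{\otimes k} \otimes \mathrm{id}_A$, and then applies a fixed quantum operation $\Lambda$ (an arbitrary measurement and/or further processing, possibly introducing more ancillas) to produce its output. Averaging over the unitary, the state just before $\Lambda$ is $(\Phi_\mathcal{E} \otimes \mathrm{id}_A)(\sigma)$ when $U \sim \mathcal{E}$ and $(\Phi_H \otimes \mathrm{id}_A)(\sigma)$ when $U \sim H$. Since the diamond norm maximizes over states on $nk$ qubits plus arbitrarily many ancilla qubits, $\lVert (\Phi_\mathcal{E} \otimes \mathrm{id}_A)(\sigma) - (\Phi_H \otimes \mathrm{id}_A)(\sigma) \rVert_1 \le \lVert \Phi_\mathcal{E} - \Phi_H \rVert_\diamond \le \varepsilon$, and because $\Lambda$ is a channel, trace distance does not increase under it, so the output states are $\varepsilon$-close.

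For the "if'' direction, I would exhibit an algorithm that saturates the diamond norm. It is standard that $\lVert \Phi_\mathcal{E} - \Phi_H \rVert_\diamond$ is achieved by a pure state $\ket{\phi}$ on $2nk$ qubits, i.e.\ the $nk$ query qubits plus an equal-size ancilla, so that $\lVert \Phi_\mathcal{E} - \Phi_H \rVert_\diamond = \lVert (\Phi_\mathcal{E} \otimes \mathrm{id})(\dyad{\phi}) - (\Phi_H \otimes \mathrm{id})(\dyad{\phi}) \rVert_1$. The algorithm that prepares $\ket{\phi}$, queries $U^{\otimes k}$ on the first $nk$ qubits, and outputs the resulting quantum state verbatim then has output states $(\Phi_\mathcal{E} \otimes \mathrm{id})(\dyad{\phi})$ and $(\Phi_H \otimes \mathrm{id})(\dyad{\phi})$, whose trace distance equals $\lVert \Phi_\mathcal{E} - \Phi_H \rVert_\diamond$; by hypothesis this is at most $\varepsilon$. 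The only point requiring care is the bookkeeping around ancillas — verifying that an arbitrary single-query algorithm's pre-output state genuinely has the form $(\Phi \otimes \mathrm{id})(\sigma)$ and, conversely, that the diamond-norm-optimal input is realizable by some algorithm — and invoking the matched-ancilla, pure-state characterization of $\lVert \cdot \rVert_\diamond$; I anticipate this definitional matching, rather than any estimate, to be the main (mild) obstacle.
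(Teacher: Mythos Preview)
Your proposal is correct and matches the paper's approach: the paper does not give an explicit proof of this lemma but simply observes in the preceding paragraph that ``the diamond norm measures the maximum distinguishability of two quantum channels'' and that the additive error therefore equals the maximum distinguishability under parallel queries. Your write-up is just the spelled-out version of that operational characterization of $\lVert\cdot\rVert_\diamond$, including the standard fact that the diamond norm is attained on a pure state with a matched-dimension ancilla.
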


\subsubsection{Measurable Error and Adaptive Security}

It is widely known in quantum complexity theory and quantum cryptography \cite{metger2024simple, ma2024construct} that indistinguishability under one query to $U^{\otimes k}$, i.e., $k$ parallel queries to $U$, does not imply indistinguishability under general quantum experiments that can query $k$ copies of $U$. A quantum experiment making $k$ queries to $U$ can apply the unitary in sequence and with arbitrary quantum operations in between each query. This is much more powerful as it enables the quantum experiment to learn some properties about $U$, then adaptively probe the unitary $U$ based on the properties it has learned. The general formulation of any quantum experiment that makes $k$ queries to $U$ is written as follows.
\begin{definition}[Quantum experiments with $k$ queries to $U$] \label{def: qu expt}
    A quantum experiment with $k$ queries consists of:
    \begin{enumerate}
        \item An initial state preparation $\ket{\psi_0}$ on registers $A \otimes A'$, where $A$ has dimension $2^n$ and $A'$ is an auxiliary register of dimension $2^m$. Without loss of generality, we can set $\ket{\psi_0} = \ket{0^n} \otimes \ket{0^m}$.
        \item For $i = 1, \ldots, k$: 
        \begin{itemize}
            \item Apply a unitary $W_i$ to registers $A \otimes A'$.
            \item Apply the unknown unitary $U$ to register $A$.
        \end{itemize}
        \item Apply a final unitary $W_{k+1}$ and measure to obtain classical outcome.
    \end{enumerate}
    %
\end{definition}
To capture indistinguishability under the most powerful quantum experiments that make up to $k$ queries to a unitary $U$, we consider a strong notion of approximation error: the maximum distinguishability between a random unitary ensemble and the Haar ensemble over all possible 
$k$-query quantum experiments.
Whenever this error is large, there exists a quantum experiment that can be constructed to distinguish the two ensembles. 
We therefore refer to this as the \emph{measurable error}. 
We now state the definition of measurable error formally.
\begin{definition}[Unitary $k$-design up to measurable error]
    Let $\varepsilon > 0$. An ensemble of unitaries $\mathcal{E}$ is an approximate unitary $k$-design up to measurable error $\varepsilon$ if for any quantum experiment with $k$ queries to $U$, the output states when $U$ is sampled from $\mathcal{E}$ versus the Haar ensemble are $\varepsilon$-close in trace distance, i.e.~
    \begin{equation}
        \sup_{W_1 \cdots W_{k+1}} \norm{\rho_\mathcal{E} - \rho_H}_1 \leq \varepsilon,
    \end{equation}
    where we have used the notation 
    \begin{equation}
        \rho_\mathcal{E} = \expect_{U \sim \mathcal{E}} \left[ W_{k+1} [U \otimes \mathbbm{1}_m] W_k \cdots W_2 [U \otimes \mathbbm{1}_m] W_1 |0^{n+m} \rangle\! \langle 0^{n+m} | W_1^\dagger [U^\dagger \otimes \mathbbm{1}_m] W_2^\dagger \cdots W_k^\dagger [U^\dagger \otimes \mathbbm{1}_m] W_{k+1}^\dagger 
        \right]
    \end{equation}
    to denote the expected output state of a general quantum experiment that queries $U$ $k$ times.
\end{definition}

\subsubsection{Relative Error and Output Probabilities}

One approach to bounding the distinguishability of two random unitary ensembles under general quantum experiments is to adopt an even more stringent definition of statistical closeness. For instance, we can consider the notion of approximate unitary designs based on relative error as defined in \cite{brandao2016local}:
\begin{definition}[Unitary $k$-design up to relative error]
    Let $\varepsilon > 0$. Then an ensemble of unitaries $\mathcal{E}$ is an approximate unitary $k$-design up to relative error $\varepsilon$ if
    \begin{equation}
    (1 - \varepsilon) \Phi_H \preceq \Phi_{\mathcal{E}} \preceq (1+\varepsilon) \Phi_H,
    \end{equation}
    where $\mathcal{A} \preceq \mathcal{B}$ denotes that $\mathcal{B}-\mathcal{A}$ is completely positive.
\end{definition}
The relative error can be thought of as a quantum analog of the ``almost'' $k$-wise independence condition for classical Boolean functions, which is sufficient to guarantee indistinguishability against a classical adversary \cite{aaronson2010bqp}. In \cite{schuster2024random}, it is shown that a bounded relative error is a sufficient condition for the indistinguishability of two ensembles under quantum experiments with access to $k$ copies of a random unitary sampled from either ensemble.
\begin{lemma}[Relative error implies measurable error \cite{schuster2024random}] \label{lemma:rel error to measurable}
    Suppose $\mathcal{E}$ is an approximate $k$-design up to relative error $\varepsilon$. Then for any quantum experiment with access to $k$ copies of $U \sim \mathcal{E}$, the output states are $2\varepsilon$-close in trace norm. 
\end{lemma}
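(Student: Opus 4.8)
The plan is to rewrite the accept probability of an arbitrary $k$-query adaptive experiment as a fixed, ensemble-independent quantity built from the $k$-th moment superoperator, and then feed in the relative-error sandwich together with complete positivity. Fix any two-outcome measurement $\{\Lambda,\mathbbm{1}-\Lambda\}$ with $0\le\Lambda\le\mathbbm{1}$, and write $q_{\mathcal E}(\Lambda)=\tr(\Lambda\,\rho_{\mathcal E})$ for the probability that the experiment accepts when $U\sim\mathcal E$, and $q_H(\Lambda)$ when $U\sim H$. Since $\norm{\rho_{\mathcal E}-\rho_H}_1=2\max_{0\le\Lambda\le\mathbbm{1}}\big(q_{\mathcal E}(\Lambda)-q_H(\Lambda)\big)$, it is enough to prove $q_{\mathcal E}(\Lambda)\le(1+\varepsilon)\,q_H(\Lambda)$ for every such $\Lambda$; then $q_H(\Lambda)\le 1$ yields $q_{\mathcal E}(\Lambda)-q_H(\Lambda)\le\varepsilon$ and hence $\norm{\rho_{\mathcal E}-\rho_H}_1\le 2\varepsilon$.

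The heart of the argument is a parallelize-with-post-selection reformulation of the sequential experiment [cf.~Fig.~\ref{fig: proof}(c)]. I emulate the $i$-th query to $U$ by: preparing a maximally entangled pair $\ket{\Omega}=\sum_x\ket{x}\ket{x}$ on a fresh register pair $(R_i^{\mathrm{in}},R_i^{\mathrm{out}})$ and applying $U$ to $R_i^{\mathrm{out}}$; and then, at the step where the experiment would apply $U$ to its working register, instead performing a Bell measurement pairing that register with $R_i^{\mathrm{in}}$ and post-selecting on the trivial Bell outcome. The identity $(\bra{\Omega}_{AR^{\mathrm{in}}}\otimes\mathbbm{1}_{R^{\mathrm{out}}})\big(\ket{\chi}_A\otimes[(\mathbbm{1}\otimes U)\ket{\Omega}]_{R^{\mathrm{in}}R^{\mathrm{out}}}\big)=U\ket{\chi}_{R^{\mathrm{out}}}$ shows this routes the working state into $R_i^{\mathrm{out}}$ with $U$ applied and no Pauli correction, so iterating, the state just before the final measurement (conditioned on all $k$ post-selections succeeding) is exactly $\ket{\psi^U_W}$. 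In this rewriting $U$ enters only as $U^{\otimes k}$ on $R_1^{\mathrm{out}}\cdots R_k^{\mathrm{out}}$, applied at the outset to the fixed unit vector $\ket{\Phi_0}=\ket{0}_{AA'}\otimes\bigotimes_{i=1}^k 2^{-n/2}\ket{\Omega}_{R_i^{\mathrm{in}}R_i^{\mathrm{out}}}$ (the ordering is consistent, since each copy of $U$ acts on its register before any teleportation reads that register), while all the $W_j$'s, the Bell post-selection projectors, and $\Lambda$ assemble into a single fixed operator $M$ with $0\le M\le\mathbbm{1}$ (a product of unitaries and projectors conjugating $\Lambda$). Each Bell post-selection succeeds with probability $2^{-2n}$ independently of both the teleported state and $U$, so with $\rho_0:=\dyad{\Phi_0}$ and $c:=2^{2nk}$ (fixed, independent of $U$ and of $\mathcal E$):
\begin{equation}
    q_{\mathcal E}(\Lambda)\;=\;c\,\tr\big( M\,(\mathrm{id}\otimes\Phi_{\mathcal E})(\rho_0)\big),
\end{equation}
where $\Phi_{\mathcal E}$ acts as the $k$-th moment channel on the registers $R_1^{\mathrm{out}}\cdots R_k^{\mathrm{out}}$ and as the identity on everything else; the identical formula with $\Phi_{\mathcal E}\to\Phi_H$ gives $q_H(\Lambda)$.

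The relative-error hypothesis now finishes the job. By assumption $(1+\varepsilon)\Phi_H-\Phi_{\mathcal E}$ is completely positive, so its ancilla extension $\mathrm{id}\otimes[(1+\varepsilon)\Phi_H-\Phi_{\mathcal E}]$ is again completely positive and therefore maps the positive operator $\rho_0$ to a positive operator. Pairing with $M\ge 0$ and multiplying by $c>0$ gives $q_{\mathcal E}(\Lambda)\le(1+\varepsilon)q_H(\Lambda)$; the lower sandwich $(1-\varepsilon)\Phi_H\preceq\Phi_{\mathcal E}$ analogously gives $q_{\mathcal E}(\Lambda)\ge(1-\varepsilon)q_H(\Lambda)$. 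Together with $q_H(\Lambda)\le 1$ and the trace-distance formula from the first paragraph, this proves $\norm{\rho_{\mathcal E}-\rho_H}_1\le 2\varepsilon$.

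I expect the only delicate point to be the second step: verifying that the EPR--teleportation reformulation reproduces $q_{\mathcal E}(\Lambda)$ \emph{exactly}, that all dependence on $U$ really does collapse into a single parallel $U^{\otimes k}$ acting on a fixed resource state, that the residual operator $M$ is genuinely a sub-POVM element, and that the prefactor $c$ is $U$-independent. The transpose/conjugate conventions in the teleportation identity, and the treatment of the discarded non-trivial Bell outcomes (which only rescale probabilities by the $U$-independent factor $2^{-2nk}$), are routine but must be tracked consistently. Once the displayed identity is established, the remaining positivity argument is two lines.
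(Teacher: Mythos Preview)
The paper does not include a proof of this lemma; it is cited from~\cite{schuster2024random} without proof. Your argument is correct. The parallelize-with-post-selection reformulation you invoke is precisely the device the paper itself deploys later in Eq.~\eqref{eq: 1} and Fig.~\ref{fig: PRU} for its main measurable-error results, so your proof fits naturally into the paper's framework. Once the adaptive acceptance probability is rewritten as $q_{\mathcal E}(\Lambda)=c\,\tr\!\big(M\,(\mathrm{id}\otimes\Phi_{\mathcal E})(\rho_0)\big)$ with $M\ge 0$, $\rho_0\ge 0$, and $c>0$ all ensemble-independent, the CP sandwich $(1-\varepsilon)\Phi_H\preceq\Phi_{\mathcal E}\preceq(1+\varepsilon)\Phi_H$ immediately gives $(1-\varepsilon)q_H(\Lambda)\le q_{\mathcal E}(\Lambda)\le(1+\varepsilon)q_H(\Lambda)$ and hence the claim. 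One minor remark: you do not actually need $M\le\mathbbm{1}$ in the positivity step, only $M\ge 0$; the bound $q_H(\Lambda)\le 1$ follows directly from $\rho_H$ being a density matrix and $0\le\Lambda\le\mathbbm{1}$, independently of the teleportation rewriting.
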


Since measurable error is an upper bound on additive error, relative error serves as an upper bound for additive error as well (up to a constant factor 2).
However, the converse is not true. In general, converting from additive to relative error incurs a factor which is exponential in $n$ and $k$ \cite{brandao2016local,schuster2024random}:

\begin{lemma}[Comparing additive and relative error \cite{brandao2016local,schuster2024random}] \label{lemma:add to rel error}
    Let $\mathcal{E}$ be a unitary ensemble. If $\mathcal{E}$ is an approximate $k$-design up to relative error $\varepsilon$, then it is always an approximate design up to additive error $\varepsilon' = 2\varepsilon$. Conversely, if $\mathcal{E}$ is an approximate $k$-design up to additive error $\varepsilon$, it is guaranteed to be an approximate design up to relative error 
    \begin{equation} \label{eq: add to rel}
        \eps' = 2^{nk} {2^n+k-1 \choose k} \varepsilon \leq \left(\frac{4^{nk}}{k!} \right) \left( 1+\frac{k^2}{2^n} \right) \eps,
    \end{equation}
    where the first equality holds for any $k$, and the second inequality for any $k^2 \leq 2^n$.
\end{lemma}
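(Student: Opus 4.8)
The plan is to prove the two implications independently. The first (relative $\Rightarrow$ additive, with factor $2$) is a short positivity argument; the second (additive $\Rightarrow$ relative, with the exponential blow-up) reduces via the Choi isomorphism to locating the smallest nonzero eigenvalue of the Choi matrix of the Haar twirl. Throughout, write $\Delta := \Phi_{\mathcal{E}} - \Phi_H$, a Hermiticity-preserving and trace-annihilating map on $nk$ qubits. For the first implication, suppose $(1-\varepsilon)\Phi_H \preceq \Phi_{\mathcal{E}} \preceq (1+\varepsilon)\Phi_H$. By the definition of $\preceq$, both $\varepsilon\Phi_H + \Delta$ and $\varepsilon\Phi_H - \Delta$ are completely positive, hence remain positive maps after tensoring with $\mathrm{id}_m$ for any $m$. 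Fix a state $\rho$ on $nk+m$ qubits, set $\Delta_\rho := (\Delta\otimes\mathrm{id}_m)(\rho)$ (Hermitian with $\tr\Delta_\rho = 0$), and let $P$ be the projector onto its non-negative part. Then $\norm{\Delta_\rho}_1 = 2\tr(P\Delta_\rho)$, and applying positivity of $(\varepsilon\Phi_H - \Delta)\otimes\mathrm{id}_m$ to $\rho$ gives $\tr(P\Delta_\rho) \le \varepsilon\tr\!\big(P\,(\Phi_H\otimes\mathrm{id}_m)(\rho)\big) \le \varepsilon$; hence $\norm{\Phi_{\mathcal{E}}-\Phi_H}_\diamond \le 2\varepsilon$, i.e.\ additive error $2\varepsilon$.

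For the second implication I pass to Choi matrices: for a Hermiticity-preserving map $\Psi$ on $nk$ qubits write $J(\Psi) := (\Psi\otimes\mathrm{id})(\dyad{\Omega})$ with $\ket{\Omega} := \sum_x\ket{x}\ket{x}$ unnormalized, so $\Psi$ is completely positive iff $J(\Psi)\succeq 0$, and $\mathcal{E}$ is a $k$-design up to relative error $\varepsilon'$ iff $\varepsilon' J(\Phi_H)\pm J(\Delta)\succeq 0$. Two facts drive the argument. (i) \emph{Support containment:} since $J(\Phi_{\mathcal{E}}) = \E_{U\sim\mathcal{E}}\dyad{v_U}$ and $J(\Phi_H) = \E_{U\sim\mathrm{Haar}}\dyad{v_U}$ with $\ket{v_U}:=(U^{\otimes k}\otimes\mathbbm{1})\ket{\Omega}$, the image of $J(\Phi_H)$ is the span of all $\ket{v_U}$, and as $\mathcal{E}$ is supported on $U(2^n)$ we get $\mathrm{supp}\,J(\Delta)\subseteq\mathrm{supp}\,J(\Phi_H)=:\mathcal{K}$. (ii) \emph{Spectral gap:} on $\mathcal{K}$, $J(\Phi_H)$ is invertible and $\lambda^+_{\min}\!\big(J(\Phi_H)\big) \ge 1/\binom{2^n+k-1}{k}$ (in fact with equality). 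For the bound in (ii), any unit vector of $\mathcal{K}$ is $\ket{w}=(A\otimes\mathbbm{1})\ket{\Omega}$ with $\norm{A}_2 = 1$ and $A$ in the commutant of $S_k$; then $\bra{w}J(\Phi_H)\ket{w} = \E_U|\tr(A^\dagger U^{\otimes k})|^2$, which by Schur--Weyl duality decomposes as $\sum_{\lambda\vdash k} c_\lambda/\dim V_\lambda^{(2^n)}$ with $c_\lambda\ge 0$, $\sum_\lambda c_\lambda = 1$ ($V_\lambda^{(2^n)}$ the $\mathrm{GL}(2^n)$-irrep indexed by $\lambda$), so it is at least $1/\max_\lambda \dim V_\lambda^{(2^n)} = 1/\dim\mathrm{Sym}^k(\mathbb{C}^{2^n}) = 1/\binom{2^n+k-1}{k}$.

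Combining (i) and (ii): restricted to $\mathcal{K}$, the condition $\varepsilon' J(\Phi_H)\pm J(\Delta)\succeq 0$ holds once $\varepsilon'\ge\norm{J(\Phi_H)^{-1/2}J(\Delta)J(\Phi_H)^{-1/2}}_\infty$, which by submultiplicativity is at most $\binom{2^n+k-1}{k}\norm{J(\Delta)}_\infty\le\binom{2^n+k-1}{k}\norm{J(\Delta)}_1$. Since $\tfrac{1}{2^{nk}}\dyad{\Omega}$ is a density matrix and $\tfrac{1}{2^{nk}}J(\Delta) = (\Delta\otimes\mathrm{id})\!\big(\tfrac{1}{2^{nk}}\dyad{\Omega}\big)$, we have $\norm{J(\Delta)}_1\le 2^{nk}\norm{\Phi_{\mathcal{E}}-\Phi_H}_\diamond\le 2^{nk}\varepsilon$, so $\varepsilon' = 2^{nk}\binom{2^n+k-1}{k}\varepsilon$ works. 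The closed-form bound then follows from $\binom{2^n+k-1}{k} = \tfrac{2^{nk}}{k!}\prod_{j=1}^{k-1}(1+j/2^n)\le\tfrac{2^{nk}}{k!}(1+k^2/2^n)$ when $k^2\le 2^n$, a routine estimate (e.g.\ via $1+x\le e^x$).

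The main obstacle is step (ii): establishing that $1/\binom{2^n+k-1}{k}$ is the correct smallest nonzero eigenvalue of $J(\Phi_H)$ for all $k$. This requires a careful Schur--Weyl / Weingarten computation to block-diagonalize the Choi matrix over isotypic sectors and to verify that the symmetric power attains the maximal $\mathrm{GL}(2^n)$-irrep dimension among all diagrams with $k$ boxes. The remaining pieces---the positivity manipulations in the first implication, the support-containment step, the passage between Choi trace norm and diamond norm, and the final combinatorial estimate---are all routine.
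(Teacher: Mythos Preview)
The paper does not prove this lemma; it cites Lemma~30 of \cite{brandao2016local} and only tightens one combinatorial inequality in the statement. Your Choi-matrix route---support containment plus a lower bound on the smallest nonzero eigenvalue of $J(\Phi_H)$---is exactly the approach of that lemma, and your first implication, support-containment step, diamond-to-trace-norm conversion, and binomial estimate are all correct.

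Your step~(ii), however, contains a real error rather than just an unfinished computation. You assert $\bra{w}J(\Phi_H)\ket{w} = \sum_\lambda c_\lambda/\dim V_\lambda$ with $\sum_\lambda c_\lambda = 1$, and then that $\max_\lambda \dim V_\lambda^{(2^n)} = \dim\mathrm{Sym}^k(\mathbb{C}^{2^n})$. The second claim is false: already for $k=3$ and $N=2^n\ge 5$ one has $\dim V_{(2,1)}^{(N)} = N(N^2-1)/3 > \binom{N+2}{3}$. The first claim is off by a factor of $\dim W_\lambda$: writing $A = \bigoplus_\lambda A_\lambda\otimes\mathbbm{1}_{W_\lambda}$ with $\|A\|_2^2 = \sum_\lambda \dim(W_\lambda)\|A_\lambda\|_2^2 = 1$, Schur orthogonality gives
\[
\E_U\big|\tr(A^\dagger U^{\otimes k})\big|^2 \;=\; \sum_\lambda c_\lambda\,\frac{\dim W_\lambda}{\dim V_\lambda},\qquad c_\lambda := \dim(W_\lambda)\|A_\lambda\|_2^2.
\]
So the quantity you must minimize is $\dim W_\lambda/\dim V_\lambda$, not $1/\dim V_\lambda$. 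Fortunately this minimum \emph{is} $1/\binom{2^n+k-1}{k}$, attained at $\lambda=(k)$: combining the hook-length and hook-content formulas gives $\dim V_\lambda/\dim W_\lambda = \tfrac{1}{k!}\prod_{(i,j)\in\lambda}(2^n+j-i)$, and the content product is maximized over $\lambda\vdash k$ precisely when all boxes lie in a single row (moving any removable box from row $i\ge 2$ to the end of row~$1$ replaces the factor $2^n+\lambda_i-i$ by the strictly larger $2^n+\lambda_1$). With this correction your argument goes through and recovers the cited result.
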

\noindent The bound on the relative error in terms of the additive error [Eq.~(\ref{eq: add to rel})] follows from Lemma~30 in~\cite{brandao2016local}, by replacing the lossy inequality ${N+t-1 \choose t} \leq N^t$ in the proof of the lemma with either the exact expression, or the tighter inequality ${N+t-1 \choose t} \leq N^t(1+t^2/N)/t!$ for $t^2 \leq N$~\cite{schuster2024random}.
Here, $t \rightarrow k$ and $N \rightarrow 2^n$ in our notation.

While the relative error provides the strongest possible theoretical guarantee among the three error metrics, it does not correspond directly to physically observable quantities. Both the additive and measurable error have clear operational interpretations: the additive error quantifies the maximum distinguishability under parallel queries, while the measurable error captures the maximum distinguishability achievable by any quantum experiment. In contrast, the relative error characterizes the multiplicative deviation of the ensemble's moments from the Haar moments, which may not manifest as measurable differences in experiments. When the relative error significantly exceeds the measurable error for a given ensemble, the gap represents a theoretical structure that cannot be exploited by physical experiments. Rather, the relative error serves primarily as a mathematical tool for establishing theoretical properties rather than characterizing experimentally observable behavior.

An important exception is quantum experiments assisted by classical simulations.
Classical simulations can enable a quantum experiment to be informed by quantities, such as exponentially small probability amplitudes, that can never be efficiently measured in a quantum experiment alone.
This has enabled novel benchmarking and fidelity estimation strategies, such as classical shadow tomography~\cite{huang2020predicting}, as well as new theoretical understandings of thermalization and chaos in many-body quantum systems via projected ensembles~\cite{cotler2021emergent,choi2023preparing}. 
However, these classical-simulation-assisted quantum experiments require one to be able to exactly simulate the quantum system of interest on a classical computer, including features of the system that would not be detectable in any efficient quantum experiment. Hence, these classical simulations require exponential time in general, but can be polynomial time in some cases, such as when the quantum circuit is Clifford as in classical shadows~\cite{huang2020predicting}.

The relative error is uniquely suited to these quantum experiments assisted by classical simulations.
Most often, the usefulness of relative error stems from its ability to bound exponentially small probability amplitudes.
For example, one can bound moments of probabilities, such as
\begin{equation}
(1-\varepsilon) \E_{U\sim H}[ \bra{x} U \rho U^\dagger \ket{x}^{ k}] \leq \E_{U\sim\mathcal{E}}[ \bra{x} U \rho U^\dagger \ket{x}^{ k}] \leq (1+\varepsilon) \E_{U\sim H}[ \bra{x} U \rho U^\dagger \ket{x}^{ k}]    
\end{equation}
for any $\rho$ and $k$, in any approximate unitary $k$-design $\mathcal E$ with relative error $\varepsilon$.
Such bounds are not possible for unitary designs with additive or measurable error, since the probability amplitudes are exponentially small and not physically observable in any quantum experiment that queries $U$.

In Section~\ref{sec:relerrordesigns}, we demonstrate how repeated composition of designs with small additive error can yield designs with exponentially improved relative error guarantees. Specifically, we show that composing a specific unitary ensemble $p = 8k+1$ times achieves relative error. This provides a systematic approach for constructing relative error designs with only a moderate overhead in circuit depth.

\subsubsection{Example: Exact unitary 2-designs}

The multiqubit Clifford group provides a foundational example of an exact low-order design:
\begin{lemma}[Multiqubit Clifford group is an exact $3$-design \cite{zhu2017multiqubit}]
    The multiqubit Clifford group $\mathcal{C}_n$ on $n$ qubits forms an exact unitary $3$-design. That is, for any operator $X$ acting on $3n$ qubits,
    \begin{equation}
    \Phi^{(3)}_{\mathcal{C}_n}(X) = \Phi^{(3)}_H(X),
    \end{equation}x
    where $\Phi^{(3)}_{\mathcal{C}_n}(X) = \frac{1}{|\mathcal{C}_n|} \sum_{C \in \mathcal{C}_n} C^{\otimes 3} X (C^\dag)^{\otimes 3}$.
\end{lemma}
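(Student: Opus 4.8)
The plan is to phrase everything in terms of commutant algebras. For any compact group $G \subseteq U(d)$ with its invariant measure, the $k$-th moment map $X \mapsto \E_{g \in G}\, g^{\otimes k} X (g^\dagger)^{\otimes k}$ is exactly the orthogonal projection, in the Hilbert--Schmidt inner product, onto the commutant $\mathrm{Comm}(G^{\otimes k}) := \{X : [X, g^{\otimes k}] = 0\ \forall g \in G\}$, and is therefore determined by that subspace alone. Hence $\Phi^{(3)}_{\mathcal{C}_n} = \Phi^{(3)}_H$ holds if and only if $\mathrm{Comm}(\mathcal{C}_n^{\otimes 3}) = \mathrm{Comm}(U(2^n)^{\otimes 3})$. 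Because $\mathcal{C}_n \subseteq U(2^n)$, the inclusion $\mathrm{Comm}(U(2^n)^{\otimes 3}) \subseteq \mathrm{Comm}(\mathcal{C}_n^{\otimes 3})$ is automatic, so it suffices to show the Clifford group has no additional third-moment invariants --- equivalently, that the two commutants have the same dimension.

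By Schur--Weyl duality, $\mathrm{Comm}(U(d)^{\otimes k})$ is spanned by the permutation operators $\{P_\sigma : \sigma \in S_k\}$ on the $k$ tensor copies, with $\dim \mathrm{Comm}(U(d)^{\otimes k}) = \sum_{\lambda \vdash k,\ \ell(\lambda) \le d} (f^\lambda)^2$; for $k = 3$ this is $6$ once $d = 2^n \ge 3$, and $5$ when $n = 1$ (where the $P_\sigma$ become linearly dependent). I would bound the Clifford side through the frame-potential identity
\begin{equation}
\dim \mathrm{Comm}(\mathcal{C}_n^{\otimes k}) \;=\; \frac{1}{|\mathcal{C}_n|}\sum_{C \in \mathcal{C}_n} \lvert \mathrm{Tr}\, C \rvert^{2k},
\end{equation}
which follows from $\dim \mathrm{Comm}(R) = \langle \chi_R, \chi_R \rangle$ applied to $R(C) = C^{\otimes k}$ with character $\chi_R(C) = (\mathrm{Tr}\, C)^k$; the same identity holds on the Haar side with the average replaced by $\int_{U(d)} \lvert \mathrm{Tr}\, U \rvert^{2k}\, dU$. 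The whole statement thus collapses to the single scalar equality of sixth trace moments, $\frac{1}{|\mathcal{C}_n|}\sum_{C} \lvert \mathrm{Tr}\, C \rvert^{6} = \int_{U(2^n)} \lvert \mathrm{Tr}\, U \rvert^{6}\, dU$.

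Establishing this scalar identity is the heart of the matter and the only genuinely nontrivial step. I would use the explicit trace distribution of the Clifford group: modulo global phase $\mathcal{C}_n$ surjects onto $\mathrm{Sp}(2n, \mathbb{F}_2)$, and $\lvert \mathrm{Tr}\, C \rvert^2$ is a Gauss-sum functional of the symplectic part, always equal to a power of $2$ (or $0$), with multiplicities expressible in closed form; summing the sixth power then reproduces the Haar value. Tellingly, this matching first fails at the eighth moment, which is precisely why $\mathcal{C}_n$ is a $3$-design but not a $4$-design --- the obstruction there being a $k = 4$ commutant element outside $\mathrm{span}\{P_\sigma\}$, tied to the Pauli sum $\sum_P P^{\otimes 4}$ (equivalently, to self-dual stabilizer codes). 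An alternative route that bypasses the Gauss-sum bookkeeping is to invoke the structure theorem for the Clifford commutant, under which $\mathrm{Comm}(\mathcal{C}_n^{\otimes k})$ is spanned by operators indexed by stochastic Lagrangian subspaces of $\mathbb{F}_2^{2k}$; for $k \le 3$ these biject with $S_k$, giving the needed dimension count immediately.

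In short, the main obstacle is purely the Clifford trace-moment (or commutant-structure) computation; the reduction to commutants, the Schur--Weyl description of the Haar moments, and the inclusion $\mathrm{Comm}(U(2^n)^{\otimes 3}) \subseteq \mathrm{Comm}(\mathcal{C}_n^{\otimes 3})$ are all formal. I would additionally dispatch the low-qubit case by hand: for $n = 1$ the $24$-element single-qubit Clifford group can be checked directly, with the target dimension being $5$ to match $\int_{U(2)} \lvert \mathrm{Tr}\, U \rvert^6\, dU = 5$, and the argument is otherwise unchanged.
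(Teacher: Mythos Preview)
The paper does not give its own proof of this lemma; it simply cites Zhu~\cite{zhu2017multiqubit} and states the result. Your proposal is correct and is essentially the standard argument---reducing to equality of commutants, using Schur--Weyl for the Haar side, and computing the Clifford frame potential (equivalently, the sixth trace moment) for the other side---which is indeed the route taken in Zhu's original paper and in Webb's independent proof. The alternative you mention via stochastic Lagrangian subspaces is the Gross--Nezami--Walter commutant structure theorem, which also works and gives the cleaner dimension count for $k\le 3$. Since the paper offers no proof to compare against, there is nothing further to contrast.
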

\noindent Any Clifford unitary can be compiled in circuit depth $\mathcal{O}(\log n)$ using $\mathcal{O}(n^2)$ ancilla qubits~\cite{moore2001parallel,jiang2020optimal}.

For the specific case of $k = 2$, even more resource-efficient constructions of exact designs are known. The ancilla count improves by a nearly quadratic factor compared to a random Clifford unitary.
\begin{fact}
    [Compilation of exact unitary 2-designs~\cite{cleve2015near}]
    Exact unitary 2-designs on $n$ qubits can be constructed in circuit depth $\mathcal{O}(\log n)$ using $\mathcal{O}(n \log n \log \log n)$ gates and $\tilde{\mathcal{O}}(n)$ ancilla qubits.
\end{fact}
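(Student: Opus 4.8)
The plan is to exhibit an explicit ensemble that is an \emph{exact} unitary $2$-design and then bound its compilation cost, since the content of the statement is really the resource count (the Clifford construction recalled above already gives depth $\mathcal{O}(\log n)$, but with $\mathcal{O}(n^2)$ ancillas). The starting point is the characterization that an ensemble $\mathcal{E}$ is an exact $2$-design iff $\Phi^{(2)}_{\mathcal{E}} = \Phi^{(2)}_H$; by Schur--Weyl duality the commutant of $\{U^{\otimes 2}: U \in U(2^n)\}$ is $\mathrm{span}\{\mathbbm{1},\mathrm{SWAP}\}$, so it suffices to check that the twirl over $\mathcal{E}$ kills everything orthogonal to this span and acts correctly on it. Rather than the full Clifford group (which carries $\Theta(n^2)$ bits of randomness and hence cannot have near-linear gate count), I would use the Galois-field construction of Cleve--Leung--Liu--Wang: identify the $2^n$ basis states with $\mathrm{GF}(2^n)$ and sample a unitary of the schematic form $U = \Sigma_{a,b}\, P\, H^{\otimes n}\, P'\, \Sigma_{a',b'}$, where $\Sigma_{a,b}\ket{x} = \ket{ax+b}$ is a uniformly random element of the affine group $\mathrm{AGL}(1,2^n)$ ($a \in \mathrm{GF}(2^n)^\times$, $b \in \mathrm{GF}(2^n)$), and $P,P'$ are diagonal phase gates drawn from a structured family (linear $Z$-phases together with a field-structured quadratic phase) rich enough to realize the diagonal part of the twirl but admitting $\tilde{\mathcal{O}}(n)$-size circuits.

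The design verification then splits into two pieces. The group $\mathrm{AGL}(1,2^n)$ acts \emph{sharply $2$-transitively} on the basis states, which is exactly what is needed so that averaging $\Sigma_{a,b}^{\otimes 2} X (\Sigma_{a,b}^{\dagger})^{\otimes 2}$ over the group projects the ``off-diagonal'' computational-basis content of $X$ (terms $\ket{x}\!\bra{y}\otimes\ket{z}\!\bra{w}$ with $x\neq y$) onto the Haar-correct combination of $\mathbbm{1}$ and $\mathrm{SWAP}$, while the random translation $b$ kills all non-$\mathrm{SWAP}$-symmetric leftovers. The remaining ``diagonal'' content ($x=y$, $z=w$) is handled by the phase layers around the Hadamards: a random diagonal unitary from the chosen family reproduces the second moment of a Haar-random diagonal unitary on those terms, and conjugating by $H^{\otimes n}$ swaps the diagonal and off-diagonal sectors so both are averaged. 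Assembling these computations gives $\Phi^{(2)}_{\mathcal{E}}(X) = \Phi^{(2)}_H(X)$ for all $X$.

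For the resource count, the Hadamard layer and the $\mathbb{F}_2$-translation $x\mapsto x+b$ are depth $1$ with $O(n)$ gates; the structured phase layers use $O(n\,\mathrm{polylog}\,n)$ one- and two-qubit gates in depth $O(\log n)$; and the only nontrivial ingredient is the field multiplication $x\mapsto ax$ in $\mathrm{GF}(2^n)$, an $\mathbb{F}_2$-linear reversible map. Reducing it to multiplication of degree-$n$ polynomials over $\mathbb{F}_2$ and invoking FFT-type fast arithmetic (Sch\"{o}nhage--Strassen / Cantor--Kaltofen) gives $O(n\log n\log\log n)$ operations; these lay out as a reversible circuit of the same gate count, parallelizable to depth $O(\log n)$ using $\tilde{\mathcal{O}}(n)$ ancilla qubits for intermediate products, all uncomputed so the ancillas begin and end in $\ket{0}$. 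Summing yields depth $\mathcal{O}(\log n)$, $\mathcal{O}(n\log n\log\log n)$ gates, and $\tilde{\mathcal{O}}(n)$ ancillas.

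I expect the main obstacle to be the design proof for the \emph{restricted} ensemble: since it deliberately uses only $\tilde{\mathcal{O}}(n)$ bits of randomness rather than the $\Theta(n^2)$ of the Clifford group, one cannot just invoke ``random Clifford $=$ design'' and must verify by explicit second-moment computation that sharp $2$-transitivity of $\mathrm{AGL}(1,2^n)$ together with the particular structured quadratic phases leaves \emph{no} residual operator in the twirl. A secondary technical point is confirming that $\mathrm{GF}(2^n)$ multiplication can be made simultaneously near-linear in size and logarithmic in depth with near-linear ancilla overhead and clean uncomputation; this is a careful but routine exercise in reversible circuit design atop classical fast arithmetic.
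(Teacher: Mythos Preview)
The paper does not prove this statement at all: it is recorded as a \emph{Fact} with a bare citation to \cite{cleve2015near} and is used as a black box later on (for the small-patch 2-designs $C_a$ in the blocked LRFC circuit). So there is no ``paper's own proof'' to compare against; your proposal is effectively a reconstruction of the content of the cited reference.

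That said, your sketch tracks the Cleve--Leung--Liu--Wang construction reasonably well: they do build the ensemble out of affine maps over $\mathrm{GF}(2^n)$, Hadamard layers, and structured diagonal phases, and the $\mathcal{O}(n\log n\log\log n)$ gate count is indeed governed by the field-multiplication step via Sch\"onhage--Strassen-type arithmetic (exactly the same arithmetic the present paper invokes in Lemma~\ref{lemma:field_mult} for its own $k$-wise hash implementations). Your caveat about the design verification is the right one: the actual proof in \cite{cleve2015near} does not simply quote sharp $2$-transitivity plus ``some quadratic phases''; it fixes a specific sequence of layers and carries out an explicit second-moment (frame-potential) computation tailored to that sequence, and getting the phase family right is where the work is. If you intend to reproduce the result rather than cite it, you should follow their explicit layer-by-layer calculation rather than the schematic you wrote; the schematic as stated does not yet pin down an ensemble for which $\Phi^{(2)}_{\mathcal{E}} = \Phi^{(2)}_H$ can be checked.
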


It is not known how to construct exact designs for arbitrary $k$. However, there are several examples of ensembles which are known to approximate the moments of the Haar measure \cite{brandao2016local, haferkamp2022random, chen2024incompressibility, schuster2024random}.

\subsubsection{Example: PFC ensemble}

We review the Permutation-Function-Clifford (PFC) construction introduced in \cite{metger2024simple}, which adapts the state design construction by sandwiching a random phase between random permutation and Clifford operators:
\begin{definition}[PFC ensemble \cite{metger2024simple}] \label{def:PFC}
    Suppose $g$ and $f$ are drawn uniformly randomly from  permutations and binary functions on $\{0,1\}^n$, respectively, and $C$ is drawn uniformly from the Clifford group on $n$ qubits. Then the Permutation-Function-Clifford (PFC) ensemble is given by the family of $n$-qubit unitaries:
    \begin{equation}
        U = PFC = \left[ \sum_{x \in \{0,1\}^n} | g(x) \rangle \langle x| \right] \left[ \sum_{x' \in \{0,1\}^n} (-1)^{f(x')} |x'\rangle \langle x'| \right] C.
    \end{equation}
\end{definition}

In \cite{metger2024simple}, it was proven that the PFC ensemble forms an approximate unitary $k$-design up to additive error $\varepsilon = \mathcal{O}(\sqrt{k^2/2^n})$. However, it was not known whether the PFC ensemble forms an approximate design up to small measurable error. This open question was resolved in \cite{ma2024construct}:

\begin{theorem}[PFC are designs \cite{ma2024construct}] \label{thm:PFC-designs}
    The Permutation-Function-Clifford (PFC) ensemble is an approximate unitary $k$-design up to measurable error $\varepsilon = 4k^2/2^n$.
\end{theorem}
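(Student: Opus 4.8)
The plan is to prove that the PFC ensemble has small measurable error by the three-step strategy advertised in the main text, which reduces an arbitrary adaptive $k$-query experiment to a statement about the local (here, global, since PFC is not blocked) distinct subspace. First I would reformulate the sequential experiment $W_{k+1} U W_k \cdots U W_1 \ket{0}$ as a parallel experiment with post-selection: introduce EPR pairs linking the $k$ ``input'' and $k$ ``output'' legs of the $k$ queries, so that applying $U^{\otimes k}$ to half of the EPR pairs and then post-selecting the other halves onto the states prescribed by the $W_i$'s exactly reproduces $\rho_{\mathcal E}$ (and likewise $\rho_H$ with $U$ Haar-random). This is the standard ``query-to-state'' trick; the point is that after this step the only place the ensemble enters is through $\E_U[U^{\otimes k} \otimes \bar U^{\otimes k}]$ acting on a fixed vector, and trace-distance contracts under the post-selection map, so it suffices to bound the ensemble-dependence at the level of this parallel object.

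The second step is to insert the distinct-subspace projector $\PD$ (Fig.~\ref{fig: proof}(a)) between the permutation $P$ and the phase-then-Clifford part $FC$ in each of the $k$ parallel copies of $U = PFC$. Concretely, write $U^{\otimes k} = P^{\otimes k} F^{\otimes k} C^{\otimes k}$ acting on the $k$ copies; I would argue that replacing $P^{\otimes k}$ by $P^{\otimes k}\PD$ changes the relevant state by trace-norm at most $O(k^2/2^n)$. This is because $C^{\otimes k}$ applied to (half of the) EPR pairs, followed by the random phase $F^{\otimes k}$, produces a state whose weight on computational-basis $k$-tuples with a repeated bitstring is exponentially small — more precisely, the Clifford being a $3$-design (hence a $2$-design) makes $C\ket{\cdot}$ behave, at the level of second moments, like a Haar-random state, whose $k$ computational-basis samples collide with probability $O(k^2/2^n)$ by a birthday bound. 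The phase $F$ commutes with $\PD$ (both are diagonal), so the only thing that matters is the collision probability of the basis strings entering $F$, which is controlled. The permutation $P$ then maps the distinct subspace to itself.

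The third step is to show that, conditioned on the distinct subspace, $PFC$ acts exactly like a Haar-random unitary at the level of the parallel object $\E_U[U^{\otimes k}(\cdot)U^{\dagger,\otimes k}]$ restricted by $\PD$. On distinct $k$-tuples, averaging the random phase $F = \sum_x (-1)^{f(x)}\dyad{x}$ over uniformly random $f$ kills all cross terms and projects onto ``diagonal'' permutation-type operators; averaging the random permutation $P$ over $S_{2^n}$ then symmetrizes these into exactly the span of the $k!$ permutation operators $\tau\in S_k$ — which, by Schur–Weyl duality and the Lemma on the explicit form of $\Phi_H$, is precisely the image of the Haar twirl (and the Clifford $C$, being a $3\geq k$... more carefully, being absorbed into the Haar average or handled by noting $C^{\otimes k}$ on the EPR halves just produces a convenient fixed state). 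I would then reassemble: the post-selected parallel experiment with PFC is $O(k^2/2^n)$-close to the same experiment with $\PD$ inserted, which equals the same experiment with Haar-random $U$ and $\PD$ inserted, which is in turn $O(k^2/2^n)$-close to the Haar experiment without $\PD$; undoing the parallel-to-sequential reduction and collecting constants gives measurable error $\varepsilon = 4k^2/2^n$. The main obstacle I anticipate is the second step: carefully controlling the trace-norm error introduced by inserting $\PD$ in the \emph{adaptive} setting, where the post-selected ``input'' states of later queries depend on earlier $W_i$'s and on $U$ itself, so one cannot treat the $k$ copies as independent Haar-random states; the argument must instead proceed query-by-query (a hybrid argument over $i=1,\dots,k$), using at each step only that the freshly-applied $C$ randomizes the newest basis string relative to the previous ones, which is where the $2$-design property of the Clifford does the real work.
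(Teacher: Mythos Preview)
Your three-step outline matches the paper's strategy, and you correctly identify the two load-bearing ingredients: the EPR/post-selection reformulation of the adaptive experiment, and the fact that on the distinct subspace the $PF$-twirl collapses to the permutation sum $\sum_{\pi}\pi\otimes\pi$. However, the paper does \emph{not} run the hybrid argument you anticipate. Instead, working entirely in the parallel picture, it writes the expected output as $\rho=\E_C\bigl[\tr_{XY}(B'\,\bs C\,\dyad{\Psi}\,\bs C^\dagger)\bigr]$ with $B'=4^{nk}\E_{P,F}\bigl[(\bs P\bs F)^\dagger\dyad{\Psi_{\rm Bell}}(\bs P\bs F)\bigr]$, inserts $\PD\otimes\PD$ (on \emph{both} the $X$ and $Y$ registers, not just between $P$ and $FC$), and controls the error in one shot using two operator facts: (i) $\bigl\lVert\mathbbm{1}-\E_C[\bs C^\dagger\PD\bs C]\bigr\rVert_\infty\le k^2/2^n$ for any 2-design $C$, and (ii) a trace identity $\tr\bigl(B(\mathbbm{1}\otimes\PD)\dyad{\Psi}\bigr)=\mathfrak{D}/2^{nk}$, proved by a short induction over the $k$ registers (this is where the adaptive structure is actually absorbed, via the observation that the reduced state $\rho_j$ is maximally mixed on $Y_j$ by the EPR construction). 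The upshot is that for \emph{any} 2-design $C$ one gets $\lVert\rho-\rho_a\rVert_1\le 2k^2/2^n$ for a fixed $\rho_a$ independent of the ensemble, and then the triangle inequality between the Clifford and Haar instantiations gives the factor~$4$.

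Two specific corrections to your sketch: first, the Clifford is used \emph{only} as a 2-design (your ``$3\ge k$'' aside is a red herring; the twirl $\E_C[\bs C^\dagger\PD\bs C]$ decomposes as a sum of $\binom{k}{2}$ two-copy terms, each bounded by the 2-design property). Second, your worry that ``later inputs depend on $U$ itself'' is precisely what the query-by-query hybrid would make painful; the paper sidesteps it because in the parallel picture the only $U$-dependence sits in $B'$, and the operator-norm bound on the twirled projector is uniform over the (arbitrary, $W$-dependent) state $\ket\Psi$. Your route would likely work but is both longer and less modular than the paper's.
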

\noindent We provide a short alternate proof of Theorem~\ref{thm:PFC-designs} in Section~\ref{sec: pf PFC}.

\subsubsection{Example: LRFC ensemble}

We also review the Luby-Rackoff-Function-Clifford (LRFC) ensemble introduced in~\cite{SRU2025}, which is a permutation-free alternative to the PFC ensemble.
This alternative can be useful, since random functions are much easier to construct and analyze compared to random permutations.
\begin{definition}[LRFC ensemble \cite{SRU2025}] \label{def:LRFC}
    Suppose $h_L$ and $h_R$ are drawn uniformly randomly from functions on $\{0,1\}^{n/2} \rightarrow \{0,1\}^{n/2}$,  $f$ is drawn uniformly randomly from  binary functions on $\{0,1\}^n$, and $C$ is drawn uniformly from the Clifford group on $n$ qubits. 
    Then the Luby-Rackoff-Function-Clifford (LRFC) ensemble is given by the family of $n$-qubit unitaries:
    \begin{equation}
        U = S_L S_R FC \equiv 
        \left[ \sum_{x \in \{0,1\}^{n}} | x_L \oplus h_L(x_R) \lVert x_R \rangle \langle x_L \lVert x_R | \right] 
        \left[ \sum_{x' \in \{0,1\}^{n}} | x_L' \lVert x_R' \oplus h_R(x_L') \rangle \langle x_L' \lVert x_R' | \right]  F C,
    \end{equation}
    where $F$ acts as in Definition~\ref{def:PFC}, and  $\lVert$ denotes the concatenation of two $\frac{n}{2}$-bit strings, $x \equiv x_L \lVert x_R$ and $x' \equiv x'_L \lVert x'_R$.
\end{definition}
\noindent The $S_L$ and $S_R$ operations are random functions that shuffle the left $n/2$ qubits conditional on the value of the right $n/2$ qubits, and vice versa.
The composition $S_L S_R$ is a permutation of the $2^n$ bitstrings, which we can view as replacing the random permutation $P$ in the PFC ensemble.
The permutation $S_L S_R$ is not uniformly random, however.

In~\cite{SRU2025}, it was shown that the LRFC ensemble forms an approximate unitary $k$-design with small measurable error:\begin{theorem}[LRFC are designs \cite{SRU2025}] \label{thm:LRFC-designs}
    The Luby-Rackoff-Function-Clifford (LRFC) ensemble is an approximate unitary $k$-design up to measurable error $\varepsilon =6k^2/\sqrt{2^n}$.
\end{theorem}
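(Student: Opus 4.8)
The plan is to follow the three-step framework that the main text advertises for bounding the measurable error of structured ensembles, specialized to $U = S_L S_R F C$. First, I would reformulate an arbitrary $k$-query experiment $\ket*{\psi_W^U} = W_{k+1} U W_k \cdots U W_1 \ket{0}$ into an equivalent \emph{parallel} experiment acting on $k$ copies of the input space together with EPR-like post-selection registers: each sequential query to $U$ is traded for one ``slot'' of $U^{\otimes k}$ plus a projection onto maximally entangled states that splice the output of query $i$ into the input of query $i+1$. This is the standard sequential-to-parallel reduction (as in the path-recording analysis of Ma--Huang and the LRFC paper \cite{ma2024construct,SRU2025}), and it lets me replace the full experiment by a single application of $U^{\otimes k}$ sandwiched between fixed, $U$-independent operators and a post-selection vector $\ket{\text{post}}$. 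The quantity to control then becomes $\big\lVert \E_{U\sim\mathcal E}[U^{\otimes k}(\cdot)U^{\dagger\otimes k}] - \E_{U\sim H}[\cdots]\big\rVert$ evaluated against the specific states that arise, which is morally a statement about $\Phi_{\mathcal E}$ versus $\Phi_H$ restricted to the relevant subspace.

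Second, I would insert the local distinct-subspace projector $\PD$ (the projector onto computational-basis configurations $x^{(1)},\dots,x^{(k)}$ that are pairwise distinct) at the appropriate seam of each parallel copy of $U = S_L S_R F C$ — concretely, between the Clifford-plus-function block and the shuffle block. The point, exactly as in the state-design argument sketched in Fig.~\ref{fig: proof}, is that replacing $\mathbbm 1$ by $\PD$ costs only a trace-norm error that is controlled by the probability that $k$ basis strings collide. Since the relevant intermediate state is (after the random Clifford $C$, which is an exact $3$-design, hence in particular a $2$-design and enough to randomize the basis) close to uniform over $\{0,1\}^n$ on each copy, a union bound over the $\binom{k}{2}$ pairs gives collision probability $\le k^2/2^n$ per insertion; accumulating over the structure yields an error of order $k^2/\sqrt{2^n}$ after taking square roots in the trace-norm estimate. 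I expect the bookkeeping of \emph{which} seam to project at, and showing that $S_L S_R$ together with $F$ acts as the Haar twirl once restricted to $\PD$, to be where the factor $6k^2/\sqrt{2^n}$ is pinned down.

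Third, and this is the algebraic heart, I would show that \emph{on} the local distinct subspace the LRFC moment operator coincides with $\Phi_H$. On distinct inputs, $F = \sum_x (-1)^{f(x)}\dyad{x}$ with a uniformly random $f$ contributes, after averaging, exactly a sum over permutations $\tau\in S_k$ matching the bra-strings to the ket-strings (because $\E_f \prod_j (-1)^{f(x^{(j)})+f(y^{(j)})}$ is $1$ iff the multiset $\{x^{(j)}\}$ equals $\{y^{(j)}\}$ and $0$ otherwise, on distinct strings this forces a permutation); the Luby--Rackoff shuffles $S_L S_R$ then act as a uniformly random permutation \emph{when restricted to distinct tuples} — this is precisely the two-round Feistel/Luby--Rackoff guarantee that two conditional shuffles already look like a random permutation on distinct queries — and the random Clifford $C$ supplies the remaining Weingarten structure so that the combined average reproduces $\Phi_H = \sum_{\sigma,\tau} c_{\sigma,\tau}\tr(\sigma\,\cdot\,)\tau$ from the Lemma on the explicit form of the Haar moments. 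Combining the three steps via the triangle inequality gives measurable error $\varepsilon = 6k^2/\sqrt{2^n}$.

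The main obstacle I anticipate is the second step: controlling the error incurred by inserting $\PD$ inside a \emph{sequential}, adaptively chosen experiment rather than a parallel one. In the parallel picture the intermediate state is genuinely (close to) uniform, but the EPR post-selection from step one correlates the copies in a way that must be shown not to blow up the collision probability — one has to argue that post-selection onto maximally entangled states can only \emph{increase} trace distance by a bounded factor (a standard but delicate gentle-measurement / post-selection estimate), and that the Clifford layer $C$ genuinely decorrelates the basis before the seam where $\PD$ is applied. Making this quantitative, uniformly over all choices of $W_1,\dots,W_{k+1}$, is the crux; everything else reduces to the random-function and Luby--Rackoff identities on the distinct subspace, which are elementary once the subspace reduction is in place.
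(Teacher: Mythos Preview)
Your outline matches the paper's three-step framework, but there is a genuine gap at the heart of Steps 2 and 3: you use the wrong distinct-subspace projector, and your explanation of the $\sqrt{2^n}$ factor (``taking square roots in the trace-norm estimate'' after a $k^2/2^n$ collision bound) is incorrect. The paper projects not onto the global distinct subspace $\PD$ but onto the \emph{half-system} distinct subspaces $\PD^L$ and $\PD^R$, which enforce pairwise distinctness of the left and right $n/2$-bit substrings separately, placed on the $X$ and $Y$ registers respectively (after pulling $S_L$ across the EPR pair via $\bs{L}_X\ket{\Psi_{\text{Bell}}}=\bs{L}_Y\ket{\Psi_{\text{Bell}}}$). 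The collision probability on $n/2$ qubits is $k^2/2^{n/2}=k^2/\sqrt{2^n}$, which is \emph{directly} the source of $\delta'$; no square root of any trace-norm bound is taken.

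This is not a bookkeeping choice: with the global projector your Step 3 fails. Two Feistel rounds $S_L S_R$ do \emph{not} act as a uniformly random permutation on globally distinct tuples --- that is the three-round Luby--Rackoff theorem, not two. What the paper actually uses is that $S_L$ randomizes the left half \emph{only when the right halves $x_R^{(j)}$ are pairwise distinct} (so that the shifts $h_L(x_R^{(j)})$ are independent uniform strings), and symmetrically for $S_R$; globally distinct $n$-bit strings need not have distinct halves (e.g.\ $0^n$ and $0^{n-1}1$ share the left half $0^{n/2}$), so under your projector the shuffle twirl does not produce the uniform output you need. The rest of your outline --- the parallel-with-post-selection reformulation, the Clifford twirl bound on the projector, and the $F$-twirl forcing a permutation on distinct inputs --- matches the paper.
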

\noindent This shows that the permutation $S_L S_R$ is indistinguishable from a random permutation $P$, when applied in composition with $F$ and $C$. 
We provide a short alternate proof of Theorem~\ref{thm:LRFC-designs} in Section~\ref{sec: pf LRFC}.

\subsection{$k$-wise independent random functions}

We now review the properties of classical $k$-wise independent functions. We are interested in families of functions which generate outputs that are close to those of uniformly random Boolean functions:
\begin{definition}[$k$-wise independent function]
    Let $f_\alpha: \{0,1\}^n \to \{0,1\}^n$ be a function on bitstrings of length $n$ which is parameterized by a random seed $\alpha \in \mathcal{A}$. Then $f_\alpha$ 
    is known as a $k$-wise independent function if for all choices of distinct input values $x_1, x_2, ..., x_k \in \{0,1\}^n$ and output values $y_1, y_2, ..., y_k \in \{0,1\}^n$, 
    \begin{equation}
    \Pr_{\alpha \sim \mathcal{A}} [f_\alpha(x_1) = y_1, f_\alpha(x_2) = y_2, ..., f_\alpha(x_k) = y_k] = 2^{-nk}.
    \end{equation}
\end{definition}
\noindent This definition is automatically satisfied by a truly random function, which sends each input bitstring to an independent and uniformly random output bitstring.

The standard construction of $k$-wise independent functions uses polynomial evaluation over finite fields. Let the set of inputs $\{0,1\}^n$ be identified with the finite field $\mathbb{F}_{2^n} = \mathrm{GF}(2^n)$. Each element $a \in \mathbb{F}_{2^n}$ can be represented as an $n$-bit vector $a_{n-1} a_{n-2} \ldots a_1 a_{0}$ associated with the polynomial $a(x) = \sum_{i=0}^{n-1} a_i x^i$ over $\mathbb{Z}_2$. The field operations are defined as follows: addition $a+b$ corresponds to bitwise XOR, while multiplication $a \cdot b$ is computed by multiplying the polynomials $a(x) \cdot b(x)$ modulo an irreducible polynomial $p(x)$ of degree $n$.

\begin{theorem}[Polynomial-based $k$-wise independent functions \cite{carter1977universal, alon1992simple}]
    Consider the function $f_\alpha: \mathbb{F}_{2^n} \to \mathbb{F}_{2^n}$ defined as follows. The seed $\alpha$ consists of $k$ elements $(a_0, a_1, \ldots, a_{k-1})$, where $\alpha_{i} \in \mathbb{F}_{2^n}$. The function $f_\alpha$ is defined by the evaluation of the polynomial whose coefficients are given by the seed:
    \begin{equation}
        f_\alpha(x) = \sum_{i=0}^{k-1} a_i x^i.
    \end{equation}
    When $a_0, \ldots, a_{k-1}$ are chosen uniformly at random, $f_\alpha$ is a $k$-wise independent function.
\end{theorem}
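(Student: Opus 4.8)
The plan is to reduce the claim to the classical fact that a polynomial of degree less than $k$ over a field is uniquely determined by its values at any $k$ distinct points. Concretely, I would fix arbitrary distinct inputs $x_1, \ldots, x_k \in \mathbb{F}_{2^n}$ and arbitrary targets $y_1, \ldots, y_k \in \mathbb{F}_{2^n}$, and show that the number of seeds $\alpha = (a_0, \ldots, a_{k-1})$ satisfying $f_\alpha(x_j) = y_j$ for all $j$ is exactly one. Since $\alpha$ is drawn uniformly from a set of size $|\mathbb{F}_{2^n}|^k = 2^{nk}$, this immediately yields the desired probability $2^{-nk}$, which is precisely the $k$-wise independence condition.

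To count the good seeds, I would observe that the evaluation map $\mathcal{V} : \mathbb{F}_{2^n}^{k} \to \mathbb{F}_{2^n}^{k}$, $\alpha \mapsto (f_\alpha(x_1), \ldots, f_\alpha(x_k))$, is $\mathbb{F}_{2^n}$-linear in $\alpha$ and is represented by the Vandermonde matrix $V$ with entries $V_{j,i} = x_j^{\,i}$ for $j \in \{1,\ldots,k\}$ and $i \in \{0,\ldots,k-1\}$. Its determinant is $\det V = \prod_{1 \le j < j' \le k}(x_{j'} - x_j)$. Because the $x_j$ are pairwise distinct and $\mathbb{F}_{2^n}$ is a field, hence an integral domain with no zero divisors, this product is nonzero, so $V$ is invertible over $\mathbb{F}_{2^n}$. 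Thus $\mathcal{V}$ is a bijection and the linear system $V\alpha = (y_1,\ldots,y_k)^{\mathsf{T}}$ has a unique solution, i.e.\ exactly one seed $\alpha$ interpolates the prescribed data.

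Combining these observations: among the $2^{nk}$ equally likely seeds exactly one produces the chosen output values on the chosen inputs, so $\Pr_{\alpha}[f_\alpha(x_1) = y_1, \ldots, f_\alpha(x_k) = y_k] = 2^{-nk}$. I do not anticipate a real obstacle here, since this is a textbook argument; the one point worth a line of care is that working in characteristic $2$ does not disrupt anything — the Vandermonde determinant identity holds over any commutative ring, and its nonvanishing relies only on the absence of zero divisors together with distinctness of the $x_j$. (One could equivalently phrase the bijectivity through explicit Lagrange interpolation, but the Vandermonde formulation is the most economical.)
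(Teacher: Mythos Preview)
Your proposal is correct and matches the paper's own justification: the paper likewise invokes the Vandermonde system and the uniqueness of the degree-$(k-1)$ interpolating polynomial to conclude that the outputs at $k$ distinct inputs are uniformly random. Your write-up simply supplies the details (invertibility via the Vandermonde determinant and the counting of seeds) that the paper leaves as a one-sentence sketch.
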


This construction has the desired $k$-wise independence property because any $k$ distinct inputs $x_1, \ldots, x_k$ form a Vandermonde system, and the corresponding outputs are determined by a degree-$(k-1)$ polynomial uniquely determined by its values at these points. Since the polynomial coefficients are chosen uniformly at random, the outputs appear uniformly random for any choice of $k$ inputs.

\subsubsection{Example: Derandomization of the binary phase state}

One can use $k$-wise independent functions to realize practical implementations of state and unitary designs, which avoid the need for truly random Boolean functions.
This can be achieved through bounded independence, by instead using $k$-wise independent functions:
\begin{definition}[$k$-wise independent phase state] \label{def:kwisephasestate}
    Let $f_\alpha: \{0,1\}^n \to \{0,1\}$ be a $(2k)$-wise independent function parameterized by $\alpha$. The corresponding $k$-wise independent phase state is
    \begin{equation}
        \ket{\phi_\alpha} = \frac{1}{2^{n/2}} \sum_{x \in \{0,1\}^n} (-1)^{f_\alpha(x)} \ket{x}.
    \end{equation}
\end{definition}
\begin{lemma}[$k$-wise independent phase states are state $k$-designs \cite{brakerski2019pseudo}] \label{lemma:kwisephasestate}
    For any $k$ and classical construction of a $(2k)$-wise independent function, the ensemble of $k$-wise independent binary phase states given by Def. \ref{def:kwisephasestate} yields an $\varepsilon$-approximate state $k$-design with $\varepsilon = 4k^2/2^n.$
\end{lemma}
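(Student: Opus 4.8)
The plan is to show that the $k$-wise independent phase state reproduces the $k$-th moment $\chi_H$ to within the same additive error as the truly random phase state (Lemma~\ref{lemma:randomphasemoment}), by observing that the $k$-th moment of the phase-state ensemble only ever queries the underlying function $f$ at a bounded number of points, with a bounded total multiplicity. Concretely, I would write out $\chi_\mathcal{S} = \E_\alpha |\phi_\alpha\rangle\!\langle\phi_\alpha|^{\otimes k}$ explicitly in the computational basis:
\begin{equation}
\chi_\mathcal{S} = \frac{1}{2^{nk}} \sum_{\substack{x^{(1)},\ldots,x^{(k)} \\ y^{(1)},\ldots,y^{(k)}}} \E_\alpha\!\left[ (-1)^{\sum_j f_\alpha(x^{(j)}) + f_\alpha(y^{(j)})} \right] |x^{(1)}\cdots x^{(k)}\rangle\!\langle y^{(1)}\cdots y^{(k)}|.
\end{equation}
The key point is that in each term the exponent involves $f_\alpha$ evaluated at most $2k$ distinct arguments (the union of the $x^{(j)}$ and $y^{(j)}$). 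The expectation over $\alpha$ of $(-1)^{\sum \pm f_\alpha(\cdot)}$ depends only on the joint distribution of $f_\alpha$ on those $\le 2k$ points, and since $f_\alpha$ is $(2k)$-wise independent this joint distribution is exactly uniform — identical to that of a truly random $f$. Hence every matrix entry of $\chi_\mathcal{S}$ equals the corresponding entry of $\chi_H$ for the random-phase ensemble, so $\chi_\mathcal{S} = \chi_H^{\text{(random phase)}}$ as operators, and the bound $\|\chi_\mathcal{S} - \chi_H\|_1 \le 4k^2/2^n$ follows immediately from Lemma~\ref{lemma:randomphasemoment}.

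The main technical point to handle carefully — and the one place the argument could be fussy rather than deep — is the bookkeeping around \emph{distinct} versus \emph{repeated} arguments. The definition of $(2k)$-wise independence only asserts $\Pr[f_\alpha(x_1)=y_1,\ldots]=2^{-\ell}$ for $\ell$ \emph{distinct} inputs $x_1,\ldots,x_\ell$ with $\ell \le 2k$. In a given term of the sum above, the multiset $\{x^{(1)},\ldots,x^{(k)},y^{(1)},\ldots,y^{(k)}\}$ may contain coincidences, so one should pass to the set of \emph{distinct} values appearing, say $z_1,\ldots,z_\ell$ with $\ell \le 2k$, and rewrite the exponent as $\sum_{m=1}^\ell c_m f_\alpha(z_m)$ where $c_m \in \mathbb{Z}$ counts the signed multiplicity. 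Then $(-1)^{\sum c_m f_\alpha(z_m)} = (-1)^{\sum (c_m \bmod 2) f_\alpha(z_m)}$, and by $(2k)$-wise (hence also $\ell$-wise) independence the values $f_\alpha(z_1),\ldots,f_\alpha(z_\ell)$ are i.i.d.\ uniform bits, so the expectation factorizes and equals $\prod_m \E[(-1)^{(c_m \bmod 2) f_\alpha(z_m)}]$, which is $1$ if all $c_m$ are even and $0$ otherwise — exactly the same value obtained for a truly random function.

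I expect this to be the only real obstacle, and it is minor: once one is careful that "$\le 2k$ distinct arguments with signed multiplicities" is precisely what $(2k)$-wise independence controls, the matrix entries match term-by-term and the result is inherited verbatim from Lemma~\ref{lemma:randomphasemoment}. An alternative packaging, which avoids even writing the sum, is to invoke the general principle (used already in the main text and attributed to~\cite{zhandry2021PRF}) that the $k$-th moment of any phase-state construction is a function only of the $2k$-th moments of the underlying function ensemble — each $f$ appears once in a bra and once in a ket — so a $(2k)$-wise independent $f$ yields the identical moment operator to a random $f$. Either route gives $\varepsilon = 4k^2/2^n$ with no loss.
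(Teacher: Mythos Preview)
The paper does not give its own proof of this lemma; it is stated as a cited result from~\cite{brakerski2019pseudo}. Your argument is correct and is the standard one: each matrix entry of $\chi_\mathcal{S}$ depends on $f_\alpha$ only through its values at the at-most-$2k$ distinct points in $\{x^{(1)},\ldots,x^{(k)},y^{(1)},\ldots,y^{(k)}\}$, so $(2k)$-wise independence forces entry-wise equality with the truly-random-phase moment, and Lemma~\ref{lemma:randomphasemoment} delivers the bound. Your bookkeeping around repeated inputs (passing to distinct values with parities $c_m \bmod 2$) is exactly right and is the only place care is needed. The alternative packaging you mention---that the $k$-th moment depends only on the $2k$-th moments of the function ensemble because $f$ appears in both bra and ket---is precisely the heuristic the paper invokes in the main text and again just above Fact~\ref{fact:blocked-phase-depth}.
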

\noindent An analogous replacement can be applied to the PFC circuit~\cite{metger2024simple} and LRFC circuit to generate unitary $k$-designs using $k$-wise independent functions and $k$-wise independent permutations.
We refer to~\cite{metger2024simple} for a discussion of $k$-wise independent permutations, as we do not use them in our work. In general, $k$-wise independent permutations are more difficult to implement than $k$-wise independent functions.

\section{Quantum circuit implementation of $k$-wise independent functions} \label{sec: compilation}

To implement $k$-wise independent functions efficiently in quantum circuits, we need to analyze the classical circuit complexity of polynomial evaluation over finite fields. We focus on implementations using classical reversible Boolean circuits with only constant-local Boolean gates and without the ability to perform unbounded fan-in or fan-out. We can directly implement a classical reversible Boolean circuit on a quantum computer via the following correspondence.

\vspace{0.75em}
\paragraph{Quantum gates.} A quantum circuit is inherently reversible as every quantum gate is a unitary transformation that preserves information. When implementing classical functions on quantum computers, we must use reversible classical circuits, which can be directly translated to quantum circuits by replacing each reversible Boolean gate, such as CNOT or Toffoli gates, to the associated quantum gates acting on computational basis states.

\vspace{0.75em}
\paragraph{Ancilla qubits.} Classical irreversible circuits can be made reversible through the use of ancillary bits to store intermediate results. On quantum devices, these ancillary bits become ancillary qubits. The number of classical ancillary bits directly determines the number of quantum ancillary qubits required.

\vspace{0.75em}
\paragraph{Circuit depth.} The depth of a reversible classical circuit is the depth of the associated quantum circuit.

\vspace{0.75em}
This section provides a detailed analysis of the reversible Boolean circuit resources required for such implementations, which directly translate to quantum circuit resources through the above correspondence.

\subsection{Finite field arithmetic}

The efficient implementation of $k$-wise independent functions relies on fast algorithms for arithmetic operations in finite fields. We begin by establishing the circuit complexity of basic field operations.

\subsubsection{Field addition}

Field addition in $\mathbb{F}_{2^n}$ has a remarkably simple implementation:

\begin{lemma}[Field Addition \cite{von2003modern}] \label{lemma:field_add}
Let $a, b \in \mathbb{F}_{2^n}$ represented by $n$ bits. The sum $a+b$ can be computed by a reversible classical Boolean circuit of depth $\mathcal{O}(1)$ with no ancillary bits. The field addition operation can be implemented either as $a, b \mapsto a+b, b$ or $a, b \mapsto a, a+b$ without any ancillary bits.
\end{lemma}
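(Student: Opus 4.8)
The plan is to recall the bitwise representation of $\mathbb{F}_{2^n}$ and observe that field addition is literally component-wise XOR, then exhibit the elementary reversible circuit. Since each element $a = (a_{n-1}, \ldots, a_0) \in \mathbb{F}_{2^n}$ is identified with the polynomial $a(x) = \sum_i a_i x^i$ over $\mathbb{Z}_2$, and the field sum is polynomial addition with coefficients reduced mod $2$, we have $(a+b)_i = a_i \oplus b_i$ for every $i = 0, \ldots, n-1$ independently. Crucially, there is no carry propagation and no reduction modulo the irreducible polynomial $p(x)$ is needed, because the degree never increases — this is the feature that distinguishes $\mathbb{F}_{2^n}$ addition from integer addition and is what makes the depth $\mathcal{O}(1)$.

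First I would implement the map $a, b \mapsto a, a+b$ by applying, for each $i$, a CNOT gate with control bit $a_i$ and target bit $b_i$; after all $n$ CNOTs the $b$-register holds $a \oplus b$ while the $a$-register is untouched. These $n$ CNOTs act on disjoint pairs of bits, so they can all be performed simultaneously, giving depth $1$; no ancillary bits are introduced, since each CNOT overwrites a bit of the input in place. The map is manifestly reversible — indeed it is its own inverse, since $(a + (a+b)) = b$ in characteristic $2$. The variant $a, b \mapsto a+b, b$ is obtained identically by swapping the roles of control and target (control $b_i$, target $a_i$), again in depth $1$ with no ancillas. This establishes the claim, and by the quantum/classical correspondence described above it translates directly to a depth-$\mathcal{O}(1)$ quantum circuit.

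There is essentially no obstacle here; the only thing to be careful about is confirming that no modular reduction step is ever required for addition (it is required only for multiplication), and that the in-place CNOT implementation genuinely uses zero ancilla qubits rather than the "compute into a fresh register" pattern one might naively write down. Both points are immediate from the structure of polynomial addition over $\mathbb{Z}_2$.
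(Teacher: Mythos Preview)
Your proposal is correct and follows essentially the same approach as the paper: both observe that addition in $\mathbb{F}_{2^n}$ is bitwise XOR, implemented by $n$ parallel XOR/CNOT gates on disjoint bit pairs, giving depth $\mathcal{O}(1)$ with no ancillas. Your version is simply more explicit about the CNOT direction for each of the two in-place variants and about why no modular reduction is needed.
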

\begin{proof}
Addition in $\mathbb{F}_{2^n}$ corresponds to bitwise XOR of the binary representations. So we can implement $a+b$ using $n$ parallel XOR gates, each acting on corresponding bit positions. This uses $\mathcal{O}(1)$ depth and no ancillary bits.
\end{proof}

\subsubsection{Field multiplication}

Field multiplication is significantly more complex than addition, requiring polynomial multiplication followed by modular reduction. The state-of-the-art approach for large numbers uses the Schönhage-Strassen algorithm \cite{eberly1984very, schonhage1971schnelle, von2003modern}:

\begin{lemma}[Field Multiplication \cite{eberly1984very, schonhage1971schnelle, von2003modern}] \label{lemma:field_mult}
Let $a, b \in \mathbb{F}_{2^n}$ represented by $n$ bits. The product $a \cdot b$ can be computed by a reversible classical Boolean circuit of depth $\mathcal{O}(\log n)$ with $\mathcal{O}(n \log n \log \log n)$ ancillary bits. The field multiplication operation is implemented as $a, b, 0^n, 0^m \mapsto a, b, a \cdot b, 0^m$ with $m$ ancillary bits.
\end{lemma}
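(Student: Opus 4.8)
The plan is to reduce multiplication in $\mathbb{F}_{2^n}$ to two well-understood subroutines — polynomial multiplication over $\mathbb{Z}_2$ and reduction modulo the fixed defining polynomial — invoke known depth-$\mathcal{O}(\log n)$ parallel Boolean circuits of near-linear size for each, and then make the composite circuit reversible and ancilla-clean by the standard compute--copy--uncompute argument, which costs only constant factors.

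First I would identify $a, b \in \mathbb{F}_{2^n}$ with polynomials $a(x), b(x) \in \mathbb{Z}_2[x]$ of degree $<n$ and compute the product polynomial $c(x) = a(x)b(x)$, which has at most $2n-1$ coefficients. For this step I would invoke the Schönhage--Strassen FFT-based multiplication algorithm in its parallelized form due to Eberly~\cite{eberly1984very, schonhage1971schnelle}, which outputs the coefficients of $c(x)$ via a Boolean circuit of depth $\mathcal{O}(\log n)$ and size $\mathcal{O}(n \log n \log\log n)$, hence using $\mathcal{O}(n \log n \log\log n)$ intermediate work bits. Second, I would reduce $c(x)$ modulo the irreducible polynomial $p(x)$ of degree $n$ that defines the field. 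Since $p$ is a constant known at circuit-design time, $c(x) \mapsto c(x) \bmod p(x)$ can be carried out by Barrett-style reduction, i.e.\ two further polynomial multiplications by precomputed polynomials together with shifts by powers of $x$ (which are free); this reuses the circuit of Step~1 and therefore also has depth $\mathcal{O}(\log n)$ and size $\mathcal{O}(n\log n\log\log n)$. (When $n$ admits a sparse irreducible polynomial the reduction is even cheaper, but Barrett gives the bound unconditionally; a generic $\mathbb{F}_2$-linear reduction circuit would be depth $\mathcal{O}(\log n)$ but size $\Theta(n^2)$, which is why one routes it through multiplication.) Composing the two stages yields an irreversible circuit computing $a \cdot b$ in depth $\mathcal{O}(\log n)$ with $\mathcal{O}(n\log n\log\log n)$ work bits. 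Finally I would apply Bennett's trick: run this circuit reversibly, writing every gate's output onto fresh ancillas to reach the state $a, b, a\cdot b, (\text{garbage})$; XOR the $n$-bit answer into the clean output register (free, by the XOR-copy observation underlying Lemma~\ref{lemma:field_add}); and then run the circuit backwards to reset all ancillas to $0$. This at most doubles both depth and ancilla count, giving the claimed reversible map $a, b, 0^n, 0^m \mapsto a, b, a\cdot b, 0^m$ with $m = \mathcal{O}(n\log n\log\log n)$ and depth $\mathcal{O}(\log n)$.

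The main obstacle is Step~1: an ordinary radix-$2$ FFT needs a principal $2^j$-th root of unity, which does not exist in $\mathbb{Z}_2$. Handling this is precisely the crux of the Schönhage--Strassen construction, which works instead over an auxiliary ring $\mathbb{Z}_2[t]/(t^{2^\ell}+1)$ that supplies a synthetic root of unity and recursively reduces a length-$n$ multiplication to many shorter ones; padding $n$ up to a convenient power of two changes it only by a constant factor. Extracting both the $\mathcal{O}(\log n)$ depth and the $\mathcal{O}(n\log n\log\log n)$ size from this recursion simultaneously is exactly the content of the parallel analysis in~\cite{eberly1984very}, so I would cite it rather than re-derive it (with~\cite{von2003modern} as the reference for the underlying sequential field arithmetic). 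Once Step~1 is granted, Steps~2 and~3 are routine — the reduction reduces to Step~1 and the reversibility argument is textbook — so the only real work is marshalling the parallel-multiplication black box.
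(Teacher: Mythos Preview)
Your proposal is correct and follows essentially the same approach as the paper: decompose field multiplication into polynomial multiplication over $\mathbb{Z}_2[x]$ via parallel Sch\"onhage--Strassen, then Barrett reduction (two further multiplications plus shifts), then Bennett's compute--copy--uncompute to make everything reversible. The only cosmetic difference is that the paper spells out the $\sqrt{n}$-recursive NTT structure and proves the Barrett identity for polynomials explicitly, whereas you cite Eberly as a black box and name Barrett without proof; both routes land on the same $\mathcal{O}(\log n)$ depth and $\mathcal{O}(n\log n\log\log n)$ ancilla bounds.
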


We first describe the standard irreversible Boolean circuit for performing field multiplication, then explain how to make the circuit reversible. The algorithm consists of two main phases:

\vspace{0.75em}
\paragraph{Phase 1: Polynomial multiplication.} We multiply $a(x), b(x)$ in $\mathbb{Z}_2[x]$ to obtain $c'(x) = a(x) \cdot b(x)$ with degree at most $(2n-2)$. This is achieved using a divide-and-conquer approach:
\begin{enumerate}
    \item \textbf{Recursive decomposition:} Rewrite the two degree-$(n-1)$ polynomials $a(x)$ and $b(x)$ over $\mathbb{F}_2$ as degree-$(n/\lfloor \sqrt{n} \rfloor)$ polynomials over the smaller field $\mathbb{F}_{2^{\lfloor \sqrt{n} \rfloor}}$.
    \item \textbf{Number-Theoretic Transform (NTT):} Apply the finite field analog of the Fast Fourier Transform to transform the polynomials into evaluation form. This step has circuit depth $\mathcal{O}(\log n)$ and size $\mathcal{O}(n \log n)$.
    \item \textbf{Pointwise multiplication:} Perform parallel pointwise multiplication over $\mathbb{F}_{2^{\lfloor \sqrt{n} \rfloor}}$. This step is implemented recursively, leading to the recurrence relations:
    \begin{align}
        D(n) &= D(\lfloor \sqrt{n} \rfloor) + \mathcal{O}(\log n), & \text{(circuit depth)}\\
        S(n) &= S(\lfloor \sqrt{n} \rfloor) + \mathcal{O}(n \log n),  & \text{(circuit size)}
    \end{align}
    where we have $D(n), S(n) = \mathcal{O}(1)$ for any small constant $n$.
    \item \textbf{Inverse NTT:} Transform back to the coefficient representation with the same complexity as the forward transform.
\end{enumerate}
Solving the recurrence relations yields total circuit depth $\mathcal{O}(\log n)$ and size $\mathcal{O}(n \log n \log \log n)$.

\vspace{0.75em}
\paragraph{Phase 2: Modular reduction.} We reduce $c'(x)$ modulo the irreducible polynomial $p(x)$ to obtain $c(x) = c'(x) \bmod p(x)$ with degree at most $n-1$. This can be implemented using Barrett reduction \cite{barrett1986implementing}. The setup is as follows. We work in the ring $\mathbb{F}_2[x]$ of polynomials with coefficients in $\mathbb{F}_2$. For polynomials $f(x), g(x) \in \mathbb{F}_2[x]$ with $g(x) \neq 0$, polynomial division gives us:
\begin{equation}
f(x) = q(x) \cdot g(x) + r(x)
\end{equation}
where $q(x)$ is the quotient and $r(x)$ is the remainder with $\deg(r) < \deg(g)$. We define $\lfloor f(x) / g(x) \rfloor = q(x)$ as the quotient polynomial, ignoring the remainder. Barrett reduction uses the following identity:
\begin{theorem}[Barrett identity for polynomials]
For $f, g, h \in \mathbb{F}_2[x]$ with $g, h \neq 0$, the following identity holds:
\begin{equation}
\lfloor f / g \rfloor = \lfloor (f \cdot \lfloor h / g \rfloor) / h \rfloor
\end{equation}
provided that $\deg(f) \leq \deg(h)$.
\end{theorem}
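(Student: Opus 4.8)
The plan is to derive the identity directly from the uniqueness of polynomial division in $\mathbb{F}_2[x]$: for any $a,b \in \mathbb{F}_2[x]$ with $b \neq 0$ there are unique $q,r$ with $a = qb + r$ and $\deg(r) < \deg(b)$, and $\lfloor a/b\rfloor$ is by definition this $q$. First I would record the two decompositions that appear in the statement: write $f = gq + r$ with $q = \lfloor f/g\rfloor$ and $\deg(r) < \deg(g)$, and $h = gs + t$ with $s = \lfloor h/g\rfloor$ and $\deg(t) < \deg(g)$. The goal is then to show $\lfloor fs/h\rfloor = q$.

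The main computation is a substitution. From $h = gs + t$ we have $gs = h + t$ (working in characteristic $2$, so signs are irrelevant), hence
\begin{equation}
  fs = (gq + r)s = q(gs) + rs = q(h+t) + rs = qh + (qt + rs).
\end{equation}
So it suffices to prove that $qt + rs$ is the remainder of $fs$ modulo $h$, i.e.\ that $\deg(qt + rs) < \deg(h)$; by uniqueness of division this forces $\lfloor fs/h\rfloor = q = \lfloor f/g\rfloor$, which is the claim.

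The degree bound is the only place the hypothesis $\deg(f) \le \deg(h)$ is used, and I would bound the two terms separately. Since $\deg(q) = \deg(f) - \deg(g)$ (or $q = 0$) and $\deg(t) \le \deg(g) - 1$, we get $\deg(qt) \le \deg(f) - 1 \le \deg(h) - 1$; this is exactly the step that needs $\deg(f)\le\deg(h)$. Since $\deg(s) = \deg(h) - \deg(g)$ and $\deg(r) \le \deg(g) - 1$, we get $\deg(rs) \le \deg(h) - 1$ unconditionally. Therefore $\deg(qt + rs) \le \deg(h) - 1 < \deg(h)$, as needed.

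I would finish by checking the degenerate cases so the argument is airtight: if $s = 0$ (equivalently $\deg(h) < \deg(g)$) then both sides vanish, since $\deg(f) \le \deg(h) < \deg(g)$ also forces $q = 0$; and adopting the convention $\deg(0) = -\infty$ keeps all the inequalities above valid when any of $q,r,t,s$ is zero. There is no real obstacle here — the content of the lemma is precisely that the single inequality $\deg(f) \le \deg(h)$ is exactly strong enough to absorb the error of replacing the ``true inverse'' $1/g$ by the truncated approximation $\lfloor h/g\rfloor / h$ entirely into the remainder, so the only care required is in the degree bookkeeping and the edge cases.
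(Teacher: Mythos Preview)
Your proof is correct and follows essentially the same approach as the paper: the same two division decompositions (your $q,r,s,t$ are the paper's $q,r,q',r'$), the same substitution $fs = qh + (qt+rs)$, and the same degree bounds on the two cross terms to force $qt+rs$ into the remainder. The only differences are cosmetic --- you work explicitly in characteristic $2$ and handle the degenerate cases, which the paper omits.
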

\begin{proof}
Let $f = q \cdot g + r$ and $h = q' \cdot g + r'$ where $\deg(r) < \deg(g)$ and $\deg(r') < \deg(g)$. Then $\lfloor f / g \rfloor = q$ and $\lfloor h / g \rfloor = q'$. We have:
\begin{equation}
    f \cdot \lfloor h / g \rfloor = f \cdot q' = q \cdot g \cdot q' + r \cdot q' = h \cdot q + (r \cdot q' - r' \cdot q).
\end{equation}
We can bound the degree of the error term $r \cdot q' - r' \cdot q$ as follows. Since $h = q' \cdot g + r'$, we have $\deg(q') = \deg(h) - \deg(g)$. Similarly, $\deg(q) = \deg(f) - \deg(g)$. Therefore:
\begin{align}
\deg(r \cdot q') &\leq \deg(r) + \deg(q') \leq \deg(q') + \deg(g) - 1 < \deg(h),\\
\deg(r' \cdot q) &\leq \deg(r') + \deg(q) \leq \deg(q) + \deg(g) - 1 < \deg(f) \leq \deg(h).
\end{align}
Hence, $\deg(r \cdot q' - r' \cdot q) < \deg(h)$. This implies that:
\begin{equation}
    \lfloor f \cdot \lfloor h / g \rfloor / h \rfloor = \left\lfloor \frac{h \cdot q + (r \cdot q' - r' \cdot q)}{h} \right\rfloor = q,
\end{equation}
where the last equality follows because $\deg(r \cdot q' - r' \cdot q) < \deg(h)$. Since $\lfloor f / g \rfloor = q$, this concludes the proof.
\end{proof}
Barrett reduction computes $c'(x) \bmod p(x)$ using the above identity with $f = c'(x)$, $g = p(x)$, and $h = x^{2n-2}$. Since $\deg(c'(x)) \leq 2n-2 = \deg(x^{2n-2})$, the condition is satisfied. The modular reduction becomes:
\begin{equation}
    c'(x) \bmod p(x) = c'(x) - \lfloor c'(x) / p(x) \rfloor \cdot p(x) = c'(x) - \lfloor (c'(x) \cdot \lfloor x^{2n-2} / p(x) \rfloor) / x^{2n-2} \rfloor \cdot p(x).
\end{equation}
The computation proceeds as follows:
\begin{enumerate}
    \item \textbf{Precomputation:} For the fixed polynomial $p(x)$ of degree $n$, precompute the Barrett constant:
    \begin{equation}
    \mu(x) = \lfloor x^{2n-2} / p(x) \rfloor
    \end{equation}
    This polynomial has degree $(2n-2) - n = n-2$. Since $p(x)$ is fixed, $\mu(x)$ can be precomputed once (on a classical computer) and reused.
    
    \item \textbf{Polynomial multiplication:} Perform polynomial multiplication $c'(x) \cdot \mu(x)$ to obtain the polynomial $e(x)$ of degree at most $(2n-2) + (n-2) = 3n-4$. This multiplication can be implemented using the Sch\"onhage-Strassen algorithm with circuit depth $\mathcal{O}(\log n)$ and size $\mathcal{O}(n \log n \log \log n)$.
    
    \item \textbf{Quotient extraction:} Compute the quotient:
    \begin{equation}
    q(x) = \lfloor e(x) / x^{2n-2} \rfloor
    \end{equation}
    By the Barrett identity, $q(x)$ equals the true quotient $\lfloor c'(x) / p(x) \rfloor$ and has degree at most $(3n-4) - (2n-2) = n-2$. This operation extracts the coefficients of $x^{2n-2}$ and higher monomials from $e(x)$, which can be implemented by right-shifting the $(3n-3)$ bits describing $e(x)$ by $(2n-2)$ positions.
    
    \item \textbf{Remainder computation:} Compute the remainder:
    \begin{equation}
    r(x) = c'(x) - q(x) \cdot p(x) = c'(x) + q(x) \cdot p(x),
    \end{equation}
    where the second equality holds in $\mathbb{F}_2[x]$. This requires another polynomial multiplication $q(x) \cdot p(x)$, which uses a circuit depth of $\mathcal{O}(\log n)$ and a size of $\mathcal{O}(n \log n \log \log n)$.
\end{enumerate}
Since the Barrett identity guarantees that $q(x)$ is the exact quotient, the remainder $r(x)$ automatically has degree at most $n-1$ and equals $c'(x) \bmod p(x)$. The Barrett reduction requires two polynomial multiplications (steps 2 and 4), both implementable using the Sch\"onhage-Strassen algorithm. The total circuit depth for Barrett reduction is $\mathcal{O}(\log n)$ and the total size is $\mathcal{O}(n \log n \log \log n)$. The field multiplication (polynomial multiplication + Barrett reduction) requires circuit depth $\mathcal{O}(\log n)$ and size $\mathcal{O}(n \log n \log \log n)$ using a standard Boolean circuit.

\vspace{0.75em}
\paragraph{Conversion to reversible circuits.} The above describes a standard Boolean circuit, which is irreversible in general. To obtain a reversible Boolean circuit, we apply the standard conversion procedure:
\begin{enumerate}
    \item Create an ancillary bit initialized to $0$ for every output bit of each gate.
    \item Implement the irreversible Boolean circuit by replacing each Boolean gate with a reversible gate that acts on the associated ancillary bit.
    \item Use bitwise XOR to copy the $n$ ancillary bits associated with the answer $a \cdot b$ to the final $n$ output bits.
    \item Run all reversible gates in reverse to return all ancillary bits to $0$.
\end{enumerate}
Hence, the circuit size of a standard Boolean circuit upper bounds the number of ancillary bits of the reversible Boolean circuit, while the circuit depth only increases by a constant factor.
Together, the product $a \cdot b$ can be computed by a reversible classical Boolean circuit of depth $\mathcal{O}(\log n)$ with $\mathcal{O}(n \log n \log \log n)$ ancillary bits.

\subsection{Polynomial evaluation circuit}

With the field arithmetic primitives established, we can now analyze the evaluation of $k$-wise independent functions:

\begin{lemma}[Low-depth and low-ancilla $k$-wise independent functions] \label{lemma:kwisefns}
A classical reversible Boolean circuit for evaluating $f_\alpha(x) = \sum_{i=0}^{k-1} a_i x^i$ for a given seed $\alpha = (a_0, \ldots, a_{k-1})$ and input $x \in \mathbb{F}_{2^n}$ can be implemented with either:
\begin{enumerate}
\item \textbf{Low depth:} $\mathcal{O}(\log k \cdot \log n)$ depth and $\mathcal{O}(k n \log n \cdot \log \log n)$ ancillary bits.
\item \textbf{Low ancilla:} $\mathcal{O}(k \cdot \log n)$ depth and $\mathcal{O}(n \log n \cdot \log \log n)$ ancillary bits.
\end{enumerate}
\end{lemma}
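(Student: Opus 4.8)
The plan is to evaluate $f_\alpha(x) = \sum_{i=0}^{k-1} a_i x^i$ by combining the two field-arithmetic primitives (Lemmas~\ref{lemma:field_add} and~\ref{lemma:field_mult}) in two complementary schedules, and to carefully account for ancilla usage via uncomputation. The key structural observation is that the sum has $k$ terms, each of which is a monomial $a_i x^i$, and that computing all needed powers $x, x^2, \ldots, x^{k-1}$ is the dominant subroutine; once the powers are available, the $k$ pointwise products $a_i \cdot x^i$ and the final $(k-1)$-fold sum are comparatively cheap.

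\emph{Low-depth schedule.} First I would compute the powers $x^{2^j}$ by repeated squaring for $j = 1, \ldots, \lceil \log_2 k\rceil$: each squaring is one field multiplication of depth $\mathcal{O}(\log n)$, and the chain has length $\mathcal{O}(\log k)$, giving depth $\mathcal{O}(\log k \cdot \log n)$. From these, every power $x^i$ with $i < k$ is a product of at most $\lceil \log_2 k \rceil$ of the $x^{2^j}$'s, assembled by a balanced binary tree of field multiplications of depth $\mathcal{O}(\log\log k)$; doing this for all $k$ powers in parallel costs depth $\mathcal{O}(\log\log k \cdot \log n)$ and $\mathcal{O}(k)$ multiplications worth of ancillas. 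Then compute the $k$ products $a_i \cdot x^i$ in parallel (depth $\mathcal{O}(\log n)$, $\mathcal{O}(k)$ multiplications), and finally XOR all $k$ results together with a balanced binary tree of field additions (depth $\mathcal{O}(\log k)$, no ancillas by Lemma~\ref{lemma:field_add}). Summing the depths gives $\mathcal{O}(\log k \cdot \log n)$. For the ancilla count: each of the $\mathcal{O}(k)$ field multiplications invoked uses $\mathcal{O}(n\log n\log\log n)$ scratch ancillas which, by the reversible-circuit conversion described after Lemma~\ref{lemma:field_mult}, are returned to $0$ after each multiplication and can therefore be \emph{reused} across multiplications in the same layer — but multiplications running in parallel within a layer need disjoint scratch space, so the peak is $\mathcal{O}(k)$ simultaneous multiplications, i.e.\ $\mathcal{O}(kn\log n\log\log n)$ ancillas, which matches the claim; we must also keep the $\mathcal{O}(k)$ intermediate values $x^i$ and $a_i x^i$ alive (each $n$ bits), contributing only $\mathcal{O}(kn)$, a lower-order term.

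\emph{Low-ancilla schedule.} Here I would use Horner's rule: initialize an accumulator to $a_{k-1}$, and for $i$ from $k-2$ down to $0$ replace the accumulator $y$ by $y \cdot x + a_i$ — one field multiplication followed by one field addition. This is a strictly sequential chain of $k-1$ multiplications, giving depth $\mathcal{O}(k\cdot\log n)$. Since the steps are sequential, only one field multiplication is ever in flight, so its $\mathcal{O}(n\log n\log\log n)$ scratch ancillas are allocated, used, and uncomputed once and reused at every step; together with the $\mathcal{O}(n)$ bits holding the accumulator and the fixed input $x$, the peak ancilla count is $\mathcal{O}(n\log n\log\log n)$, as claimed.

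\emph{Main obstacle.} The routine parts are the depth arithmetic; the step that needs genuine care is the ancilla bookkeeping — specifically, justifying that the scratch space of the Schönhage–Strassen-based multiplier can be cleanly uncomputed and recycled (so it does not accumulate over the $\mathcal{O}(k)$ or $k$ invocations), and that in the low-depth schedule one truly needs only $\mathcal{O}(k)$, not $\mathcal{O}(k\log k)$, concurrent multiplier instances when building all powers $x^i$. The fix for the latter is to note that the balanced-tree assembly of the $k$ powers can be organized so that at each of its $\mathcal{O}(\log\log k)$ layers at most $\mathcal{O}(k)$ multiplications run concurrently, and the preceding repeated-squaring chain uses only $\mathcal{O}(1)$ concurrent multiplications; hence the $\mathcal{O}(k)$ peak is set by the pointwise-product layer. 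I would also double-check the edge handling of which powers are actually required (only $i \le k-1$) so that no unnecessary multiplications inflate either bound. The remaining details are mechanical, so I would state them briefly rather than grind through them.
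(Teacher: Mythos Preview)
Your approach matches the paper's: repeated squaring followed by parallel assembly of the $x^i$, then pointwise products $a_i x^i$ and a binary-tree XOR for the low-depth version, and a sequential Horner-style loop for the low-ancilla version (the paper keeps \texttt{result} and \texttt{current\_power} as two separate registers rather than one Horner accumulator, but this is equivalent). The one point where your argument slips is precisely the obstacle you flagged: with all $k$ powers built in parallel by depth-$\mathcal{O}(\log\log k)$ balanced trees, the first tree layer alone fires $\sum_{i<k}\lfloor\mathrm{popcount}(i)/2\rfloor = \Theta(k\log k)$ multiplications concurrently, not $\mathcal{O}(k)$, so your ``fix'' does not hold as stated. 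The paper sidesteps this by multiplying the $\mathcal{O}(\log k)$ factors for each $x^i$ \emph{sequentially} rather than in a tree, so that only one multiplier per power---hence $\mathcal{O}(k)$ total---is ever active; the resulting depth $\mathcal{O}(\log k \cdot \log n)$ is already within budget since the repeated-squaring chain costs that much anyway.
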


\noindent We present both low-depth and low-ancilla constructions below.

\vspace{1em}
\noindent \textbf{Low-depth construction:}
\vspace{0.75em}

The computation  proceeds in three stages as follows.

\subsubsection*{Stage 1: Computing powers of $x$}

The first stage computes all powers $x^0, x^1, \ldots, x^{k-1}$ using an approach that minimizes circuit depth:

\vspace{0.75em}
\paragraph{Phase 1a (Sequential power computation):} Compute exponents that are powers of 2 using repeated squaring:
\begin{align}
    x^1 &= x \\
    x^2 &= x^1 \cdot x^1 \\
    x^4 &= x^2 \cdot x^2 \\
    &\vdots \\
    x^{2^{\lfloor \log_2(k-1) \rfloor}} &= \left(x^{2^{\lfloor \log_2(k-1) \rfloor - 1}}\right)^2
\end{align}
This requires $\lfloor \log_2(k-1) \rfloor = \mathcal{O}(\log k)$ field multiplications performed sequentially. The powers are stored in $\lfloor \log_2(k-1) \rfloor \cdot n$ bits. Since each multiplication has depth $\mathcal{O}(\log n)$ by Lemma~\ref{lemma:field_mult}, the total circuit depth is $\mathcal{O}(\log k \cdot \log n)$. By reusing ancillary bits across sequential field multiplications, we need only $\mathcal{O}(n \log n \log \log n)$ ancillary bits.

\vspace{0.75em}
\paragraph{Phase 1b (Parallel power computation):} Use a binary tree to copy the power-of-2 exponents for later use. This requires depth $\mathcal{O}(\log k)$ and $\mathcal{O}(nk \log k)$ ancillary bits. For each remaining exponent $j \in \{3, 5, 6, 7, 9, \ldots, k-1\}$, we express $j$ in binary as $j = \sum_{i} b_i 2^i$ and compute:
\begin{equation}
x^j = \prod_{i: b_i = 1} x^{2^i}.
\end{equation}
All exponents $j$ are computed in parallel. Each product requires at most $\mathcal{O}(\log k)$ field multiplications. All powers of $x$, $x^0, x^1, \ldots, x^{k-1}$, are stored in classical memory, which requires $k \cdot n$ bits. The total circuit depth is $\mathcal{O}(\log k \cdot \log n)$, and the number of ancillary bits is $\mathcal{O}(k \cdot n \log n \log \log n)$.

\subsubsection*{Stage 2: Coefficient multiplication}

Compute $a_i \cdot x^i$ for all $i \in \{0, 1, \ldots, k-1\}$ in parallel. Each of the $k$ parallel multiplications requires the use of $\mathcal{O}(n \log n \log \log n)$ ancillary bits by Lemma~\ref{lemma:field_mult}, for a total of $\mathcal{O}(k \cdot n \log n \log \log n)$ ancillary bits. Since all multiplications are performed in parallel, the circuit depth is $\mathcal{O}(\log n)$.

\subsubsection*{Stage 3: Binary tree summation}

Sum the $k$ terms $a_i \cdot x^i$ using a binary tree with $\lceil \log_2 k \rceil$ levels of parallel additions:
\begin{align}
    \text{Level 1:} &\quad \lceil k/2 \rceil \text{ pairwise sums in parallel} \\
    \text{Level 2:} &\quad \lceil k/4 \rceil \text{ pairwise sums in parallel} \\
    &\vdots \\
    \text{Final level:} &\quad \text{Single sum}
\end{align}
By Lemma~\ref{lemma:field_add}, each addition requires constant depth and no ancillary bits. The total circuit depth is $\mathcal{O}(\log k)$.

\subsubsection*{Final reversible implementation}

We use bitwise XOR to copy the $n$ ancillary bits associated with the answer $f_{\alpha}(x)$ to the final $n$ output bits. By running the entire circuit in reverse, we return all ancillary bits to $0$. This reversible Boolean circuit implements:
\begin{equation}
\alpha, x, 0^n, 0^m \mapsto \alpha, x, f_{\alpha}(x), 0^m,
\end{equation}
where $m = \mathcal{O}(k n \log n \log \log n)$ is the total number of ancillary bits required.

\vspace{1em}
\noindent \textbf{Low-ancilla construction:}
\vspace{0.75em}

The low-ancilla approach trades depth for space by computing the function $f_\alpha(x) = \sum_{i=0}^{k-1} a_i x^i$ sequentially using the following algorithm:
\begin{align}
    &\texttt{result} \leftarrow 0 \\
    &\texttt{current\_power} \leftarrow 1 \quad \text{(representing } x^0 \text{)} \\
    &\texttt{for } i = 0 \texttt{ to } k -1: \\
    &\quad\quad\quad \texttt{result} \leftarrow \texttt{result} + a_i \cdot \texttt{current\_power} \\
    &\quad\quad\quad \texttt{current\_power} \leftarrow \texttt{current\_power} \cdot x
\end{align}
Each iteration requires one field multiplication ($a_i \cdot \texttt{current\_power}$), one field addition (\texttt{result} $+$ $a_i \cdot \texttt{current\_power}$), and one field multiplication (\texttt{current\_power} $\cdot x$). Each field arithmetic operation has depth $\mathcal{O}(\log n)$ by Lemma \ref{lemma:field_add} and Lemma \ref{lemma:field_mult}. With $k$ sequential iterations, the total depth is $\mathcal{O}(k \log n)$.

The space complexity includes storage for \texttt{result}, \texttt{current\_power}, and temporary ancillary bits for field operations. Since we reuse ancillary bits across sequential operations, the total requirement is $\mathcal{O}(n \log n \log \log n)$ ancillary bits. Similar to the low-depth construction, we use the bitwise XOR to copy the $n$ bits associated with the answer $f_{\alpha}(x)$ to the final $n$ output bits. By running the entire circuit in reverse, we return all ancillary bits to $0$. This reversible Boolean circuit implements:
\begin{equation}
\alpha, x, 0^n, 0^m \mapsto \alpha, x, f_{\alpha}(x), 0^m,
\end{equation}
where $m$ is the total number of ancillary bits required for the chosen construction.

\subsection{Summary of circuit resources}

The complete quantum circuit implementation of a $k$-wise independent $n$-bit function uses either:
\begin{itemize}
    \item \textbf{Low depth:} $\mathcal{O}(\log k \cdot \log n)$ depth and $\mathcal{O}(k n \log n \log \log n)$ ancillary qubits.
    \item \textbf{Low ancilla:} $\mathcal{O}(k \cdot \log n)$ depth and $\mathcal{O}(n \log n \log \log n)$ ancillary qubits.
\end{itemize}
These resource requirements provide flexibility for different quantum computing architectures, enabling efficient implementations of $k$-wise independent designs based on available resources.

\section{State designs in nearly optimal depth} \label{sec:state designs}

In this section, we provide details of our construction of approximate state designs.
We first introduce our design construction based on blocked random phase states.
We then discuss the key ingredient of our proof, the local distinct subspace projector, and provide full proof details.
We discuss the application of our results to a simple construction of $\text{poly} \log \log n$-depth pseudorandom states in the final section.


\subsection{Blocked random phase states}

As discussed in the main text (Fig.~1), we consider a blocked variant of the random binary phase state,
\begin{equation} \label{eq: low depth binary phase state}
        \ket{\psi} = \frac{1}{\sqrt{2^{n}}} \sum_{x \in \{0,1\}^n} (-1)^{\sum_a f_{a,a+1}(x_{a,a+1})} \ket{x},
\end{equation}
in which the phase $f(x) \equiv \sum_a f_{a,a+1} (x)$ is computed via a two-layer circuit of random classical functions.
Here, we make a small change in notation compared to the main text, by replacing the index $i$ for each patch of $2\xi$ qubits with the index $a,a+1$ (which labels the two patches of $\xi$ qubits each, $a$ and $a+1$, that combine to form $i$).
This makes it explicit that adjacent functions overlap on patches of $\xi$ qubits each.
Each function $f_{a,a+1}$ acts on $2\xi$ qubits, and only depends on the values $x_{a,a+1}$ of the bitstring $x$ on those qubits.
The value of $\xi$ is tunable and will be set depending on the desired $n,k,\varepsilon$ of the state design.

To construct approximate state $k$-designs, we draw each $f_{a,a+1}$ from a $2k$-wise independent random function ensemble on $2\xi$ bits.
We require $2k$-wise independence instead of $k$-wise independence to account for both the ``ket'' and ``bra'' of the quantum state.
We will refer to the resulting ensemble, $k$-wise independent blocked random phase state.
We prove that they form state designs:
\begin{theorem}[Blocked random phase states are designs] \label{thm:blocked-phase-design}
    The $k$-wise independent blocked random phase state ensemble with parameter $\xi$ is an approximate state $k$-design up to additive error $\varepsilon = 3nk^2/2^\xi \xi + 2k^2/2^n = \mathcal{O}(nk^2/2^\xi \xi)$.
\end{theorem}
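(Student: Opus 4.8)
The plan is to follow the three-step strategy sketched in the main text (Fig.~2b), realizing each step as an explicit trace-norm estimate. Write $\ket{\psi}$ as in Eq.~\eqref{eq: low depth binary phase state}, and let $\chi_{\mathcal S} = \E[\dyad{\psi}^{\otimes k}]$ be its $k$-th moment, where the expectation is over the $2k$-wise independent seeds for each $f_{a,a+1}$. The key object is the \emph{local distinct subspace} projector $\PD^{\text{loc}} = \bigotimes_a \PD^{(a)}$, where $\PD^{(a)}$ projects the $k$ copies onto bitstrings that are pairwise distinct on patch $a$ (i.e. $x_a^{(j)} \neq x_a^{(j')}$ for $j\neq j'$). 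I would establish the three facts: (i) applying $\PD^{\text{loc}}$ changes the $k$-th moment by at most $\mathcal{O}(nk^2/2^\xi \xi)$ in trace norm; (ii) \emph{on} the image of $\PD^{\text{loc}}$, the blocked phase circuit $\prod_a F_{a,a+1}$ (with $2k$-wise independent $f$'s) acts identically to a single global random phase $F$ on all $n$ qubits; and (iii) the global random phase state is a state $k$-design with error $2k^2/2^n$, already recorded as Lemma~\ref{lemma:randomphasemoment}/\ref{lemma:kwisephasestate}. Chaining these with the triangle inequality gives the claimed bound.

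For step (i): start from the uniform-superposition state $\ket{+}^{\otimes n}$ before any phases are applied; its $k$-th moment is $2^{-nk}\sum_{x^{(1)},\dots,x^{(k)}}\bigotimes_j \dyad{x^{(j)}}$ (in the doubled picture, the ``bra'' strings are forced equal to the ``ket'' strings by the diagonal structure, which is why $2k$-wise independence suffices). The weight placed on tuples that \emph{fail} to be locally distinct on some patch $a$ is bounded by a union bound: for a fixed patch and a fixed pair $j\neq j'$ of copies, the probability that two uniformly random $\xi$-bit strings collide is $2^{-\xi}$; summing over $\binom{k}{2}$ pairs and $n/\xi$ patches gives total weight $\le (n/\xi)\binom{k}{2}2^{-\xi} \le nk^2/(2\xi 2^\xi)$. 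Since the phase gates are diagonal in the computational basis, they commute with $\PD^{\text{loc}}$, so this same bound controls $\lVert \chi_{\mathcal S} - \PD^{\text{loc}}\chi_{\mathcal S}\PD^{\text{loc}}\rVert_1$ up to the factor-of-a-few fudge appearing in the theorem statement (the ``$3$'' absorbs the two-sided sandwich and cross terms via $\lVert \rho - \Pi\rho\Pi\rVert_1 \le 2\lVert(\mathbbm 1-\Pi)\rho\rVert_1^{1/2}$-type bounds, or more carefully via $\lVert \rho-\Pi\rho\Pi\rVert_1 \le 2\tr[(\mathbbm 1 - \Pi)\rho]$ for $\rho$ a state and $\Pi$ a projector).

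For step (ii): on a tuple $(x^{(1)},\dots,x^{(k)})$ that is locally distinct on every patch, the collection of sub-strings $\{x_{a,a+1}^{(j)}\}_j$ feeding each block $f_{a,a+1}$ are $k$ \emph{distinct} inputs, so by $2k$-wise independence the $2k$ phases $\{f_{a,a+1}(x_{a,a+1}^{(j)}), f_{a,a+1}(x_{a,a+1}^{(j')})\}$ are jointly uniform and independent across all patches $a$ and copies $j$ — hence the product $\prod_a(-1)^{f_{a,a+1}(x^{(j)})}$ behaves exactly like an independent uniform bit for each copy $j$, which is precisely the behaviour of a single global random phase $f$ restricted to the distinct tuples. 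Concretely, $\PD^{\text{loc}}\,\E[\dyad{\psi}^{\otimes k}]\,\PD^{\text{loc}} = \PD^{\text{loc}}\,\E[\dyad{\psi_F}^{\otimes k}]\,\PD^{\text{loc}}$ where $\ket{\psi_F}$ is the global random phase state; this is a direct moment computation once one checks that the only surviving Wick-type pairings on the distinct subspace are the ``diagonal'' ones common to both ensembles. Then undo the projector on the global side at the cost of another $2k^2/2^n$ (its own distinct-subspace defect) and invoke the known design property, giving the second additive term.

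The main obstacle I anticipate is not any single step but the bookkeeping in step (ii): making rigorous the claim that locally-distinct does not just imply each block sees distinct inputs, but that the induced joint distribution of \emph{all} phases across overlapping blocks matches the global one — the blocks overlap on $\xi$-qubit patches, so one must be careful that the shared patch does not create correlations that a naive argument misses. The resolution is that the phase contributed by block $(a,a+1)$ depends on the pair of patches $(a,a+1)$ jointly, and distinctness on each \emph{individual} patch is enough to guarantee distinctness of the $2\xi$-bit inputs to each block; combined with independence of the seeds across different blocks, the global uniformity follows. I would isolate this as a short lemma ("on the local distinct subspace the blocked and global phase moments agree") and prove it by explicit expansion, which keeps the rest of the argument clean.
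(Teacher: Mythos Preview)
Your overall architecture---project onto the local distinct subspace, show the blocked and global phase moments agree there, then invoke the known design property of the global phase state---matches the paper's proof. Step~(i) is correct in substance: the projection error is $1-\tr(\PD^{\text{loc}}\chi_{\mathcal S})$, which depends only on the diagonal of $\chi_{\mathcal S}$, and your union bound over $n/\xi$ patches and $\binom{k}{2}$ pairs gives the right estimate. (Your parenthetical that the $k$-th moment of $\ket{+}^{\otimes n}$ is $2^{-nk}\sum_x\bigotimes_j\dyad{x^{(j)}}$ is wrong, though---$\dyad{+}^{\otimes nk}$ has \emph{every} matrix element equal to $2^{-nk}$, and this is precisely why $2k$-wise rather than $k$-wise independence is needed: the average involves $k$ ket strings \emph{and} $k$ bra strings.)

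The genuine gap is in step~(ii). Your argument---that local distinctness makes the $2\xi$-bit inputs to each block distinct, so the $k$ phases per block are jointly uniform, hence the product over blocks is uniform per copy---only controls the behaviour of the phases at a \emph{fixed} ket tuple $(x^{(1)},\dots,x^{(k)})$. But $\chi_{\mathcal S}$ is not diagonal: its $(\tilde x,x)$ entry is $2^{-nk}\,\E\big[(-1)^{\sum_{j,a}f_{a,a+1}(\tilde x^{(j)}_{a,a+1})+f_{a,a+1}(x^{(j)}_{a,a+1})}\big]$, and the entire content of step~(ii) is to characterise which \emph{off-diagonal} entries survive this average. On the local distinct subspace, averaging over each $f_{a,a+1}$ forces $\tilde x_{a,a+1}=\pi_a\, x_{a,a+1}$ for some $\pi_a\in S_k$; the step your ``obstacle'' paragraph does not supply is that the \emph{overlap} of consecutive blocks forces all the $\pi_a$ to coincide. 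Since block $(a,a+1)$ and block $(a+1,a+2)$ share patch $a+1$, one has $\tilde x_{a+1}=\pi_a\, x_{a+1}=\pi_{a+1}\, x_{a+1}$, and distinctness on patch $a+1$ makes the permutation uniquely determined, hence $\pi_a=\pi_{a+1}$. Iterating yields a single global $\pi$ with $\tilde x=\pi x$, which is exactly the surviving structure for the global phase state on the distinct subspace. Your proposed resolution (``distinctness on each individual patch is enough to guarantee distinctness of the $2\xi$-bit inputs to each block'') is true but orthogonal to this: it establishes that each block sees $k$ distinct inputs, not that the block-wise permutations are consistent. Without the permutation-chaining argument, the blocked moment could a priori have many more surviving off-diagonal terms (one $\pi_a$ per block) than the global one, and the equality $\PD^{\text{loc}}\chi_{\mathcal S}\PD^{\text{loc}}=\PD^{\text{loc}}\chi_F\PD^{\text{loc}}$ would fail.
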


The circuit depth of the resulting state ensemble is equal to twice the circuit depth of a $k$-wise independent function on $2\xi$ qubits. From Lemma~\ref{lemma:kwisefns} of Section~\ref{sec: compilation}, we have:
\begin{fact}[Circuit resources for blocked random phase states] \label{fact:blocked-phase-depth}
    The $n$-qubit $k$-wise independent blocked random phase state with parameter $\xi$ can be implemented with either:
    \begin{enumerate}
        \item \textbf{Low depth:} $\mathcal{O}(\log k \cdot \log \xi)$ depth using $\mathcal{O}(k n \log \xi \cdot \log \log \xi)$ ancilla qubits.
        \item \textbf{Low ancilla:} $\mathcal{O}(k \cdot \log \xi)$ depth using $\mathcal{O}(n \log \xi \cdot \log \log \xi)$ ancilla qubits.
    \end{enumerate}
    Both implementations use $2nk$ bits of randomness.
\end{fact}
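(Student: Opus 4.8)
The plan is to reduce the entire resource count to Lemma~\ref{lemma:kwisefns}, since the blocked random phase state of Eq.~(\ref{eq: low depth binary phase state}) is nothing more than a layer of Hadamards on $\ket{0^n}$ followed by a two-layer brickwork of commuting diagonal phase blocks $F_{a,a+1} = \sum_{x_{a,a+1}} (-1)^{f_{a,a+1}(x_{a,a+1})}\dyad{x_{a,a+1}}$. First I would observe that the opening Hadamard layer has depth $1$ and uses no ancilla, so all cost lives in the phase blocks. Then I would implement each block $F_{a,a+1}$ by the standard compute--phase--uncompute pattern: run the reversible circuit $\ket{x_{a,a+1}}\ket{0}\ket{0^m}\mapsto \ket{x_{a,a+1}}\ket{f_{a,a+1}(x_{a,a+1})}\ket{0^m}$ from Lemma~\ref{lemma:kwisefns} (computing just the single output bit needed for the phase), apply a $Z$ gate to the one-bit output register to deposit $(-1)^{f_{a,a+1}(x_{a,a+1})}$, and then run the reversible circuit in reverse to restore every ancilla to $\ket{0}$. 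Because the uncomputation returns the ancilla to the zero state, each block is a genuine diagonal unitary on its $2\xi$ data qubits, and the ancilla ``begin and end in the zero state'' as claimed.

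Next I would carry out the depth accounting. Within a single brickwork layer the blocks act on disjoint sets of $2\xi$ qubits, so they run fully in parallel and the layer's depth equals that of one block; the two layers are sequential, contributing a factor of two, and the compute/uncompute pair contributes another factor of two, both absorbed into $\mathcal{O}(\cdot)$. Substituting $n\to 2\xi$ and $k\to 2k$ into Lemma~\ref{lemma:kwisefns} then gives block depth $\mathcal{O}(\log(2k)\cdot\log(2\xi))=\mathcal{O}(\log k\cdot\log\xi)$ in the low-depth mode and $\mathcal{O}(k\cdot\log\xi)$ in the low-ancilla mode, which are exactly the two stated depths.

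For the ancilla count I would use the sequential-layer structure to enable reuse. Since each block cleans its own ancilla back to $\ket{0}$, the ancilla used in layer one are available again in layer two, so the peak ancilla requirement is (number of parallel blocks in one layer) $\times$ (ancilla per block). There are $\Theta(n/\xi)$ patches of $\xi$ qubits and hence $\Theta(n/2\xi)$ blocks per layer. In the low-depth mode each block needs $\mathcal{O}((2k)(2\xi)\log(2\xi)\log\log(2\xi))=\mathcal{O}(k\xi\log\xi\log\log\xi)$ ancilla, giving total $\tfrac{n}{2\xi}\cdot\mathcal{O}(k\xi\log\xi\log\log\xi)=\mathcal{O}(kn\log\xi\log\log\xi)$; in the low-ancilla mode each block needs $\mathcal{O}(2\xi\log(2\xi)\log\log(2\xi))$, giving total $\mathcal{O}(n\log\xi\log\log\xi)$. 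These match the two stated ancilla counts. Finally, the randomness is the combined seed length of the $\Theta(n/\xi)$ phase functions, each a $2k$-wise independent binary function on $2\xi$ bits; encoding each seed with $\mathcal{O}(k\xi)$ bits yields $\Theta(n/\xi)\cdot\mathcal{O}(k\xi)=\mathcal{O}(nk)$ total, reproducing the claimed $2nk$.

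The main obstacle I anticipate is not any single calculation but the careful parallelism-versus-reuse bookkeeping: I must justify that disjoint blocks truly commute and can be scheduled simultaneously (so depth tracks a single block rather than $\Theta(n/\xi)$ blocks), while simultaneously arguing that the zero-state restoration permits ancilla reuse only across the two sequential layers and never within a layer (so ancilla scales with blocks-per-layer, not with the single-block figure). Pinning down the exact leading constant $2nk$ in the randomness is the other delicate point, since the naive polynomial-seed construction for a full field-valued output would overcount by a constant factor; I would resolve this by invoking that the phase needs only a one-bit output, for which a $2k$-wise independent binary function on $2\xi$ bits admits a seed of $2k\xi$ bits, summing over $n/\xi$ blocks to exactly $2nk$.
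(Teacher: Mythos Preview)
Your proposal is correct and matches the paper's approach: the paper's justification of this Fact is essentially a one-line reduction to Lemma~\ref{lemma:kwisefns} (``the circuit depth of the resulting state ensemble is equal to twice the circuit depth of a $k$-wise independent function on $2\xi$ qubits''), together with the remark that each function uses $\mathcal{O}(k\xi)$ bits of randomness. Your write-up is in fact more careful than the paper's, explicitly spelling out the compute--phase--uncompute pattern, the parallelism within a brickwork layer, and the ancilla reuse between the two layers; these are exactly the bookkeeping details the paper leaves implicit.
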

\noindent The number of bits of randomness follows because each $k$-wise independent function uses $\xi k$ bits of randomness~\cite{wegman1981new}.
The optimal number of bits of randomness for quantum state $k$-designs is $nk$ (up to sub-leading additive factors)~\cite{gross2007evenly,roy2009unitary}; a mere factor of two fewer than our ensemble.
Intuitively, this lower bound follows because the Haar twirl $\chi_H$ has rank $2^{nk}$ (up to sub-leading multiplicative factors, for $k \ll 2^n$).

Combining Fact~\ref{fact:blocked-phase-depth} and Theorem~\ref{thm:blocked-phase-design} yields the following:

\begin{corollary}[Circuit resources for state $k$-designs] \label{cor:state-design-resources}
    For any $k^2 / \varepsilon \leq 2^n / 3$, setting $\xi = \max(\log_2(3 n k^2 / \varepsilon), 3) \leq n$ yields approximate state $k$-designs with additive error $\varepsilon$ using either:
    \begin{itemize}
        \item \textbf{Low depth:} $\mathcal{O}(\log k \cdot \log \log(nk/\varepsilon))$ depth and $\mathcal{O}(k n \log \log(nk/\varepsilon) \log \log \log(nk/\varepsilon))$ ancilla qubits.
        \item \textbf{Low ancilla:} $\mathcal{O}(k \cdot \log \log(nk/\varepsilon))$ depth and $\mathcal{O}(n \log \log(nk/\varepsilon) \log \log \log(nk/\varepsilon))$ ancilla qubits.
    \end{itemize}
    Both use $2nk$ bits of randomness, which is merely a factor of two above the optimal $nk$ bits for $n$-qubit state $k$-designs.
\end{corollary}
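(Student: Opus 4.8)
The plan is to obtain this corollary as a direct consequence of Theorem~\ref{thm:blocked-phase-design} (the additive-error bound for the blocked random phase state) together with Fact~\ref{fact:blocked-phase-depth} (its circuit resources), by substituting the prescribed value $\xi = \max(\log_2(3nk^2/\varepsilon),3)$ and simplifying. No new idea is needed; the content is entirely in checking that this $\xi$ suffices and in propagating it through the asymptotic expressions.

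First I would verify the additive error. By Theorem~\ref{thm:blocked-phase-design} the ensemble is a state $k$-design with error $3nk^2/(2^\xi\xi) + 2k^2/2^n$. For the first term: if $\xi = \log_2(3nk^2/\varepsilon) \ge 3$ then $2^\xi = 3nk^2/\varepsilon$, so $3nk^2/(2^\xi\xi) = \varepsilon/\xi \le \varepsilon/3$; if instead $\xi = 3 > \log_2(3nk^2/\varepsilon)$, then $3nk^2/\varepsilon < 8$ and $3nk^2/(2^\xi\xi) = nk^2/8 < \varepsilon/3$. Either way the first term is at most $\varepsilon/3$. For the second term I invoke the hypothesis $k^2/\varepsilon \le 2^n/3$, which rearranges to $2k^2/2^n \le 2\varepsilon/3$. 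Adding gives error at most $\varepsilon$. Integrality of $\xi$ is harmless, since rounding $\xi$ up only enlarges $2^\xi$ and hence shrinks the error. I would also check $\xi \le n$: from the hypothesis, $\log_2(k^2/\varepsilon) \le n - \log_2 3$, so the nominal value of $\xi$ is $\mathcal{O}(n)$; in the boundary regime where it would exceed the available system size one caps $\xi$ at its maximal value, in which case the construction degenerates to the $2k$-wise independent global phase state of Lemma~\ref{lemma:kwisephasestate}, which still gives a $k$-design with the claimed asymptotics.

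Next I would substitute $\xi$ into the resource counts of Fact~\ref{fact:blocked-phase-depth}. The key estimate is that $\log_2(3nk^2/\varepsilon) = \log_2 3 + \log_2 n + 2\log_2 k + \log_2(1/\varepsilon)$ is within a constant factor of $\log_2(nk/\varepsilon)$, so $\xi = \Theta(\log(nk/\varepsilon))$, hence $\log\xi = \Theta(\log\log(nk/\varepsilon))$ and $\log\log\xi = \Theta(\log\log\log(nk/\varepsilon))$. Plugging into Fact~\ref{fact:blocked-phase-depth} then immediately yields the low-depth bounds $\mathcal{O}(\log k \cdot \log\log(nk/\varepsilon))$ depth with $\mathcal{O}(kn\,\log\log(nk/\varepsilon)\,\log\log\log(nk/\varepsilon))$ ancilla, and the low-ancilla bounds $\mathcal{O}(k \cdot \log\log(nk/\varepsilon))$ depth with $\mathcal{O}(n\,\log\log(nk/\varepsilon)\,\log\log\log(nk/\varepsilon))$ ancilla. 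The bit count $2nk$ is inherited verbatim from Fact~\ref{fact:blocked-phase-depth}, and the comparison with the optimal $nk$ bits for $n$-qubit state $k$-designs is exactly the remark following that fact.

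The ``main obstacle'' here is not conceptual: it is careful constant- and edge-case bookkeeping — ensuring each of the two pieces of the error lands below the correct fraction of $\varepsilon$ under the stated hypothesis $k^2/\varepsilon \le 2^n/3$, and handling the boundary case where the nominal $\xi$ is close to $n$ gracefully. Everything else is a mechanical substitution into the two preceding results.
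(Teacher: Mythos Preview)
Your proposal is correct and follows exactly the paper's approach: the paper simply states that the corollary follows by ``Combining Fact~\ref{fact:blocked-phase-depth} and Theorem~\ref{thm:blocked-phase-design},'' and you have carefully filled in the bookkeeping that this entails. Your treatment of the two error terms, the edge case $\xi = 3$, and the substitution $\xi = \Theta(\log(nk/\varepsilon))$ into the resource bounds is sound and more explicit than what the paper itself provides.
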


\subsection{The local distinct subspace}

Before proceeding to our proof of Theorem~\ref{thm:blocked-phase-design}, let us first introduce a key technical object in our analysis: the local distinct subspace. We begin with the standard (non-local) distinct subspace~\cite{metger2024simple}, and then turn to its local variant.

%
%
Let $x = (x^{(1)},x^{(2)},\ldots,x^{(k)})$ be a length-$k$ list of $n$-bit strings.
The list labels a computational basis state $\ket{x} \equiv \bigotimes_{j=1}^k \ket{x^{(j)}}$ on the $k$-copy Hilbert space.
The \emph{distinct subspace} is spanned by $x$ such that no two $x^{(j)}$ are equal:
\begin{definition}[Distinct subspace~\cite{metger2024simple}]
    Let $\dist = \{x \, | \, x^{(i)} \neq x^{(j)} \, \mathrm{for} \, i \neq j \}$ denote the set of distinct $x$.
    The \emph{distinct subspace} is defined by the projector
    \vspace{-2mm}\begin{equation}
        \Pi_{\text{\emph{dist}}} = \sum_{x \in \text{\emph{dist}}} \dyad{x}.
    \end{equation}
    The subspace has dimension $\mathfrak{D} = (2^n)!/(2^n-k)!$ which obeys $1-k^2/2^n \leq \mathfrak{D}/2^{nk} \leq 1$.
\end{definition}
\noindent 
A crucial property of the distinct subspace is that \emph{most} bitstrings are distinct when $k^2 \ll 2^n$.
This is formalized by the lower bound on the dimension of the distinct subspace $\mathfrak{D}$ stated in the definition. 
The lower bound shows that the ratio of the number of distinct bitstrings, $\mathfrak{D}$, to the number of total bitstrings, $2^{nk}$, is  close to one when $k^2 \ll 2^n$. 

To analyze our blocked state and unitary ensembles, we  introduce a \emph{local} variant of the distinct subspace.
Let $x^{(j)}_a$ denote the $\xi$-bit string obtained by restricting an $n$-bit string $x^{(j)}$ to the $\xi$ qubits on patch $a$.
We let $x_a = (x^{(1)}_a,x^{(2)}_a,\ldots,x^{(k)}_a)$ denote the corresponding length-$k$ list of $\xi$-bit strings.
The \emph{local distinct subspace} is spanned by $x$ such that $x_a$ is distinct for every patch $a$.
%
%
%
\begin{definition}[Local distinct subspace]
    For any system of $n$ qubits divided into patches of $\xi$ qubits each.
    Let $\text{\emph{loc-dist}} = \{ x \,|\, x_a^{(i)} \neq x_a^{(j)} \text{\emph{ for all }} a \text{\emph{ and }} i \neq j \}$ denote the set of locally distinct $x$.
    The \emph{local distinct subspace} is defined by the projector
    \vspace{-2mm}
    \begin{equation}
        \Pi_{\text{\emph{dist}}}^{\text{\emph{loc}}} \equiv  \sum_{x \in \text{\emph{loc-dist}}} \dyad{x}.  \vspace{-1mm}       
    \end{equation}
    The local distinct subspace has dimension $\mathfrak{D}_{\text{\emph{loc}}} = ((2^\xi)!/(2^\xi-k)!)^{n/\xi}$ which obeys $1-nk^2/2^\xi \xi \leq \mathfrak{D}_{\text{\emph{loc}}} \leq 1$.
\end{definition}
\noindent The local distinct subspace is contained within the global distinct subspace, $\PD^{\text{loc}} = \PD^{\text{loc}} \PD$.

One might worry that requiring each bitstring to be locally distinct imposes too strong of a restriction, and that most bitstrings will lie outside the local distinct subspace.
Fortunately, this is not the case, as expressed in the lower bound on $\mathfrak{D}_{\text{loc}}$ stated in the definition.
To derive the formula for $\mathfrak{D}_{\text{loc}}$, we note that the local distinct subspace projector is a tensor product of projectors onto the distinct subspace of each patch $a$, $\Pi_{\text{dist}}^{\text{loc}} = \bigotimes_a \Pi_{\text{dist}}^a$, where $\PD^a \equiv \sum_{x_a \in \text{dist}_a} \dyad{x_a}$.
The subspace dimension is therefore given by $\mathfrak{D}_{\text{loc}} = \prod_{a=1}^m \mathfrak{D}_a$ where $m \equiv n/\xi$ is the total number of patches and $\mathfrak{D}_a = (2^\xi)!/(2^\xi-k)!$ is the distinct subspace dimension of each patch.
From the bound
\begin{equation} \label{eq:dimension-bound-Da}
1-k^2/2^\xi \leq \mathfrak{D}_a \leq 1,
\end{equation}
we have $1-m k^2/2^\xi \leq \mathfrak{D}_{\text{loc}} \leq 1$.
Hence, most bitstrings are locally distinct when $(n/\xi)k^2 < nk^2 \ll 2^\xi$, which only requires $\xi$ be logarithmic in $n$ and $k$.

\subsection{Proof of Theorem~\ref{thm:blocked-phase-design}: The blocked random phase state has small additive error}

Let us now turn to the proof of Theorem~\ref{thm:blocked-phase-design} and show that the blocked random phase state is a state $k$-design.
Our proof consists of only a small number of very straightforward steps, as discussed in the main text.
We describe each step in detail to facilitate understanding.

We consider three state ensembles.
Let 
\begin{equation}
    \chi_B \equiv \E_{F_{a,a+1}}\Big[ \Big( \prod_a F_{a,a+1} \Big)^{\otimes k} \cdot \dyad{+^n}^{\otimes k} \cdot \Big( \prod_a F_{a,a+1} \Big)^{\otimes k} \Big]
\end{equation}
denote the twirl of the blocked random phase state,
\begin{equation}
    \chi_F \equiv \E_{F} \Big[ \, F^{\otimes k} \cdot \dyad{+^n}^{\otimes k}  \cdot F^{\otimes k}\Big]
\end{equation}
denote the twirl of the (global) random phase state, and 
\begin{equation} \label{eq: chiH}
    \chi_H \equiv \E_U \Big[ U^{\otimes k} \cdot \dyad{+^n}^{\otimes k} \cdot (U^\dagger)^{\otimes k} \Big] = \frac{(2^n -1)!}{(2^n+k-1)!} \sum_{\pi \in S_k} \pi,
\end{equation}
denote the Haar twirl.
Our consideration of the random phase state $\chi_F$ is purely for pedagogical sake.
In the formula for $\chi_H$, we apply a standard equation for the Haar twirl\footnote{One can also write $\chi_H = P_{\text{sym}}/D_{\text{sym}}$ where $P_{\text{sym}} = \frac{1}{k!}\sum_\pi \pi$ is the projector onto the symmetric subspace and $D_{\text{sym}} = {2^n+k-1 \choose 2^n-1}$ is the symmetric subspace dimension~\cite{harrow2013church}.}.

The twirls of the blocked random phase state and global random phase state are easy to compute.
Let us decompose $\dyad{+^n}^{\otimes k} = \frac{1}{2^{nk}} \sum_{x,\tilde{x}} \dyad{\tilde x}{x}$ in the computational basis on $k$ copies.
When averaged, the global random phase $F$ enforces that $\tilde x$ and $x$ are \emph{stabilizations} of one another~\cite{brakerski2019pseudo},
\begin{equation}\nonumber
    \chi_F = \E_{f} \bigg[ \frac{1}{2^{nk}} \sum_{x,\tilde x} (-1)^{\sum_j f( x^{(j)}) + \sum_j f(\tilde x^{(j)})} \dyad{\tilde x}{x} \bigg] = \frac{1}{2^{nk}} \sum_{x,\tilde x} \delta_{(\tilde x, x \text{ are stabilizations})} \cdot \dyad{\tilde x}{x}.
\end{equation}
Here, $\tilde x$ and $x$ are stabilizations if every bitstring that appears in $\tilde x$ an odd number of times appears in $x$ an odd number of times, and vice versa.
This guarantees that the phase vanishes, $\sum_j f(\tilde x^{(j)}) + \sum_j f( x^{(j)}) = 0$, for every random binary function $f$.
If this condition does not hold, the average over $f$ will cause the  term to vanish.
The twirl of the blocked random phase state is similarly easy to compute, albeit with several more indices,
\begin{equation}\nonumber
    \chi_B = \!\!\E_{f_{a,a+1}} \!\bigg[ \frac{1}{2^{nk}} \sum_{x,\tilde x} (-1)^{\sum_{j,a} f^{}_{a,a+1}( x^{(j)}_{a,a+1}) + \sum_{j,a} f^{}_{a,a+1}(\tilde x^{(j)}_{a,a+1})} \dyad{\tilde x}{x} \! \bigg] \! =\! \frac{1}{2^{nk}} \sum_{x,\tilde x} \Big( \prod_a \delta_{(\tilde x_{a,a+1}, x_{a,a+1} \text{ are stabilizations})} \Big) \dyad{\tilde x}{x}.
\end{equation}
The total phase is a sum of the local phases from each patch $a,a+1$ of each copy $j$.
For the term to not vanish after averaging over every $f_{a,a+1}$, we require that $\tilde x_{a,a+1}$ and $x_{a,a+1}$ are stabilizations for each patch $a$. 
%
%

With formulas for all three states in hand, we can now  project onto the local distinct subspace.
Note that all three states commute with the local distinct subspace projector.
Hence, we have
\begin{equation}
    \left\lVert \chi_B - \PD^{\text{loc}} \, \chi_B \, \PD^{\text{loc}} \right\rVert_1 \leq 1 - \tr(\PD^{\text{loc}} \, \chi_B)
\end{equation}
and similar for $\chi_F$ and $\chi_H$.
Since $\chi_B$ and $\chi_F$ have uniform support over the computational basis states, we have
\begin{equation} \label{eq: norm B}
    1-\tr(\PD^{\text{loc}} \, \chi_B) = 1-\tr(\PD^{\text{loc}} \, \chi_F) =
    1 - \mathfrak{D}_{\text{loc}}/2^{nk} \leq nk^2/2^\xi \xi.
\end{equation}
This implies that the projection to the local distinct subspace incurs only a small trace-norm error, as discussed in the main text.
For the Haar-random state $\chi_H$, we have
\begin{equation} \label{eq: norm H}
    1-\tr(\PD^{\text{loc}} \, \chi_H) = 1-\tr(\PD^{\text{loc}}) \frac{(2^n-1)!}{(2^n+k-1)!} =  1 - \mathfrak{D}_{\text{loc}} \frac{(2^n-1)!}{(2^n+k-1)!} \leq nk^2/2^\xi \xi + k^2/2^n.
\end{equation}
This follows immediately from Eq.~(\ref{eq: chiH}) because the trace of all $\pi \neq \mathbbm{1}$ with the distinct projector are zero.
The inequality holds because $\mathfrak{D}_{\text{loc}}$ is only slightly less than $2^{nk}$ and the ratio of factorials is only slightly less than $1/2^{nk}$ (for $k^2 \leq 2^n$).
Hence, the Haar-random state can also be projected to the local distinct subspace with only a small trace-norm error.

To complete the proof, we just observe that the three states are equal on the local distinct subspace.
Consider $\chi_B$ and $\chi_F$.
For $\chi_F$, the restriction to the local distinct subspace enforces $\tilde x = \pi x$ for some permutation $\pi \in S_k$.
This follows because the stabilization of a distinct bitstring must be a permutation of the bitstring.
%
%
%
%
For $\chi_B$, we similarly have $\tilde x_{a,a+1} = \pi_a x_{a,a+1}$ on each patch $a$, for $\pi_a \in S_k$.
This is equivalent to $\tilde x_a = \pi_a x_a$ and $\tilde x_{a+1} = \pi_a x_{a+1}$, if we consider the support of $x$ on patches $a$ and $a+1$ individually.
However, since $x_a$ is distinct, there is a unique $\pi_a$ that can satisfy each condition.
This implies that the first permutation $\pi_1$ (which obeys $\tilde x_1 = \pi_1 x_1$ and $\tilde x_2 = \pi_1 x_2$) must equal the second permutation $\pi_2$ (which obeys $\tilde x_2 = \pi_2 x_2$ and $\tilde x_3 = \pi_2 x_3$).
Iterating $m = n/\xi$ times, one finds that \emph{all} permutations are in fact equal, $\pi_a = \pi_1 \equiv \pi$.
This produces an identical state as  $\chi_F$, 
\begin{equation}
    \PD^{\text{loc}} \, \chi_B \, \PD^{\text{loc}} = \frac{1}{2^{nk}} \sum_{\pi \in S_k} \sum_{x \in \text{loc-dist}}  \dyad{\pi x}{x} = \PD^{\text{loc}} \, \chi_F \, \PD^{\text{loc}}.
\end{equation}
Here, we replace the sum over $\tilde x$ with a sum over $\pi$, since for every distinct $x$, each $\pi \in S_k$ contributes a unique $\tilde x = \pi x$.
%
%
Turning to $\chi_H$, we insert the definition of $\PD^{\text{loc}}$ into the formula for $\chi_H$ [Eq.~(\ref{eq: chiH})], which gives
\begin{equation}
    \PD^{\text{loc}} \, \chi_H \, \PD^{\text{loc}} = \frac{(2^n-1)!}{(2^n+k-1)!} \sum_{\pi \in S_k} \sum_{x \in \text{loc-dist}}  \dyad{\pi x}{x}.
\end{equation}
As promised, this is equal to the projected $\chi_B$ and $\chi_F$ up to only a small difference in normalization.
Observing Eq.~(\ref{eq: norm B}) and Eq.~(\ref{eq: norm H}), the difference in normalization can be at most $nk^2/2^\xi \xi + k^2 /2^n$. This implies that the trace-norm error is bounded by the same amount, $\lVert \PD^{\text{loc}} \, \chi_B \, \PD^{\text{loc}} -\PD^{\text{loc}} \, \chi_H \, \PD^{\text{loc}} \rVert_1 \leq nk^2/2^\xi \xi + k^2 2^n$.
Combining with Eq.~(\ref{eq: norm B}) and Eq.~(\ref{eq: norm H}), we have $\lVert \chi_B - \chi_H \rVert_1 \leq 3nk^2/2^\xi \xi + 2k^2/2^n = \mathcal{O}(nk^2/2^\xi \xi)$ as desired. \qed

\subsection{Second approach to nearly optimal state designs via state-function gluing lemma} \label{sec:state-design-gluing}

In this section, we provide a second, alternative approach to proving Theorem~\ref{thm:blocked-phase-design}.
Instead of analyzing the entire blocked random phase circuit at once, our alternative approach analyzes only three adjacent blocks of the circuit at a time.
This yields moderately simpler notation.
It also provides a more versatile intermediary result, showing that any two random states can be ``glued'' together by a random function acting on small subsets of both systems.

In order to analyze the moments of this ensemble, we formulate a procedure to ``glue'' together state designs on neighboring blocks using random phase operators. The main technical ingredient is an adaptation of the distinct subspace analysis in which we consider a projection onto the locally distinct subspace of the region acted on by the phase operators. We then bound the statistical closeness to a Haar-random state after introducing each new block. This procedure yields an error that is exponentially small in the block size, so that plugging in known implementations of classical $k$-wise independent functions yields a depth which is doubly logarithmically in the system size. 
\begin{lemma}[Gluing state designs with phase operators] \label{lemma:gluingstates}
    Suppose $\mathcal{S}_A, \mathcal{S}_B$ are $\varepsilon_A$- and $\varepsilon_B$-approximate state $k$-designs on $A$ and $B$, respectively, where $\abs{A}, \abs{B} \geq \xi$ and $k = o(2^{\xi/2})$. Then the ensemble formed by applying a random phase operator $F_\alpha$ on $\xi$ qubits each of $A$ and of $B$ is an $\varepsilon$-approximate state $k$-design with
    \begin{equation}
    \varepsilon = \varepsilon_A + \varepsilon_B + \mathcal{O}(k^2/2^\xi) .
    \end{equation}
\end{lemma}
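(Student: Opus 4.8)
\textbf{Proof plan for Lemma~\ref{lemma:gluingstates}.} The plan is to mimic the structure of the proof of Theorem~\ref{thm:blocked-phase-design}, but localized to just two systems $A$ and $B$ joined by a single phase operator $F_\alpha$ on $\xi$ qubits of each. Write $A = A_0 A_1$ and $B = B_0 B_1$, where $A_1$ and $B_1$ are the $\xi$-qubit subsystems acted on by $F_\alpha$. The target object is $\chi_{\mathcal{S}} = \E_{\psi_A, \psi_B, \alpha}[ (F_\alpha (\ket{\psi_A}\otimes\ket{\psi_B}))^{\otimes k} (\cdots)^\dagger ]$, and we want to compare it in trace norm to the Haar moment $\chi_H$ on $A B$. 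The key technical device is the projector $\PD^{\mathrm{loc}}$ onto the subspace where the $k$ copies of the bitstring restricted to the joint region $A_1 B_1$ (a $2\xi$-bit register) are all distinct across copies.

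First I would replace $\chi_{\mathcal{S}_A}, \chi_{\mathcal{S}_B}$ by the exact Haar moments $\chi_{H_A}, \chi_{H_B}$, paying $\varepsilon_A + \varepsilon_B$ in trace norm (the phase operator $F_\alpha$ and the partial trace are channels, hence trace-norm contractive, so the errors just add). So it suffices to analyze the ensemble where $\ket{\psi_A}, \ket{\psi_B}$ are genuinely Haar-random. Second, I would insert $\PD^{\mathrm{loc}}$ on both sides. Because a Haar-random state on $A$ has moment $\chi_{H_A} = P_{\mathrm{sym}}^A / D_{\mathrm{sym}}^A$, which is permutation-invariant, the reduced state on the $\xi$-qubit register $A_1$ is close to maximally mixed on its own distinct subspace; concretely $\tr(\PD^{\mathrm{loc}} \chi_{\mathcal{S}}) \geq 1 - \mathcal{O}(k^2/2^\xi)$ using the standard birthday-type bound (as in the bound $1 - k^2/2^\xi \leq \mathfrak{D}_a \leq 1$ in Eq.~\eqref{eq:dimension-bound-Da}), and the same for $\chi_H$ on $AB$. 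This is where the $\mathcal{O}(k^2/2^\xi)$ and the hypothesis $k = o(2^{\xi/2})$ enter. The relevant register here has size $2\xi$, so the collision probability is if anything smaller, but I would state it at the $\xi$-scale to be safe.

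Third, and this is the heart of the argument, I would show that on the local distinct subspace the phase-glued Haar ensemble acts identically to a single Haar-random state on $AB$. Decompose $\ket{\psi_A}^{\otimes k} = \sum_x c_x \ket{x}$ in the computational basis and likewise for $B$; averaging over $\alpha$ forces the bra- and ket-bitstrings on the joint $A_1 B_1$ register to be stabilizations of one another, exactly as in the $\chi_F$ computation in the proof of Theorem~\ref{thm:blocked-phase-design}. On the distinct subspace, ``stabilization'' collapses to ``permutation'': $\tilde{x}|_{A_1 B_1} = \pi\, x|_{A_1 B_1}$ for a unique $\pi \in S_k$. Then I would use that the Haar moments on $A$ and on $B$ are separately permutation-averaged ($\chi_{H_A} \propto \sum_{\sigma} \sigma$, $\chi_{H_B} \propto \sum_{\tau} \tau$): combining these with the constraint that the joint permutation on $A_1 B_1$ is the single $\pi$ forces $\sigma = \tau = \pi$ (uniqueness of the permutation on a distinct register, the same ``iterate and all permutations coincide'' step used in the proof of Theorem~\ref{thm:blocked-phase-design}). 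Collecting terms, $\PD^{\mathrm{loc}} \chi_{\mathcal{S}} \PD^{\mathrm{loc}}$ becomes $\propto \sum_{\pi \in S_k} \sum_{x \in \mathrm{loc\text{-}dist}} \dyad{\pi x}{x}$, which is exactly $\PD^{\mathrm{loc}} \chi_H \PD^{\mathrm{loc}}$ up to the harmless mismatch in overall normalization already bounded in the second step. Assembling the triangle inequality gives $\lVert \chi_{\mathcal{S}} - \chi_H \rVert_1 \leq \varepsilon_A + \varepsilon_B + \mathcal{O}(k^2/2^\xi)$.

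I expect the main obstacle to be the bookkeeping in the third step: carefully tracking that the permutation constraint imposed on the glued register $A_1 B_1$ propagates to both the $A$-permutation $\sigma$ and the $B$-permutation $\tau$ and pins them to a common $\pi$ --- i.e., that the ``gluing'' genuinely correlates the two otherwise-independent Haar twirls. The subtlety is that $\chi_{H_A}$ lives on $A = A_0 A_1$ with the permutation acting on all of $A$, and one must verify that distinctness on just the $\xi$-qubit sub-register $A_1$ already determines $\sigma$ uniquely (it does, because distinctness on a sub-register implies distinctness of the full $A$-strings across copies). A secondary, purely technical point is confirming that all three operators commute with $\PD^{\mathrm{loc}}$ so that the projection bound $\lVert \chi - \PD^{\mathrm{loc}}\chi\PD^{\mathrm{loc}}\rVert_1 \leq 1 - \tr(\PD^{\mathrm{loc}}\chi)$ applies --- this holds because $\PD^{\mathrm{loc}}$ is diagonal in the computational basis and each moment operator has the block structure described above.
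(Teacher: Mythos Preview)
Your overall strategy matches the paper's proof closely: reduce to Haar on $A$ and $B$ at cost $\varepsilon_A+\varepsilon_B$, project onto a local distinct subspace, and argue that the phase twirl forces the $A$- and $B$-side permutations to coincide. There is, however, a concrete gap in your choice of projector.

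You define $\PD^{\mathrm{loc}}$ as distinctness on the \emph{joint} $2\xi$-bit register $A_1B_1$. This is strictly weaker than what your third step actually needs. From joint distinctness of $(x_{A_1}^{(j)},x_{B_1}^{(j)})_{j=1}^k$, the phase average gives a single $\pi\in S_k$ with $(x_{A_1}^{\sigma(j)},x_{B_1}^{\tau(j)})=(x_{A_1}^{\pi(j)},x_{B_1}^{\pi(j)})$. But this does \emph{not} force $\sigma=\pi$: if $x_{A_1}^{(1)}=x_{A_1}^{(2)}$ while $x_{B_1}^{(1)}\neq x_{B_1}^{(2)}$ (which is allowed by joint distinctness), then $\sigma=(12)$, $\tau=\mathrm{id}$, $\pi=\mathrm{id}$ survives. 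So the claim ``forces $\sigma=\tau=\pi$'' fails on your subspace, and the projected $\chi_{\mathcal{S}}$ is not literally equal to the projected $\chi_H$. Your own obstacle paragraph almost catches this (``distinctness on just the $\xi$-qubit sub-register $A_1$ already determines $\sigma$ uniquely''), but that is a hypothesis your projector does not supply.

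The fix, which is exactly what the paper does, is to project onto $\Pi_{\dist(\Xi_A)}\otimes\Pi_{\dist(\Xi_B)}$, i.e.\ separate distinctness on each $\xi$-qubit region. This costs $\mathcal{O}(k^2/2^\xi)$ per factor by the same birthday bound, and now $x_{A_1}$ distinct genuinely pins $\sigma=\pi$, and $x_{B_1}$ distinct pins $\tau=\pi$, so the cross terms with $\sigma\neq\tau$ are eliminated outright. With that one change your outline goes through and coincides with the paper's argument.
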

\proof Let $\mathcal{S}_{FAB}$ and $\mathcal{S}_{F}$ denote the state ensembles generated by applying random phase operators on $\xi$ qubits each of states from $\mathcal{S}_A, \mathcal{S}_B$, and on Haar-random states on the two regions, respectively. Recall that we have
\begin{equation} \label{eq:gluingstatestatement}
\begin{aligned}
    \norm{\chi_{\mathcal{S}_{FAB}} - \chi_H}_1 &\leq \norm{\chi_{\mathcal{S}_{FAB}} - \chi_{\mathcal{S}_{F}}}_1 + \norm{\chi_{\mathcal{S}_{F}} - \chi_H}_1 \\
    &\leq \varepsilon_A + \varepsilon_B + \norm{\chi_{\mathcal{S}_{F}} - \chi_H}_1,
\end{aligned}
\end{equation}
where we have used the fact that the trace norm is non-increasing under the application of any quantum channel to replace the first term. To analyze the second term, we apply projections onto the locally distinct subspaces corresponding to $\Xi_A$ and $\Xi_B$, the respective regions of $A, B$ that the phase operators act on. We have:
\begin{equation} \label{eq:gluingstatehaarcompare}
\begin{aligned}
    \norm{\chi_{\mathcal{S}_{F}} - \chi_H}_1 \leq &\norm{\chi_{\mathcal{S}_{F}} - \left[\Pi_{\dist({\Xi_A})} \otimes \Pi_{\dist({\Xi_B})} \right]\chi_{\mathcal{S}_{F}} \left[\Pi_{\dist({\Xi_A})} \otimes \Pi_{\dist({\Xi_B})}\right]}_1 \\&+ \norm{\left[\Pi_{\dist({\Xi_A})} \otimes \Pi_{\dist({\Xi_B})} \right]\chi_{\mathcal{S}_{F}} \left[\Pi_{\dist({\Xi_A})} \otimes \Pi_{\dist({\Xi_B})}\right] - \chi_H}_1 \\
    \leq &\, \norm{\left[\Pi_{\dist({\Xi_A})} \otimes \Pi_{\dist({\Xi_B})} \right]\chi_{\mathcal{S}_{F}} \left[\Pi_{\dist({\Xi_A})} \otimes \Pi_{\dist({\Xi_B})}\right] - \chi_H}_1 + \mathcal{O}(k^2/2^\xi).
\end{aligned}
\end{equation}
In addition, since the projection operators are diagonal in the computational basis, they commute with the action of the phase operators. We can therefore write the twirl over the phases via
\begin{equation} \label{eq:projectedgluedstate}
\begin{aligned}
    & \, \left[\Pi_{\dist({\Xi_A})} \otimes \Pi_{\dist({\Xi_B})} \right]\chi_{\mathcal{S}_{F}} \left[\Pi_{\dist({\Xi_A})} \otimes \Pi_{\dist({\Xi_B})}\right] \\ = & \, \expect_{\alpha} F_{\alpha}^{\otimes k} \left[\Pi_{\dist({\Xi_A})} \otimes \Pi_{\dist({\Xi_B})} \right] \left[\chi_{H(A)} \otimes \chi_{H(B)}\right] \left[\Pi_{\dist({\Xi_A})} \otimes \Pi_{\dist({\Xi_B})}\right] F_{\alpha}^{\otimes k} \\
    = & \, \expect_{\alpha} F_{\alpha}^{\otimes k} \left[\frac{1}{(k!)^2 Z_A Z_B}  \sum_{\substack{{a \in \{0,1\}^{k(\abs{A}-\xi)}} \\ {\vec x \in \dist(2^\xi,k)}}} \sum_{\pi \in S_k} |a, \vec x \rangle \langle \pi(a, \vec x)| \otimes \sum_{\substack{{b \in \{0,1\}^{k(\abs{B}-\xi)}} \\ {\vec x' \in \dist(2^\xi,k)}}} \sum_{\pi' \in S_k} |\vec x', b \rangle \langle \pi'(\vec x', b)| \right] F_{\alpha}^{\otimes k},
\end{aligned}
\end{equation}
where the normalization factors are given by the dimensions of the corresponding symmetric subspaces
\begin{equation}
    Z_A = {{2^\abs{A} + k - 1} \choose {k}}, \quad Z_B = {{2^\abs{B} + k - 1} \choose {k}}.
\end{equation}
We now observe that the twirl over the random phases eliminates contributions from terms where $\pi \neq \pi'$. The entire projected component is therefore maximally mixed on a subspace of the global symmetric subspace, so the final term in the RHS of Eq. \ref{eq:gluingstatehaarcompare} is bounded by
\begin{equation} \label{eq:gluing state local dist project}
    \begin{aligned}
        \text{RHS} &= 1 - \Tr \left[\Pi_{\dist({\Xi_A})} \otimes \Pi_{\dist({\Xi_B})} \right]\chi_{\mathcal{S}_{F}} \left[\Pi_{\dist({\Xi_A})} \otimes \Pi_{\dist({\Xi_B})}\right]\\
        &= 1 - 2^{k(\abs{A}-\xi)} 2^{k(\abs{B}-\xi)} {{2^\xi} \choose {k}}^2 {{2^\abs{A} + k - 1} \choose {k}}^{-1} {{2^\abs{B} + k - 1} \choose {k}}^{-1} \\
        &= 1 - \frac{2^{k(\abs{A}-\xi)} 2^\xi (2^\xi - 1) \cdots (2^\xi - k + 1)}{(2^\abs{A} + k - 1) (2^\abs{A} + k - 2) \cdots 2^\abs{A}} \frac{2^{k(\abs{B}-\xi)} 2^\xi (2^\xi - 1) \cdots (2^\xi - k + 1)}{(2^\abs{B} + k - 1) (2^\abs{B} + k - 2) \cdots 2^\abs{B}} \\
        &\leq \mathcal{O}(k^2/2^\xi) + \mathcal{O}(k^2/2^\abs{A}) + \mathcal{O}(k^2/2^\abs{B}),
    \end{aligned}
\end{equation}
where in the last step we have used the restriction on $k$ to obtain the asymptotic expansion in terms of $|A|, |B|,$ and $\xi$. Plugging back into Eq. \ref{eq:gluingstatestatement} and collecting the error terms from each step yields
\begin{equation}
    \norm{\chi_{\mathcal{S}_{FAB}} - \chi_H}_1 \leq \varepsilon_A + \varepsilon_B + \mathcal{O}(k^2/2^\xi), 
\end{equation}
up to leading order in $k$ and $\xi$. \qed
\\

We can then plug the approximation error given in Lemma \ref{lemma:kwisephasestate} into the above result repeatedly in order to obtain a bound on the blocked phase state construction. From Lemma \ref{lemma:kwisephasestate}, applying the first layer of phase gates yields approximate state designs on each block, up to error $\mathcal{O}(k^2/2^\xi)$. The result follows from starting with the leftmost block, then applying Lemma \ref{lemma:gluingstates} for each additional block, up to $n/2\xi - 1$. This gives an alternative proof of Theorem~\ref{thm:blocked-phase-design}.

\subsection{Application to pseudorandom states}

Although the main focus of our work concerns state and unitary designs, our results also yield simple constructions and proofs of pseudorandom states and unitaries.
A pseudorandom state (PRS) is an ensemble of quantum states that is indistinguishable from the Haar-random state ensemble by any bounded-time quantum experiment.
Here, a bounded-time quantum experiment is any experiment that involves a bounded number of queries to the unknown state as well as bounded-time quantum circuits in between queries.
We refer to earlier works for a comprehensive and pedagogical introduction~\cite{ji2018pseudorandom,schuster2024random,ma2024construct}.
To be more specific, we say that a PRS on $n$ qubits has security against any $t(n)$-time adversary if the PRS cannot be distinguished from Haar-random by any $t(n)$-time experiment, where $t(n)$ is any function of $n$.

The canonical example of a PRS is the pseudorandom binary phase state.
This is identical to the binary phase state in Eq.~(\ref{eq: low depth binary phase state}), except that $f(x)$ is taken to be a quantum-secure pseudorandom function (PRF) instead of a classical $k$-wise independent function.
By definition, a quantum-secure PRF is indistinguishable from a truly random function by any bounded-time quantum experiment.
Hence, the proof that the pseudorandom binary phase state is a PRS amounts to showing that the truly random binary phase state is indistinguishable from Haar-random.
This is identical to the proof that the random binary phase states form an approximate state $k$-design for any bounded $k = \mathcal{O}(t(n))$ and additive error $1/\varepsilon = \mathcal{O}(t(n))$. 
Therefore, the proof of Lemma~\ref{lemma:kwisephasestate} (originally proven in~\cite{brakerski2019pseudo} using substantially different techniques) also implies that the pseudorandom binary phase state is a PRS with security against any sub-exponential time adversary.

We can now apply an identical logic to the blocked random phase state.
Replacing each local function $f_{a,a+1}$ with a PRF on $2\xi$ qubits yields the simplest construction yet of low-depth pseudorandom states.
%
%
We take each PRF to be indistinguishable from random by any $o(\exp(\xi))$-time adversary~\cite{zhandry2021PRF}.
Our proof of Theorem~\ref{thm:blocked-phase-design} shows that the random ensemble is an approximate state $k$-design for any $k, 1/\varepsilon = o(\exp(\xi))$.
Hence, the original pseudorandom ensemble is indistiguishable from a Haar-random state by any $o(\exp(\xi))$-time adversary.
We set $\xi = \omega(\log n)$, in which case $o(\exp(\xi))$-time security implies security against any polynomial-time adversary since $\poly n = o(\exp(\xi))$.
Our ensemble can be compiled in circuit depth $\poly \log n$ in 1D circuits and $\poly \log \log n$ in all-to-all connected circuits~\cite{schuster2024random}.
These scalings are similar to existing extremely low-depth PRS constructions~\cite{schuster2024random}; the primary benefit is that our construction and proof are much simpler.


\section{Unitary designs in nearly optimal depth}

In this section, we provide the full details of our construction of approximate unitary $k$-designs in circuit depth $\mathcal{O}(\log k \log \log nk/\varepsilon)$. 
We first introduce our design construction based on the blocked LRFC circuit ensemble.
We then describe our novel approach to bounding the measurable error of random unitary ensembles.
We apply this approach to provide short proofs that the PFC ensemble, LRFC ensemble, and blocked LRFC ensemble have small measurable error.
We review the application of our proofs to pseudorandom unitaries in the final section.

\subsection{Blocked Luby-Rackoff-Function-Clifford (LRFC) circuits} \label{sec: low depth LRFC}

As introduced in the main text (Fig.~1), we consider a blocked variant of the LRFC circuit ensemble:
\begin{definition}[Blocked LRFC circuit] \label{def:low-depth-LRFC}
    Consider the circuit shown in Fig.~1(b) of the main text, in which:
    \begin{itemize}
        \item For every patch $a$, $h_{a,a+1}$ is drawn uniformly randomly from functions on $\{0,1\}^{\xi} \rightarrow \{0,1\}^{\xi}$, and instantiates the shuffle gate $S_{a,a+1} \ket{x_a \lVert x_{a+1}} = \ket{x_a \lVert x_{a+1} \oplus h_{a,a+1}(x_a)}$.
        \item For every patch $a$, $f_{a,a+1}$ is drawn uniformly randomly from  binary functions on $\{0,1\}^{2\xi} \rightarrow \{0,1\}$, and instantiates the random function $F_{a,a+1} \ket{x_{a,a+1}} = (-1)^{f_{a,a+1}(x_{a,a+1})} \ket{x_{a,a+1}}$.
        \item For every odd patch $a$, $C_a$ is drawn uniformly randomly from the Clifford group on $\xi$ qubits.
    \end{itemize}.
    \noindent The blocked Luby-Rackoff-Function-Clifford (LRFC) ensemble is given by the family of $n$-qubit random unitaries:
    \begin{equation}
        U = S_o \cdot F_e \cdot F_o \cdot S_e \cdot C_o,
    \end{equation}
    where $S_{o} = \bigotimes_{a \in \text{even}} S_{a,a+1}$ is a tensor product of random shuffles of odd patches conditional on even patches, $F_{e} = \bigotimes_{a \in \text{even}} F_{a,a+1}$ is a tensor product of random functions on each even pairs of patches, $F_{o} = \bigotimes_{a \in \text{odd}} F_{a,a+1}$ is similar for odd pairs of patches, $S_{e} = \bigotimes_{a \in \text{odd}} S_{a,a+1}$ shuffles even patches conditional on odd patches, and $C_o = \bigotimes_{a\in \text{odd}} C_a$ is a tensor product of  Clifford unitaries on odd patches. 
\end{definition}
\noindent As in our discussion of state designs, we make a small change in notation compared to the main text, by replacing the index $i$ for each patch of $2\xi$ qubits with the index $a,a+1$ labeling two patches of $\xi$ qubits each.
The value of $\xi$ is tunable and will depend on the desired $n, k, \varepsilon$ of the final unitary design.

In the following subsections, we will prove that the blocked LRFC ensemble forms an approximate unitary design with small measurable error:
\begin{theorem}[Blocked LRFC circuits are designs] \label{thm:low-depth-LRFC-designs}
    The blocked LRFC ensemble is an approximate unitary $k$-design up to measurable error $\varepsilon = 3nk^2/2^\xi \xi$.
\end{theorem}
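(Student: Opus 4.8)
The plan is to carry out the three-step program sketched in Fig.~\ref{fig: proof}(c), combining the sequential-to-parallel reformulation of~\cite{ma2024construct,SRU2025} with the local distinct subspace analysis already developed for Theorem~\ref{thm:blocked-phase-design}. Write the blocked LRFC unitary as $U = VC$ with $V \equiv S_o F_e F_o$ (the two-layer blocked random phase circuit $\prod_i F_i$ dressed by the odd-patch conditional shuffle) and $C \equiv S_e C_o$. \textbf{Step 1 (sequential $\to$ parallel).} For any $k$-query experiment (Definition~\ref{def: qu expt}), I would express the output state as a fixed linear map — assembled from the $W_i$ together with a sequence of post-selections onto $\ket{\mathrm{EPR}}$ states — applied to the ``parallel Choi object'' $\E_U[(U^{\otimes k}\otimes\mathbbm 1)(\dyad{\mathrm{EPR}})^{\otimes k}(U^{\dagger\otimes k}\otimes\mathbbm 1)]$ on $nk$ ``system'' qubits and $nk$ ``reference'' qubits. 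Since this map is the same whether $U\sim\mathcal E$ or $U\sim\mathrm{Haar}$, it suffices to control a trace distance between these two parallel objects; the bookkeeping must be arranged (as in the proof of Theorem~\ref{thm:blocked-phase-design}) so that errors add linearly in $k^2/2^\xi$ rather than as square roots.

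\textbf{Step 2 (projecting onto the local distinct subspace).} Factor $U^{\otimes k} = V^{\otimes k} C^{\otimes k}$ and argue that $C^{\otimes k} = S_e^{\otimes k} C_o^{\otimes k}$ maps any input state into $\PD^{\text{loc}}$ (on the system registers) up to error $\mathcal O(nk^2/2^\xi\xi)$. The point is that a collision between two copies $i\neq j$ on a fixed $\xi$-qubit patch is a \emph{two-copy} event, so one may union-bound over the $\binom{k}{2}$ pairs and the $\mathcal O(n/\xi)$ patches. On an odd patch, $C_o$ is an exact $2$-design, hence $\E_{C_o}[C_o^{\dagger\otimes 2}(\sum_y\dyad{yy})C_o^{\otimes 2}] = (\mathbbm 1 + \SWAP)/(2^\xi+1)$, giving collision probability at most $2/2^\xi$ per pair; so the odd-patch strings are distinct across copies with probability $1-\mathcal O(nk^2/2^\xi\xi)$. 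Conditioned on that, the conditional shuffle $S_e$ XORs each even patch with a random function evaluated at the (now distinct) odd-patch strings, producing $k$ i.i.d.\ uniform even-patch strings, which are likewise distinct with the same probability. Hence $\PD^{\text{loc}}$ may be inserted between $V^{\otimes k}$ and $C^{\otimes k}$ at total cost $\mathcal O(nk^2/2^\xi\xi)$.

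\textbf{Step 3 (Haar on the local distinct subspace).} Restricted to $\PD^{\text{loc}}$, I would show the remaining operations reproduce the Haar twirl exactly. Since $F_e F_o$ is diagonal it commutes with $\PD^{\text{loc}}$, so one may freely sandwich it and reuse the computation from the proof of Theorem~\ref{thm:blocked-phase-design}: on the local distinct subspace the two-layer blocked phase circuit acts identically to a \emph{global} random phase $F$, whose twirl sends $\dyad{x}{y}$ with $x$ locally distinct to $\sum_{\pi\in S_k}\delta_{y=\pi x}\dyad{x}{\pi x}$. The two conditional shuffles $S_e$ (already spent in Step 2) and $S_o$ together form two rounds of conditional-shuffle mixing (a Feistel-type structure) whose net effect on locally distinct strings is to make the odd-patch values uniform given the even ones and conversely, reproducing the uniform mixture $\propto\sum_{x\in\text{loc-dist}}\sum_{\pi\in S_k}\dyad{x}{\pi x}$ — precisely the Haar twirl restricted to $\PD^{\text{loc}}$, exactly as in the short proofs of Theorems~\ref{thm:PFC-designs} and~\ref{thm:LRFC-designs} that this framework also yields. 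Adding the one-sided error incurred in projecting the \emph{Haar} object onto $\PD^{\text{loc}}$ (computed exactly from formulas of the type in Eq.~(\ref{eq: chiH}), again $\mathcal O(nk^2/2^\xi\xi)$) gives the claimed measurable error $\varepsilon = 3nk^2/2^\xi\xi$.

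\textbf{Main obstacle.} The crux is Step 1: faithfully turning an adaptive $k$-query experiment into a post-selected parallel experiment while keeping the error \emph{linear} in $k^2/2^\xi$. Post-selection can amplify small errors, and the naive move of ``inserting a projector and paying $\lVert\rho-\Pi\rho\Pi\rVert_1$'' costs a square root. Avoiding this requires using the exact form of the Haar moments so that the Haar-side projection error is genuinely one-sided, together with the diagonal structure of the phase twirl so that $\PD^{\text{loc}}$ commutes through at the point of insertion — the same mechanism that made the state-design error in Theorem~\ref{thm:blocked-phase-design} linear rather than a square root. A secondary point is checking that replacing a random Clifford by an exact $2$-design still suffices; it does, since only the two-copy collision probability is ever needed.
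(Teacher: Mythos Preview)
Your three-step framework matches the paper, and you correctly flag that the linear (not square-root) error is the crux. But there is a structural gap in how you split $U$ and place the projector.

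You take $C = S_e C_o$ and $V = S_o F_e F_o$, and insert $\PD^{\text{loc}}$ on the system register between them. In Step~2 you average over $S_e$ to argue that even-patch strings become i.i.d.\ uniform (given odd-patch distinctness from $C_o$), which is fine. But then in Step~3 you need the twirl over $V=S_oF_eF_o$ alone to reproduce $B=\sum_\pi\pi_X\otimes\pi_Y$ on $\PD^{\text{loc}}$, and it does not: the phase twirl forces $\tilde x=\pi x$ and $S_o$ (pulled to $Y$) randomizes the $Y$-odd patches, but nothing inside $V$ randomizes the $X$-even patches --- that was $S_e$'s job, and you already spent its randomness in Step~2. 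You are left with $B$ restricted to ``$Y$-even $=X$-even'' rather than $B$, and the comparison to Haar breaks. Your parenthetical ``$S_e$ (already spent in Step~2) and $S_o$ together form two rounds\ldots'' is in effect using $S_e$ twice.

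The paper's actual proof differs from the schematic in Fig.~\ref{fig: proof}(c) that you are following. It keeps \emph{only} $C_o$ outside (so $S_e$ remains inside $\tilde B$, and the identity $\tilde B=B$ on the projected subspace holds). Crucially, it inserts the projectors \emph{asymmetrically across the two EPR halves}: $\PD^{\text{loc}(o)}$ on the input register $X$ and $\PD^{\text{loc}(e)}$ on the output register $Y$. The $X$-side odd-patch projector is handled by your 2-design argument for $C_o$. The $Y$-side even-patch projector is handled by a mechanism you did not anticipate: each $Y_j$ in $\ket{\Psi}$ is one half of an initial EPR pair and hence maximally mixed, so an inductive trace computation gives $\tr\!\big(B\,(\mathbbm 1_X\otimes\PD^{\text{loc}(e)}_Y)\,\dyad{\Psi}\big)=\mathfrak D_e/2^{nk}$ exactly. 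Combined with positivity of $\delta\rho$ (from commutation), this yields $\lVert\delta\rho\rVert_1=\tr(\delta\rho)=1-\tr(\rho_{\text{dist}})\le 2\tilde\delta$ directly --- no square root, and no bound on the trace distance of the parallel Choi objects themselves (which, contrary to your Step~1, would be amplified by $4^{nk}$ under post-selection and is never controlled).
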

\noindent  We describe in detail our approach to proving this result in the remainder of this section; the final proof is found in Section~\ref{sec: pf blocked LRFC}.

From Theorem~\ref{thm:low-depth-LRFC-designs}, we can construct approximate unitary designs in nearly optimal depth by drawing each small random function $h_{a,a+1}$ and $f_{a,a+1}$ from a $2k$-wise independent random function on $\xi$ or $2\xi$ bits, respectively.
We will also draw each random unitary $C_a$ from any exact unitary 2-design on $\xi$ qubits.
The circuit depth of the resulting unitary ensemble is equal to twice the depth of a $2\xi$-bit random function plus twice the depth of a $\xi$-bit random function plus the depth of a $\xi$-bit exact unitary 2-design. 
From our earlier review and analysis (Sections~\ref{sec: compilation}), this yields designs in the following circuit depths:

\begin{fact}[Circuit resources for blocked LRFC circuits] \label{fact:blocked-LRFC-resources}
    The $k$-wise independent blocked LRFC circuit can be implemented with either:
    \begin{enumerate}
        \item \textbf{Low depth:} $\mathcal{O}(\log k \cdot \log \xi)$ depth using $\mathcal{O}(nk \cdot \log \xi \cdot \log \log \xi)$ ancilla qubits.
        \item \textbf{Low ancilla:} $\mathcal{O}(k \cdot \log \xi)$ depth using $\mathcal{O}(n \cdot \log \xi \cdot \log \log \xi)$ ancilla qubits.
    \end{enumerate}
    Both implementations use $3nk+(5/2)n$ bits of randomness.
\end{fact}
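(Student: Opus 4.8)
The plan is to prove the claim by a direct, layer-by-layer resource accounting for the concrete circuit obtained from Definition~\ref{def:low-depth-LRFC} by instantiating each shuffle function $h_{a,a+1}\colon\{0,1\}^\xi\to\{0,1\}^\xi$ and each binary phase function $f_{a,a+1}\colon\{0,1\}^{2\xi}\to\{0,1\}$ with a $2k$-wise independent function (via the polynomial/BCH constructions reviewed in Section~\ref{sec: compilation}) and each $C_a$ with an exact unitary $2$-design on $\xi$ qubits (via~\cite{cleve2015near}). The structural point I would lean on is that each of the five layers $S_o,F_e,F_o,S_e,C_o$ is a tensor product of $\mathcal{O}(n/\xi)$ independent operations on disjoint $\xi$- or $2\xi$-qubit blocks, so the depth of a layer equals the depth of one block operation while its ancilla count is $\mathcal{O}(n/\xi)$ times the ancilla of one block, since the blocks run in parallel and their ancilla registers are simultaneously live; and the five layers are applied sequentially, so depths add up to within a constant factor and ancilla counts take the maximum over layers, each block operation returning its ancilla to $\ket{0}$ and hence being recyclable.

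For depth and ancilla I would invoke Lemma~\ref{lemma:kwisefns} three times. A shuffle block $S_{a,a+1}$ is realized by computing $h_{a,a+1}(x_a)$ into a fresh register, XORing it onto $x_{a+1}$ in depth $\mathcal{O}(1)$, and uncomputing; Lemma~\ref{lemma:kwisefns} with the role of $n$ played by $\xi$ gives depth $\mathcal{O}(\log k\cdot\log\xi)$ with $\mathcal{O}(\xi k\log\xi\log\log\xi)$ ancilla in low-depth mode, or depth $\mathcal{O}(k\cdot\log\xi)$ with $\mathcal{O}(\xi\log\xi\log\log\xi)$ ancilla in low-ancilla mode. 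A phase block $F_{a,a+1}$ computes the single bit $f_{a,a+1}(x_{a,a+1})$, applies a $Z$, and uncomputes; since $\log(2\xi)=\Theta(\log\xi)$, Lemma~\ref{lemma:kwisefns} with $n\to2\xi$ gives the same per-block bounds up to constants. Each $C_a$ costs depth $\mathcal{O}(\log\xi)$ and $\tilde{\mathcal{O}}(\xi)$ ancilla by~\cite{cleve2015near}. Summing the constant number of layers gives total depth $\mathcal{O}(\log k\cdot\log\xi)$ (resp.\ $\mathcal{O}(k\cdot\log\xi)$), and multiplying the per-block ancilla by $\mathcal{O}(n/\xi)$ gives $\mathcal{O}(nk\log\xi\log\log\xi)$ (resp.\ $\mathcal{O}(n\log\xi\log\log\xi)$), with the $\tilde{\mathcal{O}}(n)$ Clifford ancilla subsumed in both; substituting $\xi=\Theta(\log(nk/\varepsilon))$, as dictated by the error bound of Theorem~\ref{thm:low-depth-LRFC-designs}, then recovers the forms quoted in the main text, exactly as in Corollary~\ref{cor:state-design-resources} for state designs.

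For randomness, the circuit draws $\mathcal{O}(n/\xi)$ shuffle functions, $\mathcal{O}(n/\xi)$ binary phase functions, and $\mathcal{O}(n/\xi)$ exact $2$-designs; the standard $2k$-wise constructions use $\mathcal{O}(k\xi)$ seed bits per function (the ``$2k$'' rather than ``$k$'' accounting for each function appearing in both the bra and the ket of the $k$-th moment, cf.~\cite{wegman1981new} and Fact~\ref{fact:blocked-phase-depth}), and the $2$-design uses $\mathcal{O}(\xi)$ seed bits, so that carrying the constants through the tally ($S_o,S_e$ contributing $\approx 2nk$ bits, $F_e,F_o$ contributing $\approx nk$ bits, $C_o$ contributing $\approx (5/2)n$ bits) yields the total $3nk+(5/2)n$. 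The whole argument is bookkeeping, and I do not expect a genuine obstacle; the points that need care are (i) the parallel-versus-sequential distinction, which forces the $n/\xi$ factor in the ancilla count but only a constant factor in the depth, (ii) checking that the $2\xi$-bit phase blocks and the $\xi$-qubit $2$-designs are subleading in every count so the stated asymptotics are not spoiled, and (iii) pinning down the precise constants $3$ and $5/2$, which depend on the specific $2k$-wise and exact-$2$-design constructions chosen — this last item is the fussiest. Finally, this Fact bounds only the cost of the concrete ensemble; that this ensemble is still a $k$-design follows separately, by combining Theorem~\ref{thm:low-depth-LRFC-designs} with the fact that a $2k$-wise independent function reproduces the $2k$-th moment of a uniformly random one.
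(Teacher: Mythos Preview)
Your approach matches the paper's: both proceed by layer-by-layer resource accounting, invoking Lemma~\ref{lemma:kwisefns} (with the role of $n$ played by $\xi$ or $2\xi$) for the $2k$-wise function blocks and~\cite{cleve2015near} for the exact $2$-design layer, then summing depths over the five layers while multiplying per-block ancilla by $\mathcal{O}(n/\xi)$. One small discrepancy in the randomness tally: you attribute $\approx 2nk$ bits to the shuffle layers $S_o,S_e$ and $\approx nk$ to the phase layers $F_e,F_o$, whereas the paper reverses these, assigning $2nk$ to the phases and $nk$ to the shuffles on the grounds that the shuffle inputs are $\xi$ rather than $2\xi$ bits wide; the total $3nk$ is the same either way, so the conclusion is unaffected.
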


Combining Fact~\ref{fact:blocked-LRFC-resources} and Theorem~\ref{thm:low-depth-LRFC-designs} yields the following:

\begin{corollary}[Circuit resources for unitary $k$-designs] \label{cor:unitary-design-resources}
    For any $k^2/\varepsilon \leq 2^n/3$, setting $\xi = \log_2(3nk^2/\varepsilon) \leq n$ yields approximate unitary $k$-designs with measurable error $\varepsilon$ using either:
    \begin{itemize}
        \item \textbf{Low depth:} $\mathcal{O}(\log k \cdot \log \log(nk/\varepsilon))$ depth and $\mathcal{O}(k n \log(nk/\varepsilon) \log \log(nk/\varepsilon))$ ancilla qubits.
        \item \textbf{Low ancilla:} $\mathcal{O}(k \log \log(nk/\varepsilon))$ depth and $\mathcal{O}(n \log(nk/\varepsilon) \log \log(nk/\varepsilon))$ ancilla qubits.
    \end{itemize}
    The randomness requirement is $3nk + (5/2)n$ bits.
\end{corollary}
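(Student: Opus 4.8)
The plan is to obtain the corollary purely by combining the error bound of Theorem~\ref{thm:low-depth-LRFC-designs} with the compilation counts of Fact~\ref{fact:blocked-LRFC-resources}, specialized to the stated choice $\xi = \log_2(3nk^2/\varepsilon)$. No fresh analysis of the ensemble is required. The three things I would check are: (i) that this $\xi$ drives the measurable error to at most $\varepsilon$; (ii) that the hypothesis $k^2/\varepsilon \leq 2^n/3$ keeps the blocked construction well-defined; and (iii) that substituting $\xi$ into the depth, ancilla, and randomness counts of Fact~\ref{fact:blocked-LRFC-resources} reproduces the stated scalings.

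First I would verify the error and well-definedness together. Theorem~\ref{thm:low-depth-LRFC-designs} gives measurable error $3nk^2/(2^\xi \xi)$; with $\xi = \log_2(3nk^2/\varepsilon)$ one has $2^\xi = 3nk^2/\varepsilon$, so the error equals $\varepsilon/\xi \leq \varepsilon$ (using $\xi \geq 1$). The extra $1/\xi$ of headroom confirms the choice is valid and in fact mildly conservative. The same value of $\xi$ places us squarely in the regime where the underlying local-distinct-subspace argument is effective, since $2^\xi = 3nk^2/\varepsilon \gg k^2$ makes the local distinct subspace nearly full-rank. For well-definedness I must ensure the $\xi$-qubit patches tile the $n$ qubits, i.e. $\xi \leq n$; this is where the hypothesis enters, giving $2^\xi = 3nk^2/\varepsilon \leq n 2^n$ and hence $\xi \leq n$ in the regime of interest (one may equivalently impose the slightly stronger $nk^2/\varepsilon \leq 2^n/3$). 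When $\xi$ saturates $n$ the blocked ensemble degenerates to a single global LRFC block, which only strengthens the guarantee.

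Third is the routine substitution of resources. Because $\xi = \Theta(\log(nk/\varepsilon))$, we have $\log\xi = \Theta(\log\log(nk/\varepsilon))$ and $\log\log\xi = \Theta(\log\log\log(nk/\varepsilon))$. Plugging into the low-depth line of Fact~\ref{fact:blocked-LRFC-resources} yields depth $\mathcal{O}(\log k \cdot \log\xi) = \mathcal{O}(\log k \cdot \log\log(nk/\varepsilon))$ and ancilla $\mathcal{O}(nk \log\xi \log\log\xi)$, which is bounded by the stated $\mathcal{O}(nk \log(nk/\varepsilon) \log\log(nk/\varepsilon))$ since $\log\xi \leq \log(nk/\varepsilon)$ and $\log\log\xi \leq \log\log(nk/\varepsilon)$; the low-ancilla line follows identically with depth $\mathcal{O}(k \log\log(nk/\varepsilon))$ and ancilla $\mathcal{O}(n \log\xi \log\log\xi)$ lying within the stated $\mathcal{O}(n \log(nk/\varepsilon) \log\log(nk/\varepsilon))$. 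The randomness count $3nk + (5/2)n$ bits is read directly from Fact~\ref{fact:blocked-LRFC-resources} and is $\xi$-independent, because the $\Theta(n/\xi)$ patches each consume $\mathcal{O}(k\xi)$ bits and the factor $\xi$ cancels.

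The main obstacle, to the extent there is one, is purely bookkeeping: matching the nested logarithms relating $\xi$ and $\log(nk/\varepsilon)$, and pinning down the boundary condition $\xi \leq n$ from the hypothesis. All substantive content lives in Theorem~\ref{thm:low-depth-LRFC-designs} and the compilation analysis of Section~\ref{sec: compilation}, both of which I may assume, so the corollary itself follows by inspection of these two substitutions.
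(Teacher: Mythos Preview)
Your proposal is correct and matches the paper's approach exactly: the paper simply states that the corollary follows from ``Combining Fact~\ref{fact:blocked-LRFC-resources} and Theorem~\ref{thm:low-depth-LRFC-designs},'' and you carry out precisely that combination with the substitution $\xi = \log_2(3nk^2/\varepsilon)$. Your additional observations---that the error bound has $1/\xi$ headroom, that the hypothesis $k^2/\varepsilon \leq 2^n/3$ only gives $\xi \leq n + \log_2 n$ rather than $\xi \leq n$ strictly, and that the stated ancilla counts are in fact looser than what Fact~\ref{fact:blocked-LRFC-resources} delivers---are all accurate and go slightly beyond what the paper spells out.
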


The number of bits of randomness follows because $F_e$ and $F_o$ together use $2nk$ bits of randomness~\cite{wegman1981new}.
The shuffles $S_e$ and $S_o$ use $nk$ bits of randomness, since their functions act on half as many bits.
Finally, the tensor product of exact unitary 2-designs $C$ uses $(n/2\xi)5\xi = (5/2)n$ bits of randomness~\cite{cleve2015near}. 
The optimal number of bits of randomness for a unitary $k$-design is $2nk$ (up to sub-leading additive factors), since applying the Haar twirl to one side of the EPR state yields a mixed state with rank $2^{2nk}$ (up to sub-leading multiplicative factors).
Hence, our ensemble is a mere factor of $3/2$ away from optimal in this regard, when $k$ is large.

\subsection{A simple and general approach for bounding the measurable error} \label{sec:bounding measurable error}

\begin{figure}
\centering
\includegraphics[width=0.5\columnwidth]{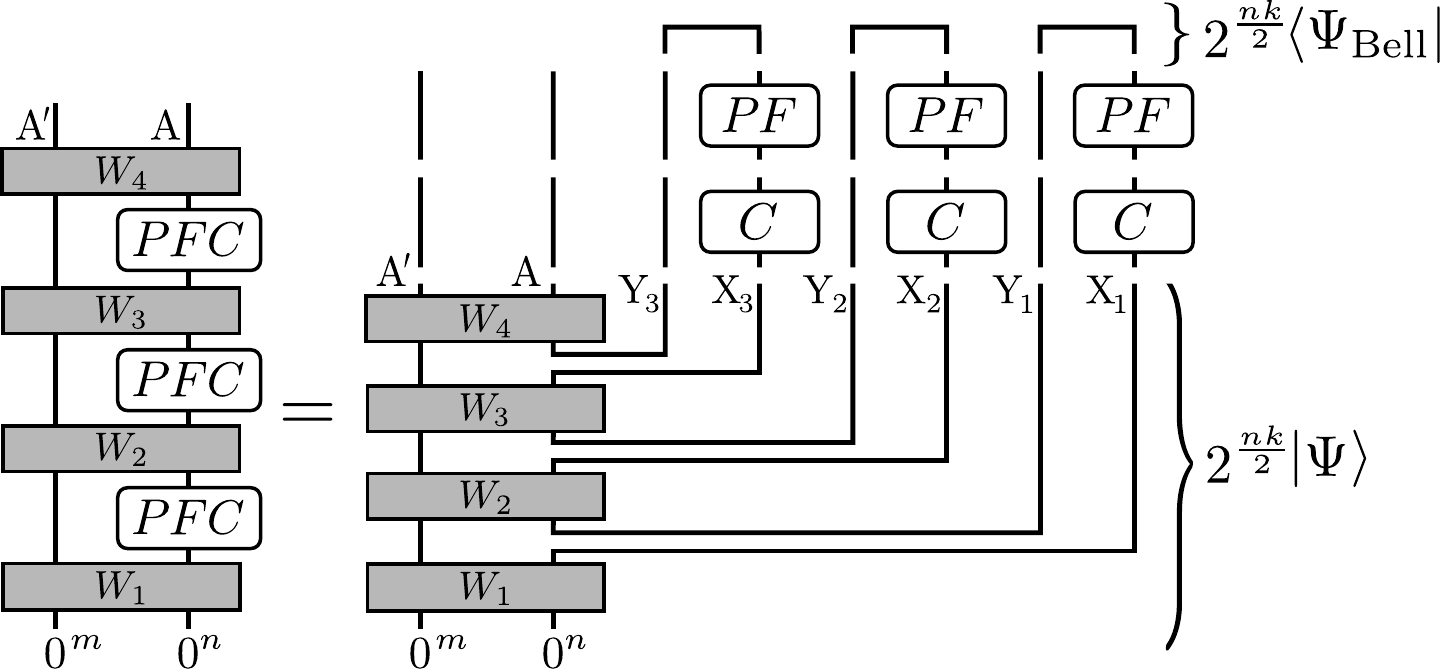}
\caption{A visual depiction of our expression Eq.~(\ref{eq: 1}) for the output $\ket{\psi^U_W}$ of any quantum experiment querying $U = PFC$.
On the left, the unitary acts $k$ times in sequence on an $n$-qubit register $A$, interleaved with arbitrary quantum operations, $W_j$, involving ancilla qubits $A'$.
On the right, $X_j$ denotes the register that the $j$-th unitary acts on, for $j = 1,\ldots,k$, and $Y_j$ denotes the register that $X_j$ is entangled with in $\ket{\Psi_{\text{Bell}}}$. 
}
\label{fig: PRU}
\end{figure}

As discussed in the main text, to prove Theorem~\ref{thm:low-depth-LRFC-designs}, we develop a new and general approach for bounding the measurable error of random unitary ensembles.
Our approach provides a convenient alternative to the path-recording framework introduced by~\cite{ma2024construct} (see Section~\ref{sec: PRU} for an expanded comparison).
In this subsection, we provide a high-level overview of our approach.
We then demonstrate it in full detail in the following subsection, by giving a short self-contained proof that the PFC ensemble has small measurable error.
We then apply the same approach to prove that the LRFC and blocked LRFC ensembles have small measurable error as well.
Our proofs for the three ensembles are nearly identical to one another, illustrating the versatility of our approach.

Our approach melds and refines ideas from several recent works on pseudorandom states and unitaries~\cite{giurgica2023pseudorandomness,jeronimo2023pseudorandom,metger2024simple,ma2024construct,schuster2024random,SRU2025}.
It consists of three key steps: 
\begin{enumerate}
    \item We first reformulate any quantum experiment involving sequential applications of a random unitary as an experiment involving parallel applications and post-selection~\cite{schuster2024random} (Fig.~\ref{fig: PRU}).

    \item We then prove that one can insert a projector onto the distinct subspace~\cite{giurgica2023pseudorandomness,jeronimo2023pseudorandom,metger2024simple} (or, a local variant of it) in between key components of the parallel random unitaries~\cite{metger2024simple,ma2024construct,SRU2025} in the post-selected experiment.

    \item We complete the proof by showing that the random unitary ensemble of interest acts identically to a Haar-random unitary on the (local) distinct subspace~\cite{giurgica2023pseudorandomness,jeronimo2023pseudorandom,metger2024simple,ma2024construct,SRU2025}.
\end{enumerate}
\noindent Several of these ideas are implicit in the path-recording framework of~\cite{ma2024construct}.
We adopt similar notation to~\cite{ma2024construct} to help highlight these connections.
At the same time, we deviate from~\cite{ma2024construct} in  important ways, including our use of post-selection and the introduction of several simple operator identities that enable a direct analysis.

Let us first discuss Step 1.
Following Definition~\ref{def: qu expt}, the output state of any experiment querying $U$ can be expressed $W_{k+1} U W_{k} U \cdots W_2 U W_1 \ket{0^{n+m}}$, where $U$ acts on an $n$-qubit subsystem $A$ and each $W_i$ acts on an arbitrarily large system $AA'$.
As shown in Fig.~\ref{fig: PRU}, we can re-arrange this expression so that the $k$ sequential applications of $U$ act instead as $k$ parallel applications on ancilla qubits.
This requires $2k$ new ancilla registers of $n$ qubits each, with one register $X_i$ for each input to a copy of $U$ and one register $Y_i$ for each output.
The registers are connected by an initial EPR state (between each $Y_i$ and $X_{i+1}$) and a final EPR projection (between each $X_i$ and $Y_i$).
The entire reformulation is best viewed in tensor network notation; nonetheless, in equations, we have
\begin{equation} \label{eq: 1}
	\ket{\psi_U} = 2^{nk} (\mathbbm{1}_{\color{gray}AA'} \otimes \bra{ \Psi_{\text{Bell}}}_{\color{gray}XY } ) (\mathbbm{1}_{\color{gray}ABY} \otimes \boldsymbol{U}_{\color{gray}X})\ket{\Psi}_{\color{gray}ABXY},
\end{equation}
where $\ket{\Psi}$ is an initial state determined by the choice of $W_i$ (see Fig.~\ref{fig: PRU}),  $\bra{\Psi_{\text{EPR}}}$ is the tensor product of $k$ EPR projectors on each $X_i$ and $Y_i$ (with $X \equiv \cup_i X_i$, $Y \equiv \cup_i Y_i$), and $\boldsymbol{U} \equiv U^{\otimes k}$.
We suppress the dependence of $\ket{\psi_U}$ and $\ket{\Psi}$ on $W$ for brevity.
The reformulation incurs a post-selection overhead of $2^{nk}$ and hence is not physical. Nonetheless, it is convenient for analysis due to the parallelism in $U$.

We leave the details of Steps 2 and 3 to  the following subsection.
For now, we simply recall several definitions.
The operator $\PD = \sum_{x \in \text{dist}} \dyad{x}$ projects onto the distinct subspace, where $x = (x_1,\ldots,x_k)$ with all $x_j$ distinct.
The number $\mathfrak{D} = (2^n)!/(2^n-k+1)!$ denotes the distinct subspace dimension, $1-k^2/2^n \leq \mathfrak{D}/2^{nk} \leq 1$.

\subsection{Proof of Theorem~\ref{thm:PFC-designs}: The PFC ensemble has small measurable error} \label{sec: pf PFC}

Let $\delta = k^2/2^n$, $U = PFC$ and $\bs{P} \equiv P^{\otimes k}$, $\bs{F} \equiv F^{\otimes k}$, $\bs{C} \equiv C^{\otimes k}$.
The expected output state is given by
\begin{equation}
	\rho = \E_{U \sim PFC} \Big[ \dyad{\psi_U}_{\color{gray}AA'} \Big] = \E_{C} \left[ \tr^{}_{\color{gray}XY} \! \big( B'_{\color{gray}XY} \bs{C}^{}_{\color{gray}X} \dyad*{\Psi}^{}_{\color{gray}AA'XY}\bs{C}^\dagger_{\color{gray}X} \big) \right]
\end{equation}
where $B'_{\color{gray}XY} \equiv 4^{nk} \E_{P, F} \big[ (\bs{P} \bs{F})^\dagger_{\color{gray}X} \dyad{\Psi_{\text{Bell}}}_{\color{gray}XY} (\bs{P} \bs{F})_{\color{gray}X} \big]$ is the twirled Bell state multiplied by $4^{nk}$.
We also define an analogous operator $B_{\color{gray} XY} \equiv \sum_{\pi \in S_k} \pi_{\color{gray} X} \otimes \pi_{\color{gray} Y}$, where $\pi$ permutes the $k$ registers of $X$ or $Y$.
($B$ is equal to the twirl of the Bell state over a Haar-random $U$ up to small relative error~\cite{schuster2024random}; however, we do not  use this in our proof.)

Our proof uses five facts. The first two are fundamental and correspond to Steps 3 and 2 in the proof approach overview, respectively. The final three are for technical convenience.

\vspace{3mm}
\noindent \text{(1.)}
    $B'$ and $(2^{nk}/\mathfrak{D}) B$ are equal on the distinct subspace, $ B'( \PD \otimes \PD ) = (2^{nk}/\mathfrak{D}) B( \PD \otimes \PD )$.
    This follows by expanding the EPR projector as $2^{nk} \dyad{\Psi_{\text{Bell}}} (\PD \otimes \PD) = \sum_{\tilde{x}} \sum_{x\in \text{dist}} \dyad{\tilde x, \tilde x}{x, x}$.
    The twirl over $F$ yields, 
    \begin{equation} \label{eq: twirl F}
        \E_F \Big[ \bs F \dyad{\tilde x, \tilde x}{x, x} \bs F^\dagger \Big]
        = \E_f \Big[ (-1)^{\sum_i f(\tilde{x}_i)+\sum_i f(x_i)} \dyad{\tilde x, \tilde x}{x, x} \Big] = \sum_\pi \delta_{\tilde{x} = \pi x} \cdot \dyad{\pi x, \pi x}{x, x},
    \end{equation}
    where $\pi \in S_k$ is any permutation, and we abbreviate $\pi x \equiv (x_{\pi(1)},\ldots,x_{\pi(k)})$. 
    Note that we can exchange the order of $F$ and $P$ since the average over a random function commutes with any permutation.
    The condition $\tilde{x} = \pi x$ ensures that $f(\tilde x) + f(x) = 0$; otherwise the sign vanishes upon averaging.
    The remaining average over $P$ yields
    \begin{equation}
        \E_P \Big[ \bs P \dyad{\pi x, \pi x}{x, x} \bs P^\dagger \Big] = (\pi \otimes \pi) \cdot \E_P \Big[ \bs P \dyad{x, x}{x, x} \bs P^\dagger \Big] = (1/\mathfrak{D}) \sum_{y \in \text{dist}} \dyad{\pi y, \pi x}{y, x},
    \end{equation}
    since a random permutation maps any distinct bitstring $x$ to a uniformly random distinct bitstring $y$.
    Combining these expressions, we have $B'( \PD \otimes \PD ) = (2^{nk}/\mathfrak{D}) \sum_{x,y \in \text{dist}} \dyad{\pi y,\pi x}{y,x} = (2^{nk}/\mathfrak{D}) B( \PD \otimes \PD )$ as desired.
    
\vspace{4mm}
\noindent \text{(2.)}
    The twirl of $\PD$ over any 2-design is near the identity, $\big\lVert \mathbbm{1} - \E_{C } \big[ \bs{C}^\dagger \PD \bs{C} \big] \big\rVert_\infty \leq \delta$~\cite{metger2024simple}.
    Intuitively, this follows because $C$ scrambles states into random superpositions of bitstrings, and most bitstrings are distinct.
    In  detail, we can upper bound $\mathbbm{1} - \Pi_{\text{dist}} \leq \sum_{1\leq i<j \leq k} \Pi_{ij}$ where $\Pi_{ij} = \sum_{z \in \{0,1\}^n} \dyad{z, z}$.
    The twirl over $C$ then yields $\E_{C} [\bs{C} \Pi_{ij} \bs{C}^\dagger  ] = (\mathbbm{1}+\mathcal{S}_{ij})(2^n+1)$, where $\mathcal{S}_{ij}$ is the swap operator.
    Hence, the norm is less $\sum_{i<j} \frac{2}{2^n+1} \leq k^2/2^n = \delta$.

\vspace{4mm}
\noindent \text{(3.)}
$B$ and $B'$ commute with $\PD \otimes \mathbbm{1}$ and $\mathbbm{1} \otimes \PD$. 
$B$ also commutes with $\bs{C}$.

\vspace{4mm}
\noindent \text{(4.)} For any positive semi-definite matrices $P,Q,R$ where $P$ and $Q$ commute, we have $\tr(PQR)  \leq \lVert Q \rVert_\infty \tr(P R)$.
This follows from Holder's inequality, $\tr(PQR) = \text{tr}(Q \sqrt{P} R \sqrt{P}) \leq \lVert Q \rVert_\infty \lVert \sqrt{P} R \sqrt{P} \rVert_1 =  \lVert Q \rVert_\infty \tr(P R)$. 

\vspace{4mm}
\noindent \text{(5.)} $\tr(B (\mathbbm{1} \otimes \PD) \dyad*{\Psi}) = \mathfrak{D}/2^{nk}$. 
This is shown below the current proof.

\vspace{4mm}
We can now prove the claim.
We insert $\mathbbm{1} = \PD \otimes \PD + (\mathbbm{1}-\PD \otimes \PD)$ to decompose $\rho$ into two terms,
\begin{align*}
	\rho = \rho_{\text{dist}} + \delta\rho \equiv \E_{C} \Big[ \! \tr_{\color{gray} XY}( B' (\PD \otimes \PD) \bs{C} \dyad*{\Psi} \bs{C}^\dagger ) \Big] + \E_{C} \Big[ \! \tr_{\color{gray} XY}( B' (\mathbbm{1}-\PD \otimes \PD) \bs{C} \dyad*{\Psi} \bs{C}^\dagger ) \Big],
\end{align*}
\noindent where each operator acts on the same subsystems as before. We bound the second term as follows,
\begin{align*}
    \lVert \delta \rho \rVert_1 = \tr( \delta \rho ) & = \E_{C} \big[ \text{tr}( B' (\mathbbm{1}-\PD \otimes \PD) \bs{C} \dyad*{\Psi} \bs{C}^\dagger ) \big] && \text{(since $\delta \rho$ is positive)} \\
    & = 1 - (2^{nk}/\mathfrak{D}) \tr \big( B \cdot \E_{C} \big[ \bs{C}^\dagger \PD \bs{C} \big] \otimes \PD \cdot \dyad*{\Psi} \big) && \text{(since $\tr(\rho) = 1$ and (1.))} \\
    & \leq 1 - (2^{nk}/\mathfrak{D})(1-\delta) \tr \big( B (\mathbbm{1} \otimes \PD) \dyad*{\Psi} \big) && \text{(from (2.), (3.), (4.))} \\
    & = 1 - (2^{nk}/\mathfrak{D})(1-\delta)(\mathfrak{D}/2^{nk}) = \delta. && \text{(from (5.))}
\end{align*}
Meanwhile, the first term is close to a fixed density matrix $\rho_a$,
\begin{align*}
    \rho_{\text{dist}} & = (2^{nk}/\mathfrak{D}) \tr_{\color{gray} XY} \!  \big( B \cdot \E_{C} \big[ \bs{C}^\dagger \PD \bs{C} \big] \otimes \PD \cdot \dyad*{\Psi} \big) && \text{(from (1.))} \\
    & = (2^{nk}/\mathfrak{D}) \tr_{\color{gray} XY} \! \big( B (\mathbbm{1} \otimes \PD) \dyad*{\Psi} \big) + \Delta \equiv \rho_a + \Delta, && \text{(from (2.), (3.), (4.))}
\end{align*} 
where $\lVert \Delta \rVert_1 \leq \delta (2^{nk}/\mathfrak{D}) \tr( B (\mathbbm{1} \otimes \PD) \dyad*{\Psi}) = \delta$ from (2.), (3.), (4.), (5.).

This completes the proof: We have $\lVert \rho - \rho_a \rVert_1 \leq 2\delta$ when $C$ is drawn from any 2-design.
Since both the Clifford and Haar ensembles are 2-designs, we have $\lVert \rho - \rho_H \rVert_1 \leq \lVert \rho_H - \rho_a \rVert_1 + \lVert \rho - \rho_a \rVert_1 \leq 4\delta = 4k^2/2^n$. \qed

\vspace{3mm}
\noindent \textbf{Proof of (5.):} 
We proceed by induction. 
%
Let $\Pi_{\text{dist}}^j$ be the distinct subspace projector on $(XY)_{\leq j}$, and $\rho_j = \tr_{{\color{gray}AA'(XY)_{>j}}}\!( \dyad*{\Psi} )$ the reduced density matrix.
The key property we use is that $\rho_j$ is maximally mixed on $Y_j$. 

\vspace{2mm}
\noindent \textbf{Base case ($k=1$): }Here, $B = \mathbbm{1} \otimes \mathbbm{1}$ and $\PD = \mathbbm{1}$. Hence, $\tr(B(\mathbbm{1}\otimes \PD)\dyad*{\Psi}) = 1$.

\vspace{2mm}
\noindent \textbf{Inductive step: }We expand $B = \sum_\pi \pi \otimes \pi$ and use the following identity: $\tr_j(\pi \Pi^j_{\text{dist}}) = (2^n-j+1) \Pi^{j-1}_{\text{dist}}$ if $\pi(j)=j$, and $\tr_j(\pi \Pi^j_{\text{dist}}) = 0$ otherwise.
This allows us to trace over $Y_j$, and, since this enforces $\pi(j)=j$, $X_j$ as well, 
\begin{equation}
    \sum_{\pi \in S_j} \tr_{{\color{gray}(XY)_{\leq j}}} \!\! \big((\pi \otimes \pi)(\mathbbm{1}\otimes \Pi_{\text{dist}}^j) \rho_j \big) = \frac{2^n-j+1}{2^n} \sum_{\pi' \in S_{j-1}} \tr_{{\color{gray}(XY)_{\leq j-1}}} \!\! \big((\pi' \otimes \pi')(\mathbbm{1}\otimes \Pi_{\text{dist}}^{j-1} ) \rho_{j-1} \big),
\end{equation}
where $\pi'$ is $\pi$ restricted to the first $j-1$ elements.
Iterating $k$ times yields $((2^n)!/(2^n-k+1)!)/2^{nk} = \mathfrak{D}/2^{nk}$. \qed

\subsection{Proof of Theorem~\ref{thm:LRFC-designs}: The LRFC ensemble has small measurable error}  \label{sec: pf LRFC}

%


To adapt our proof to the LRFC ensemble, we will project to slightly smaller subspaces than in the PFC ensemble.
We let $\PD^L$ project to the distinct subspace on the left $n/2$ qubits, and $\PD^R$ on the right $n/2$ qubits.
The left and right distinct subspace have dimensions $\mathfrak{D}_{L,R} = 2^{n/2} \cdot (2^{n/2})!/(2^{n/2}-k+1)! \geq 2^{nk}(1-\delta')$ where $\delta' = k^2/\sqrt{2^n}$.


We now proceed similar to PFC.
Let $U = S_L S_R FC$ and $\bs{L} \equiv S_L^{\otimes k}$, $\bs{R} \equiv S_R^{\otimes k}$,  $\bs{F} \equiv F^{\otimes k}$, $\bs{C} \equiv C^{\otimes k}$.
The expected output state is $\rho = \E_{C} \tr^{}_{\color{gray}XY} \! \big( B'' \bs{C} \dyad*{\Psi} \bs{C}^\dagger \big) $,
where
\begin{equation}
    B''_{\color{gray}XY} \equiv 4^{nk} \E_{LRF} \big[ (\bs{L}\bs{R}\bs{F})^\dagger_{\color{gray}X} \dyad{\Psi_{\text{Bell}}}_{\color{gray}XY} (\bs{L}\bs{R}\bs{F})_{\color{gray}X} \big]
    = 4^{nk} \E_{LRF} \big[ (\bs{F}\bs{R})^\dagger_{\color{gray}X}
    (\bs{L})^\dagger_{\color{gray}Y}
    \dyad{\Psi_{\text{Bell}}}_{\color{gray}XY} (\bs{F}\bs{R})_{\color{gray}X} (\bs{L})_{\color{gray}Y} \big].
\end{equation}
In the second step, we commute $F$ and $R$, and pull $L$ to the opposite side of the EPR pair using $\bs{L}_{\color{gray}X} \ket{\Psi_{\text{Bell}}} = \bs{L}^T_{\color{gray}Y} \ket{\Psi_{\text{Bell}}}$ and $L^T = L$, for future convenience.
%
We also define $B_{\color{gray} XY} \equiv \sum_{\pi \in S_k} \pi \otimes \pi$ as before.
We again use five facts.

\vspace{3mm}
\noindent \text{(1'.)}
    $B''$ and $B$ are equal on the modified distinct subspace, $(\PD^{LR})_{\color{gray}XY} \equiv (\PD^L)_{\color{gray}X} \otimes (\PD^R)_{\color{gray}Y}$.
    We expand $2^{nk} \dyad{\Psi_{\text{Bell}}} (\PD^{LR})_{\color{gray}XY} = \sum_{\tilde{x}} \sum_{x_L, x_R \in \text{dist}} \dyad{\tilde x, \tilde x}{x, x}$, abbreviating $x \equiv x_L \lVert x_R$.
    The twirl over $F$ sets $\tilde x = \pi x$ for $\pi \in S_k$ as before.
    Meanwhile, the twirl over $L$ yields 
    \begin{equation} \label{eq: left twirl}
        \E_L \Big[ \bs{L}^\dagger_{\color{gray}Y} \dyad{\pi x, \pi x}{x, x} \bs{L}_{\color{gray}Y} \Big] = (\pi \otimes \pi) \cdot \E_L \Big[ \bs{L}^\dagger_{\color{gray}Y} \dyad{x, x}{x, x} \bs{L}_{\color{gray}Y} \Big] = (1/2^{(n/2)k}) \sum_{y_L} \dyad{\pi x, \pi (y_L \lVert x_R)}{x, (y_L \lVert x_R)},
    \end{equation}
    since $y_L \equiv x_L + f_L(x_R)$ is a uniformly random set of $k$ $\frac{n}{2}$-bit strings when $x_R$ is distinct.
    We can then twirl over $R$ in an identical manner,
    \begin{equation} \label{eq: right twirl}
        (\pi \otimes \pi) \cdot \E_R \Big[ \bs{R}^\dagger_{\color{gray}X} \dyad{x, (y_L \lVert x_R)}{x, (y_L \lVert x_R)} \bs{R}_{\color{gray}X} \Big] 
        = (1/2^{(n/2)k}) \sum_{y_R} \dyad{\pi (x_L \lVert y_R), \pi (y_L \lVert x_R)}{(x_L \lVert y_R), (y_L \lVert x_R)}.
    \end{equation}
    In total, this shows $B''( \PD^{LR})_{\color{gray}XY} = \sum_\pi \sum_{x_L,x_R \in \text{dist}} \sum_{y_L, y_R}  \dyad{\pi (x_L \lVert y_R), \pi (y_L \lVert x_R)}{(x_L \lVert y_R), (y_L \lVert x_R)} = B( \PD^{LR})_{\color{gray}XY}$.
    
\vspace{3mm}
\noindent \text{(2'.)}
    The twirl of $(\PD^L)_{\color{gray}X}$ over any 2-design is near the identity, $\big\lVert \mathbbm{1} - \E_{C } \big[ \bs{C}^\dagger \PD^L \bs{C} \big] \big\rVert_\infty \leq \delta'$.
    This follows from a near-identical computation as before.

\vspace{3mm}
\noindent \text{(3'.)}
$B$ and $B''$ commute with $(\PD^L)_{\color{gray}Y}$ and $(\PD^R)_{\color{gray}X}$. 
$B$ also commutes with $\bs{C}$.

\vspace{3mm}
\noindent \text{(4.)} For any positive semi-definite matrices $P,Q,R$ where $P$ and $Q$ commute, we have $\tr(PQR)  \leq \lVert Q \rVert_\infty \tr(P R)$.

\vspace{3mm}
\noindent \text{(5'.)} $\tr(B (\mathbbm{1}_{\color{gray}X} \otimes (\PD^R)_{\color{gray}Y}) \dyad*{\Psi}) = \mathfrak{D}_R/2^{nk}$. 
This follows from an identical computation as before.

\vspace{4mm}
We can now prove the claim.
We insert $\mathbbm{1} = (\PD^L)_{\color{gray}X} \otimes (\PD^R)_{\color{gray}Y} + (\mathbbm{1}-(\PD^L)_{\color{gray}X} \otimes (\PD^R)_{\color{gray}Y})$ to decompose $\rho$,
\begin{align*}
	\rho = \rho_{\text{dist}} + \delta\rho \equiv \E_{C} \Big[ \! \tr_{\color{gray} XY}( B'' (\PD^L \otimes \PD^R) \bs{C} \dyad*{\Psi} \bs{C}^\dagger ) \Big] + \E_{C} \Big[ \! \tr_{\color{gray} XY}( B'' (\mathbbm{1}-\PD^L \otimes \PD^R) \bs{C} \dyad*{\Psi} \bs{C}^\dagger ) \Big],
\end{align*}
\noindent where each operator acts on the same subsystems as before. We bound the second term as follows,
\begin{align*}
    \lVert \delta \rho \rVert_1 = \tr( \delta \rho ) & = \E_{C} \big[ \text{tr}( B'' (\mathbbm{1}-\PD^L \otimes \PD^R) \bs{C} \dyad*{\Psi} \bs{C}^\dagger ) \big] && \text{(since $\delta \rho$ is positive)} \\
    & = 1 - \tr \big( B \cdot \E_{C} \big[ \bs{C}^\dagger \PD^L \bs{C} \big] \otimes \PD^R \cdot \dyad*{\Psi} \big) && \text{(since $\tr(\rho) = 1$ and (1'.))} \\
    & \leq 1 - (1-\delta') \tr \big( B (\mathbbm{1} \otimes \PD^R) \dyad*{\Psi} \big) && \text{(from (2'.), (3'.), (4.))} \\
    & = 1 - (1-\delta')(\mathfrak{D}_R/2^{nk}) \leq 2 \delta'. && \text{(from (5'.))}
\end{align*}
Meanwhile, the first term is close to a fixed density matrix $\rho_a$,
\begin{align*}
    \rho_{\text{dist}} & = \tr_{\color{gray} XY} \!  \big( B \cdot \E_{C} \big[ \bs{C}^\dagger \PD^L \bs{C} \big] \otimes \PD^R \cdot \dyad*{\Psi} \big) && \text{(from (1'.))} \\
    & = \tr_{\color{gray} XY} \! \big( B (\mathbbm{1} \otimes \PD^R) \dyad*{\Psi} \big) + \Delta \equiv \rho_a + \Delta, && \text{(from (2'.), (3'.), (4.))}
\end{align*} 
where $\lVert \Delta \rVert_1 \leq \delta' \tr( B (\mathbbm{1} \otimes \PD^R) \dyad*{\Psi}) = \delta'(\mathfrak{D}_R/2^{nk}) \leq \delta'$ from (2'.), (3'.), (4.), (5'.).

This completes the proof: We have $\lVert \rho - \rho_a \rVert_1 \leq 3\delta'$ when $C$ is drawn from any 2-design.
Since both the Clifford and Haar ensembles are 2-designs, we have $\lVert \rho - \rho_H \rVert_1 \leq \lVert \rho_H - \rho_a \rVert_1 + \lVert \rho - \rho_a \rVert_1 \leq 6\delta' = 6k^2/\sqrt{2^n}$. \qed

\subsection{Proof of Theorem~\ref{thm:low-depth-LRFC-designs}: The blocked LRFC circuit has small measurable error}  \label{sec: pf blocked LRFC}

Finally, we can address the blocked LRFC circuit.
Similar to our proof of nearly optimal state designs, we will project onto a tensor product of \emph{local} distinct subspaces $\PD^a$ on each patch $a$ of $\xi$ qubits.
%
%
Motivated by our analysis of the LRFC ensemble,
we will divide these patches into two alternating subsets: even patches, which are shuffled by $S_e$ in the second circuit layer, and odd patches, which are shuffled by $S_o$ in the fifth layer.
We let 
\begin{equation}
    \PD^{\text{loc}(o)} = \bigotimes_{a \in \text{odd}} \PD^a \,\,\,\,\,\, \text{ and } \,\,\,\,\,\,
    \PD^{\text{loc}(e)} = \bigotimes_{a \in \text{even}} \PD^a
\end{equation}
denote the local distinct subspace projectors on the odd and even patches, respectively.
These will play an analogous role to $\PD^L$ and $\PD^R$ in our LRFC proof.
The subspaces have dimensions $\mathfrak{D}_{e} =\mathfrak{D}_{o} = 2^{(n/2)k} \cdot (\mathfrak{D}_a)^{m/2} \geq 2^{nk} (1-k^2/2^{\xi})^{m/2} \geq 2^{nk}(1 - \tilde \delta)$ where $\tilde \delta \equiv (m/2) k^2/2^\xi$, where $\mathfrak{D}_a = (2^\xi)!/(2^\xi-k+1)!$ is the dimension of each patch.



With this setup in hand, our proof proceeds nearly identically to  PFC and LRFC.
Let $U = S_o F_o F_e S_e C_e$ and $\bs{S}_{e,o} \equiv S_{e,o}^{\otimes k}$,  $\bs{F}_{e,o} \equiv F_{e,o}^{\otimes k}$, $\bs{C}_e \equiv C_e^{\otimes k}$.
The expected output is $\rho = \E_{C} \tr^{}_{\color{gray}XY} \! \big( \tilde B \, \bs{C} \dyad*{\Psi} \bs{C}^\dagger \big) $,
where
\begin{equation}
    \tilde B_{\color{gray}XY} \equiv  4^{nk} \E_{LRF} \big[ (\bs{F}_o \bs{F}_e \bs{S}_e)^\dagger_{\color{gray}X}
    (\bs{S}_o)^\dagger_{\color{gray}Y}
    \dyad{\Psi_{\text{Bell}}}_{\color{gray}XY} (\bs{F}_o \bs{F}_e \bs{S}_e)_{\color{gray}X} (\bs{S}_o)_{\color{gray}Y} \big],
\end{equation}
using a similar manipulation $(\bs{S}_e)_{\color{gray}X} \ket{\Psi_{\text{Bell}}} = (\bs{S}_e)_{\color{gray}Y} \ket{\Psi_{\text{Bell}}}$ as in the LRFC proof.
We define $B_{\color{gray} XY} \equiv \sum_{\pi \in S_k} \pi \otimes \pi$ as before.
Our five facts are modified as follows.

\vspace{3mm}
\noindent \text{(1''.)}
    $\tilde B$ and $B$ are equal on the local distinct subspace, $(\PD^{\text{loc}})_{\color{gray}XY} \equiv (\PD^{\text{loc}(o)})_{\color{gray}X} \otimes (\PD^{\text{loc}(e)})_{\color{gray}Y}$.
    To see this, we note that $(\PD^{\text{loc}(o)})_{\color{gray}X}$ commutes with $(\bs{F}_e \bs{F}_o \bs{S}_o)_{\color{gray}X}$ and 
    $(\PD^{\text{loc}(e)})_{\color{gray}Y}$ commutes with $(\bs{S}_e)_{\color{gray}Y}$.
    We can then expand
    $2^{nk} \dyad{\Psi_{\text{Bell}}} (\PD^{\text{loc}})_{\color{gray}XY} = \sum_{\tilde{x}} \sum_{x \in \text{loc-dist}} \dyad{\tilde x, \tilde x}{x, x}$, where the sum is over $x$ that are distinct on every patch.

    \vspace{3mm}
    \noindent Let us first analyze the twirl over $F_e$ and $F_o$. 
    The key result is that this is equivalent to the twirl over an $n$-qubit random function, when restricted to the local distinct subspace:
    \begin{equation} \label{eq: twirl F low}
        \E_F \Big[ \bs{F} \dyad{\tilde x, \tilde x}{x, x} \bs{F}^\dagger \Big]
        = \E_{f_{a,a+1} \, \forall a} \Big[ (-1)^{\sum_{j,a} f_{a,a+1}(\tilde{x}_{a,a+1}^{(j)})+\sum_{j,a} f_{a,a+1}(x_{a,a+1}^{(j)})} \dyad{\tilde x, \tilde x}{x, x} \Big] = \sum_\pi \delta_{\tilde{x} = \pi x} \cdot \dyad{\pi x, \pi x}{x, x}.
    \end{equation}
    For comparison, see Eq.~(\ref{eq: twirl F}) for the twirl over an $n$-qubit random function.
    Here, $x^{(j)}_{a,a+1}$ denotes the support of the $j$-th bitstring on patches $a,a+1$, and the sum runs over all patches $a = 1,\ldots,m$. 
    The random functions $f_{a,a+1}$ for even $a$ come from $F_e$ and for odd $a$ from $F_o$.
    The average over each random function forces $\tilde x$ and $x$ to be permutations of one another within each pair of nearest-neighbor patches, $\tilde x^{(j)}_{a,a+1} = \pi_{a} x^{(j)}_{a,a+1}$.
    This is equivalent to $\tilde x^{(j)}_a = \pi_{a} x^{(j)}_{a}$ and $\tilde x^{(j)}_{a+1} = \pi_{a} x^{(j)}_{a+1}$ for each $a$.
    However, since each $x^a$ is distinct, this can be true only if $\pi_{a-1} = \pi_{a}$ for every $a$.
    This implies that \emph{all} the permutations are in fact equal, $\pi_{a} = \pi_1 \equiv \pi$ for all $a$, which yields the final expression.

    \vspace{3mm}
    \noindent We can now turn to the twirl over $S_e$ and $S_o$. The key result is that these twirls send each local distinct bitstring $x$ to a uniformly random bitstring $y$.
    Together with Eq.~(\ref{eq: twirl F low}), this implies that the blocked LRF circuit acts identically to a global $n$-qubit LRF circuit, when restricted to the local distinct subspace.
    We begin with $S_o$:
    \begin{equation} \nonumber
         (\pi \otimes \pi) \cdot \E_{S_o} \Big[ (\bs{S}_o)^\dagger_{\color{gray}Y} \dyad{ x, x}{x, x} (\bs{S}_o)_{\color{gray}Y} \Big] = (1/2^{(n/2)k}) \sum_{y_o} \dyad{\pi x, \pi (y_o \lVert x_e)}{x, (y_o \lVert x_e)},
    \end{equation}
    since $y_o \equiv \bigotimes_{a\in \text{odd}} (x_a + h_{a,a+1}(x_{a+1}))$ is a uniformly random set of $k$ $\frac{n}{2}$-bit strings when each $x_{a+1}$ is distinct.
    Here, we let $x_e$ denote the concatenation of all bitstrings on even patches, and $x_o$ the concatenation on all odd patches,  $x \equiv x_o \lVert x_e$.
    We can now twirl over $S_e$ in an identical manner,
    \begin{equation} \nonumber
        (\pi \otimes \pi) \cdot \E_{S_e} \Big[ (\bs{S}_e)^\dagger_{\color{gray}X} \dyad{\pi x, \pi (y_o \lVert x_e)}{x, (y_o \lVert x_e)} (\bs{S}_e)_{\color{gray}X} \Big]  
        = (1/2^{(n/2)k}) \sum_{y_e} \dyad{\pi (x_o \lVert y_e), \pi (y_o \lVert x_e)}{(x_o \lVert y_e), (y_o \lVert x_e)},
    \end{equation}
    where $y_e \equiv \bigotimes_{a\in \text{even}} (x_a + h_{a,a+1}(x_{a+1}))$ is again uniformly random since each $x_{a+1}$ is distinct.
    In total, this shows $B''( \PD^{\text{loc}})_{\color{gray}XY} = \sum_\pi \sum_{x_o,x_e \in \text{loc-dist}} \sum_{y_o, y_e}  \dyad{\pi (x_o \lVert y_e), \pi (y_o \lVert x_e)}{(x_o \lVert y_e), (y_o \lVert x_e)} = B( \PD^{\text{loc-dist}})_{\color{gray}XY}$, as desired.
    
\vspace{3mm}
\noindent \text{(2''.)}
    The twirl of $(\PD^{\text{loc}(o)})_{\color{gray}X}$ over a tensor product of local 2-designs is near identity, $\big\lVert \mathbbm{1} - \E_{C} \big[ \bs{C}_e^\dagger \PD^{\text{loc}(o)} \bs{C}_e \big] \big\rVert_\infty \leq \tilde \delta$.
    This also holds for the twirl over any ensemble that is invariant under composition with local 2-designs (such as a Haar-random unitary on all $n$ qubits).
    
    \vspace{2mm}
    \noindent To show this, we can upper bound $\mathbbm{1} - \PD^{\text{loc}(o)} \leq \sum_{a \in \text{odd}} \sum_{1\leq i<j \leq k} \Pi^a_{ij}$ where $\Pi^a_{ij} = \sum_{z_a \in \{0,1\}^\xi} \dyad{z_a, z_a}$.
    The twirl obeys $0 \preceq \E_{C} [\bs{C} \Pi^a_{ij} \bs{C}^\dagger  ] \preceq 2/(2^\xi+1)$.
    Hence, the norm is less $\sum_a \sum_{i<j} \frac{2}{2^n+1} \leq (m/2)k^2/2^n = \tilde \delta$.

\vspace{3mm}
\noindent \text{(3''.)}
$B$ and $\tilde B$ commute with $(\PD^{\text{loc}(e)})_{\color{gray}Y}$ and $(\PD^{\text{loc}(o)})_{\color{gray}X}$. 
$B$ also commutes with $\bs{C}$.

\vspace{3mm}
\noindent \text{(4.)} For any positive semi-definite matrices $P,Q,R$ where $P$ and $Q$ commute, we have $\tr(PQR)  \leq \lVert Q \rVert_\infty \tr(P R)$.

\vspace{3mm}
\noindent \text{(5''.)} $\tr(B (\mathbbm{1}_{\color{gray}X} \otimes (\PD^{\text{loc}(e)})_{\color{gray}Y}) \dyad*{\Psi}) = \mathfrak{D}_e/2^{nk}$.
This again follows from an identical computation as before.

\vspace{4mm}
 The remainder of the proof is identical to the LRFC ensemble.
We insert $\mathbbm{1} = (\PD^{\text{loc}(o)})_{\color{gray}X} \otimes (\PD^{\text{loc}(e)})_{\color{gray}Y} + (\mathbbm{1}-(\PD^{\text{loc}(o)})_{\color{gray}X} \otimes (\PD^{\text{loc}(e)})_{\color{gray}Y})$ to decompose $\rho$ into two terms,
\begin{align*}
	\rho = \rho_{\text{dist}} + \delta\rho \equiv \E_{C} \Big[ \! \tr_{\color{gray} XY}( \tilde B (\PD^{\text{loc}(o)} \otimes \PD^{\text{loc}(e)}) \bs{C} \dyad*{\Psi} \bs{C}^\dagger ) \Big] + \E_{C} \Big[ \! \tr_{\color{gray} XY}( \tilde B (\mathbbm{1}-\PD^{\text{loc}(o)} \otimes \PD^{\text{loc}(e)}) \bs{C} \dyad*{\Psi} \bs{C}^\dagger ) \Big].
\end{align*}
\noindent We bound the second term via
\begin{align*}
    \lVert \delta \rho \rVert_1 = \tr( \delta \rho ) & = \E_{C} \big[ \text{tr}( B'' (\mathbbm{1}-\PD^{\text{loc}(o)} \otimes \PD^{\text{loc}(e)}) \bs{C} \dyad*{\Psi} \bs{C}^\dagger ) \big] && \text{(since $\delta \rho$ is positive)} \\
    & = 1 - \tr \big( B \cdot \E_{C} \big[ \bs{C}^\dagger \PD^{\text{loc}(o)} \bs{C} \big] \otimes \PD^{\text{loc}(e)} \cdot \dyad*{\Psi} \big) && \text{(since $\tr(\rho) = 1$ and (1''.))} \\
    & \leq 1 - (1-\tilde \delta) \tr \big( B (\mathbbm{1} \otimes \PD^{\text{loc}(e)}) \dyad*{\Psi} \big) && \text{(from (2''.), (3''.), (4.))} \\
    & = 1 - (1-\tilde \delta)(\mathfrak{D}_e/2^{nk}) \leq 2 \tilde \delta. && \text{(from (5''.))}
\end{align*}
Meanwhile, the first term is close to a fixed density matrix $\rho_a$,
\begin{align*}
    \rho_{\text{dist}} & = \tr_{\color{gray} XY} \!  \big( B \cdot \E_{C} \big[ \bs{C}^\dagger \PD^{\text{loc}(o)} \bs{C} \big] \otimes \PD^{\text{loc}(e)} \cdot \dyad*{\Psi} \big) && \text{(from (1''.))} \\
    & = \tr_{\color{gray} XY} \! \big( B (\mathbbm{1} \otimes \PD^{\text{loc}(e)}) \dyad*{\Psi} \big) + \Delta \equiv \rho_a + \Delta, && \text{(from (2''.), (3''.), (4.))}
\end{align*} 
where $\lVert \Delta \rVert_1 \leq \delta' \tr( B (\mathbbm{1} \otimes \PD^{\text{loc}(e)}) \dyad*{\Psi}) = \delta'(\mathfrak{D}_Re2^{nk}) \leq \tilde \delta$ from (2''.), (3''.), (4.), (5''.).

This completes the proof: We have $\lVert \rho - \rho_a \rVert_1 \leq 3\tilde \delta$ when $C$ is drawn from any 2-design.
Since both the Clifford and Haar ensembles are 2-designs, we have $\lVert \rho - \rho_H \rVert_1 \leq \lVert \rho_H - \rho_a \rVert_1 + \lVert \rho - \rho_a \rVert_1 \leq 6\tilde \delta = 3mk^2/2^\xi$. \qed

\subsection{Second approach to nearly optimal unitary designs via unitary-function gluing lemma}

It is possible to arrive at a similar construction and error bound by gluing together designs on different subsystems using only phase gates, similar to the approach in Sec.~\ref{sec:state designs}. This construction achieves the same scaling in $n$ and $k$, but uses a different analysis technique. 

We consider a modified low-depth variant of the LRFC ensemble in which left- and right-shuffle, phase, and Clifford operators are applied to each of the even patches, then an additional layer of random phases are applied to the odd patches:
\begin{definition}[Second blocked LRFC ensemble] \label{def:lower-depth-LRFC}
    Consider a system of $n$ qubits divided into nearest-neighbor patches of $\xi$ qubits each (Fig.~1). For each patch $a$, suppose $h_{a,a+1}$ is drawn uniformly randomly from functions on $\{0,1\}^{\xi} \rightarrow \{0,1\}^{\xi}$, and $f_{a,a+1}$ is drawn uniformly randomly from  binary functions on $\{0,1\}^{2\xi} \rightarrow \{0,1\}$.
    Also, for each odd $a$, suppose $C_a$ is drawn uniformly randomly from the Clifford group on $\xi$ qubits. 
    Then the low-depth Luby-Rackoff-Function-Clifford (LRFC) ensemble is given by the family of $n$-qubit random unitaries:
    \begin{equation}
        U = F_e \cdot S_o \cdot S_o \cdot F_o \cdot C_o,
    \end{equation}
    where $F_{o} = \bigotimes_{a \in \text{odd}} F_{a,a+1}$, $L_{o} = \bigotimes_{a \in \text{odd}} R_{a,a+1}$, $R_{o} = \bigotimes_{a \in \text{odd}} R_{a,a+1}$, and $C_o = \bigotimes_{a\in \text{odd}} C_a$ are tensor products of unitary operators acting on the odd pairs of patches, similar to Def. \ref{def:low-depth-LRFC}, while $F_{e}$ is a tensor product of phase operators acting on the even pairs of patches.
\end{definition}
We proceed to formulate a similar gluing bound as in the state case, in which we obtain an error which scales with the number of blocks, as well as inversely exponential in the number of qubits in the patch. Plugging in the circuit implementations of the $k$-wise independent functions again yields a depth which is doubly logarithmic in system size. 
\begin{lemma}[Gluing unitary designs with phase operators] \label{lemma:gluingunitariesf}
    Let $|A|, |B| \geq \xi$ and $k < 2^\xi$. Suppose $\mathcal{E}_A, \mathcal{E}_B$ are $\varepsilon_A$- and $\varepsilon_B$-approximate $k$-designs on $A$ and $B$, respectively, up to measurable error. Then the ensemble formed by applying a random phase operator $F_\alpha$ on $\xi$ qubits each of $A$ and of $B$ is an $\varepsilon$-approximate $k$-design with measurable error
    \begin{equation} \label{eq:gluing unitaries bound}
        \varepsilon = \varepsilon_A + \varepsilon_B + \mathcal{O}(k^2/2^\xi).
    \end{equation}
\end{lemma}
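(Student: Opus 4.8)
The plan is to combine the proof of the state-gluing lemma (Lemma~\ref{lemma:gluingstates}) with the measurable-error framework of Section~\ref{sec:bounding measurable error}, following the same template used for the PFC, LRFC, and blocked LRFC ensembles. The first move is to reduce to the case $\mathcal{E}_A = \Haar_A$, $\mathcal{E}_B = \Haar_B$ at an additive cost $\varepsilon_A + \varepsilon_B$: fixing a $k$-query experiment probing $U = F_\alpha(V_A \otimes V_B)$ and fixing the seed $\alpha$ and the unitary $V_B$, the channel $F_\alpha$ together with $V_B$ and the intermediate operations $W_i$ forms a valid $k$-query experiment probing $V_A$ alone (treating the $B$-register and $A'$ jointly as ancilla, and noting $F_\alpha$ may touch the system register). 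Hence the measurable-error guarantee for $\mathcal{E}_A$ bounds the effect of replacing $V_A \sim \mathcal{E}_A$ by $V_A \sim \Haar_A$ by $\varepsilon_A$; averaging over $\alpha$ and $V_B$ and using convexity of the trace norm preserves this, and the symmetric argument for $V_B$ costs a further $\varepsilon_B$. It then suffices to show the glued ensemble $\mathcal{E}_{FAB} \equiv \{F_\alpha(V_A\otimes V_B) : V_A\sim\Haar_A,\ V_B\sim\Haar_B\}$ has measurable error $\mathcal{O}(k^2/2^\xi)$.

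For this I would run the three-step framework. Reformulating any $k$-query experiment as $k$ parallel queries with EPR post-selection, the expected output state is $\rho = \E_{V}\tr_{XY}\!\big(\hat B\,\bm{V}_X\dyad{\Psi}\bm{V}_X^\dagger\big)$, where $\bm{V}\equiv(V_A\otimes V_B)^{\otimes k}$ and $\hat B \equiv 4^{nk}\,\E_\alpha[\bm{F}_X^\dagger\dyad{\Psi_{\text{Bell}}}_{XY}\bm{F}_X]$ is the EPR operator twirled over the random phase acting on the $2\xi$-qubit patch $\Xi_A\cup\Xi_B$. The relevant projector is onto the \emph{doubly-local distinct subspace} $\Pi\equiv(\Pi_{\dist(\Xi_A)}\otimes\Pi_{\dist(\Xi_B)})_X$, spanned by bitstrings on $X$ whose restrictions to $\Xi_A$ and to $\Xi_B$ are separately distinct across the $k$ copies; since each patch has $\xi$ qubits and $k<2^\xi$, $\dim\Pi \geq 2^{nk}(1-\mathcal{O}(k^2/2^\xi))$. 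One then verifies the analogs of facts (1)--(5): (1) on $\Pi$, the twirls over the random phase on $\Xi_A\cup\Xi_B$ and over $\Haar_A$, $\Haar_B$ together reproduce (a rescaling of) the Haar twirl operator $B=\sum_{\pi\in S_k}\pi_X\otimes\pi_Y$ on $AB$ --- exactly as in the state-gluing computation, the phase twirl forces the $\Xi_A$- and $\Xi_B$-bitstrings to differ by one common permutation $\pi$ (using distinctness on each patch), while the Haar unitaries on $A$ and $B$ uniformly spread all remaining bits; (2) the twirl of $\Pi$ over the $2$-design $\Haar_A\otimes\Haar_B$ is within $\mathcal{O}(k^2/2^\xi)$ of the identity in operator norm, by the same collision bound as before but on $\xi$-qubit patches; and (3)--(5) the commutation, H\"older, and trace identities transfer verbatim.

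The bound is then assembled exactly as in Sections~\ref{sec: pf PFC}--\ref{sec: pf blocked LRFC}: insert $\mathbbm{1}=\Pi+(\mathbbm{1}-\Pi)$ to decompose $\rho=\rho_{\text{dist}}+\delta\rho$, bound $\lVert\delta\rho\rVert_1=\mathcal{O}(k^2/2^\xi)$ using facts (1)--(5), and show $\rho_{\text{dist}}$ lies within $\mathcal{O}(k^2/2^\xi)$ of the fixed density matrix produced by a genuinely Haar-random unitary on $AB$; this gives $\lVert\rho-\rho_{\Haar_{AB}}\rVert_1=\mathcal{O}(k^2/2^\xi)$, and together with the reduction above, $\varepsilon=\varepsilon_A+\varepsilon_B+\mathcal{O}(k^2/2^\xi)$. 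I expect the main obstacle to be fact (1): unlike PFC or LRFC, there is no explicit permutation or Luby--Rackoff shuffle performing the ``spreading'' step, so one must show that $\Haar_A$ and $\Haar_B$ simultaneously serve as the outer $2$-designs of fact (2) \emph{and} perform this spreading. Concretely, one has to track how the Haar twirl on $A$ (and on $B$) interacts with the EPR post-selection and with the phase twirl restricted to the small patch, reproducing the ``stabilization $\Rightarrow$ common permutation'' mechanism of the state-gluing lemma but with the distinctness constraint confined to a logarithmically small region and the bulk of each block randomized by its own Haar unitary. Making this precise while keeping the error additive in $\varepsilon_A,\varepsilon_B$ and otherwise only $\mathcal{O}(k^2/2^\xi)$ is the heart of the argument; iterating the lemma over the $n/2\xi$ blocks then yields the stated doubly-logarithmic depth.
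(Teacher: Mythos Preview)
Your proposal follows essentially the same approach as the paper: first reduce from $\mathcal{E}_A,\mathcal{E}_B$ to $\Haar_A,\Haar_B$ at cost $\varepsilon_A+\varepsilon_B$ by absorbing $F_\alpha$ and the other block into the adversary's $W_i$'s, then use the EPR post-selection reformulation and project onto the doubly-local distinct subspace $\Pi_{\dist}^{\Xi_A}\otimes\Pi_{\dist}^{\Xi_B}$ to compare with the global Haar twirl. The paper's proof is somewhat more compressed than yours: rather than separately establishing analogs of facts (1)--(5), it asserts in one line the key operator identity
\[
\E_{U_A,U_B}\,\mathcal{F}\,(U_A^\dagger\otimes U_B^\dagger)^{\otimes k}\big[\Pi_{\dist}^{\Xi_A}\otimes\Pi_{\dist}^{\Xi_B}\big](U_A\otimes U_B)^{\otimes k}
=\E_{U_{AB}\sim H}\,\mathcal{F}\,(U_{AB}^\dagger)^{\otimes k}\big[\Pi_{\dist}^{\Xi_A}\otimes\Pi_{\dist}^{\Xi_B}\big]U_{AB}^{\otimes k},
\]
which packages together both the ``common permutation'' mechanism and the spreading role of $\Haar_A,\Haar_B$ that you flag as the main obstacle. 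Your diagnosis of that obstacle is correct and is exactly what this identity resolves; otherwise the two arguments are the same.
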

%
%
\proof Let $\mathcal{E}_{FAB}$ and $\mathcal{E}_F$ denote the unitary ensembles generated by applying random phase operators on overlapping blocks with unitaries sampled from $\mathcal{E}_A, \mathcal{E}_B$ and with Haar-random unitaries on $A, B$, respectively. Then for any choice of $W_1, W_2, \cdots, W_{k+1}$ we have that
\begin{equation}
    \begin{aligned}
        \norm{\rho_{\mathcal{E}_{FAB}} - \rho_H}_1 &\leq \norm{\rho_{\mathcal{E}_{FAB}} - \rho_{\mathcal{E_F}}}_1 + \norm{\rho_{\mathcal{E_F}} - \rho_H}_1 \\
        &\leq \varepsilon_A + \varepsilon_B + \norm{\rho_{\mathcal{E_F}} - \rho_H}_1,
    \end{aligned}
\end{equation}
where in the second step we have used both the assumption on $\mathcal{E}_A, \mathcal{E}_B$ as well as the fact that the action of any instance of a random phase operator can be absorbed into the following $W_i$. To analyze the second term, we rewrite the output state in the form
\begin{equation}
\begin{aligned}
    \rho_{\mathcal{E_F}} &= \expect_{U_A, U_B \sim H} \left[ \tr_{\color{gray} XY} \mathcal{F} (U_A \otimes U_B)^{\otimes k} |\Psi_W\rangle\langle\Psi_W|(U_A^\dagger \otimes U_B^\dagger)^{\otimes k} \right],
\end{aligned}
\end{equation}
where $\mathcal{F}$ is the unnormalized Bell state twirled over the random phases. Recall that the twirl over $F$ eliminates terms which are not of form $|\pi(xx')\rangle\langle xx'|_{\Xi_A \Xi_B}$. We again consider a projection onto both of the local distinct subspaces corresponding to $\Xi_A$ and $\Xi_B$, the regions which the phase operator acts on:
\begin{equation}
    \begin{aligned}
        \left(\rho_{\mathcal{E_F}}\right)_\dist &= \left[ \Pi_\dist^{\Xi_A} \otimes \Pi_\dist^{\Xi_B} \right] \Psi_{\mathcal{E_F}} \left[ \Pi_\dist^{\Xi_A} \otimes \Pi_\dist^{\Xi_B} \right] \\
        &= \expect_{U_A, U_B \sim H} \left[ \tr_{\color{gray} XY} \mathcal{F} \left[ \Pi_\dist^{\Xi_A} \otimes \Pi_\dist^{\Xi_B} \right] (U_A \otimes U_B)^{\otimes k} |\Psi_W\rangle\langle\Psi_W|(U_A^\dagger \otimes U_B^\dagger)^{\otimes k} \left[ \Pi_\dist^{\Xi_A} \otimes \Pi_\dist^{\Xi_B} \right] \right] \\
        &= \expect_{U_A, U_B \sim H} \left[ \tr_{\color{gray} XY} \mathcal{F} (U_A^\dagger \otimes U_B^\dagger)^{\otimes k} \left[ \Pi_\dist^{\Xi_A} \otimes \Pi_\dist^{\Xi_B} \right] (U_A \otimes U_B)^{\otimes k} |\Psi_W\rangle\langle\Psi_W| \right],
    \end{aligned}
\end{equation}
where we have used the fact that the restrictions of the symmetric subspace commute with each other and with the $U_A \otimes U_B$, similar to Sec.~\ref{sec: pf PFC}. We then observe that the twirl on the restricted subspace is the same as the restriction of the Haar twirl over the entire system:
\begin{equation}
    \expect_{U_A, U_B} \mathcal{F} (U_A^\dagger \otimes U_B^\dagger)^{\otimes k} \left[ \Pi_\dist^{\Xi_A} \otimes \Pi_\dist^{\Xi_B} \right] (U_A \otimes U_B)^{\otimes k} = \expect_{U_{AB} \sim H} \mathcal{F} (U_{AB}^\dagger)^{\otimes k} \left[ \Pi_\dist^{\Xi_A} \otimes \Pi_\dist^{\Xi_B} \right] U_{AB}^{\otimes k}.
\end{equation}
Applying the same argument as in Eq.~\ref{eq:gluing state local dist project} of the proof of Lemma \ref{lemma:gluingstates} to bound the dimension of the local distinct subspace, and using the fact that the trace norm is monotonic under physical operations, yields 
\begin{equation}
    \norm{\left(\rho_{\mathcal{E_F}}\right)_\dist - \rho_H}_1 \leq k^2/2^\xi.
\end{equation}
To finish the proof, we bound the difference between the terms
\begin{equation}
    \begin{aligned}
        \norm{\rho_{\mathcal{E_F}} - \left(\rho_{\mathcal{E_F}}\right)_\dist}_1 &= \expect_{U_A, U_B \sim H} \tr \left[ \mathcal{F} (U_A \otimes U_B)^{\otimes k}  \left[\mathbbm{1} - \left(\Pi_\dist^{\Xi_A} \otimes \Pi_\dist^{\Xi_B}\right) \right] (U_A^\dagger \otimes U_B^\dagger)^{\otimes k} |\Psi_W\rangle\langle\Psi_W| \right] \\
        &= 1- \tr \left( \rho_{\mathcal{E_F}} \right)_\dist = 1 - \tr \left(\Pi_\dist^{\Xi_A} \otimes \Pi_\dist^{\Xi_B}\right) \Psi_H \\
        &\leq k^2/2^\xi,
    \end{aligned}
\end{equation}
Collecting error terms up to leading order in $k, \xi$ yields the bound in Eq.~\ref{eq:gluing unitaries bound}. \qed
\\

Applying this gluing formula repeatedly for each additional block in the system yields the following bound for the blocked LRFC construction:
\begin{theorem}[Second blocked LRFC is a design] \label{thm:blocklfrc}
    Let $\mathcal{E}_{n,\xi}$ be the family of blocked $k$-wise independent LRFC circuits on $n$ qubits with blocks of size $\xi \leq n$. Then $\mathcal{E}_{n,\xi}$ forms an $\varepsilon$-approximate $k$-design with
    \begin{equation}
        \varepsilon = \mathcal{O}(nk^2/2^\xi \xi).
    \end{equation}
\end{theorem}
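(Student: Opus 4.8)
The plan is to prove Theorem~\ref{thm:blocklfrc} by induction on the number of odd-patch blocks, using the unitary gluing lemma (Lemma~\ref{lemma:gluingunitariesf}) as the inductive step, in direct analogy with the state-design argument of Section~\ref{sec:state-design-gluing}. Rather than analyzing the full circuit of Definition~\ref{def:lower-depth-LRFC} at once (as in the self-contained proof of Section~\ref{sec: pf blocked LRFC}), this route builds the design up one block at a time: each odd pair of patches carries a small independent LRFC circuit, and each even-patch phase layer $F_e$ plays the role of the gluing operator $F_\alpha$, welding the new block onto the accumulated ensemble while adding only an exponentially small error.

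Concretely, for the base case I would take the leftmost odd-patch block, which by Definition~\ref{def:lower-depth-LRFC} is a standalone LRFC circuit on $2\xi$ qubits; Theorem~\ref{thm:LRFC-designs} with $n \to 2\xi$ then shows it is an approximate $k$-design up to measurable error $\varepsilon_1 = 6k^2/\sqrt{2^{2\xi}} = 6k^2/2^\xi$. For the inductive step, suppose the first $j$ blocks, already glued, form a $k$-design on a region $A$ with $|A| \geq \xi$ up to measurable error $\varepsilon^{(j)}$, and let $B$ be the $(j{+}1)$-th small LRFC block, a $k$-design on $2\xi \geq \xi$ qubits up to measurable error $6k^2/2^\xi$. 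In the brickwork of Definition~\ref{def:lower-depth-LRFC} the relevant even-patch phase layer acts on the $\xi$-qubit patch of $A$ adjacent to $B$ and on a $\xi$-qubit patch of $B$, which is exactly the geometry required by Lemma~\ref{lemma:gluingunitariesf}. Since $k < 2^\xi$ in the parameter regime of interest, the lemma applies and yields a $k$-design up to measurable error
\begin{equation}
    \varepsilon^{(j+1)} \;\leq\; \varepsilon^{(j)} + \frac{6k^2}{2^\xi} + \mathcal{O}\!\left(\frac{k^2}{2^\xi}\right) \;=\; \varepsilon^{(j)} + \mathcal{O}\!\left(\frac{k^2}{2^\xi}\right).
\end{equation}
Iterating over all $m = \Theta(n/\xi)$ odd-patch blocks and summing the per-block contributions gives
\begin{equation}
    \varepsilon \;=\; \varepsilon_1 + \sum_{j=1}^{m-1} \mathcal{O}\!\left(\frac{k^2}{2^\xi}\right) \;=\; \mathcal{O}\!\left(\frac{m k^2}{2^\xi}\right) \;=\; \mathcal{O}\!\left(\frac{nk^2}{2^\xi \xi}\right),
\end{equation}
where the extra $\xi$ in the denominator is precisely the number of patches per unit length. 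One minor accounting note: the even-patch phase layers $F_e$ simultaneously serve as the gluing operators and contribute the $\mathcal{O}(k^2/2^\xi)$ term of the lemma, so there is no separate error to charge for them, and the Clifford, shuffle, and odd-patch phase operators are absorbed into the definition of each small LRFC block.

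The main obstacle I anticipate is not a hard estimate but the discipline of checking that the hypotheses of Lemma~\ref{lemma:gluingunitariesf} survive every round of the induction --- in particular that the \emph{accumulated} ensemble on $A$ remains a $k$-design in the \emph{measurable}-error sense (not merely additive), so the lemma can be reapplied. This is automatic because Lemma~\ref{lemma:gluingunitariesf} is stated and proven for measurable error, so the inductive hypothesis is carried forward in the right norm. The only other point requiring care is the geometric bookkeeping: one must confirm that in Definition~\ref{def:lower-depth-LRFC} consecutive odd-patch blocks are bridged by exactly one even patch whose two $\xi$-qubit halves lie one in each block, so that a single application of the lemma suffices per new block, and that the $2k$-wise independent instantiation of the phase and shuffle functions reproduces the twirls used in the gluing lemma (which depend only on the degree-$2k$ moment). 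Once the correspondence with the state-design proof of Section~\ref{sec:state-design-gluing} is made explicit, the remaining argument is routine.
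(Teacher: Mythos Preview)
Your proposal is correct and follows essentially the same approach as the paper: establish that each odd-patch LRFC block is itself an approximate $k$-design with error $\mathcal{O}(k^2/2^\xi)$, then iterate the unitary gluing lemma (Lemma~\ref{lemma:gluingunitariesf}) across the $n/2\xi - 1$ additional blocks, summing the per-block errors to obtain $\mathcal{O}(nk^2/2^\xi\xi)$. Your write-up is in fact more careful than the paper's brief proof sketch --- you correctly invoke Theorem~\ref{thm:LRFC-designs} for the base case and track the measurable-error hypothesis through the induction, whereas the paper's proof contains a couple of evident citation typos.
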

\proof From Lemma \ref{lemma:kwisephasestate}, applying the initial layers of Clifford, right shuffle, phase, and left shuffle gates yields approximate $k$-designs on each block, up to error $\mathcal{O}(k^2/2^\xi)$. The result follows by starting from the leftmost block and repeatedly applying the result of Lemma \ref{thm:blocklfrc} after adding of the additional $n/2\xi - 1$ blocks. \qed
\\

\noindent As a corollary, we immediately obtain the same scaling as the main result of this section:
\begin{corollary}[Unitary designs in $\log k \log \log nk/\varepsilon$ depth] \label{cor:lowdepthunitary}
    Let $\mathcal{E}_n$ be an ensemble of $k$-wise independent blocked LRFC circuits on $n$ qubits with block size $\xi = \mathcal{O}(\log n)$. Then $\mathcal{E}_n$ is an $\varepsilon$-approximate design with $\varepsilon = \mathcal{O}(nk^2/2^\xi \xi)$. In addition, each $U \sim \mathcal{E}_n$ can be implemented with a circuit of depth $\mathcal{O}(\log k \cdot \log \log nk/\varepsilon)$ with $k \cdot \tilde{\mathcal{O}}(n)$ ancillas. 
\end{corollary}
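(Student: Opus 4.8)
The plan is to read the error bound directly off Theorem~\ref{thm:blocklfrc}, fix the block size $\xi$ so that this error is at most $\varepsilon$, and then feed the resulting $\xi$ into the circuit-compilation estimates of Section~\ref{sec: compilation}.

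First I would apply Theorem~\ref{thm:blocklfrc}: for the second blocked LRFC ensemble (Definition~\ref{def:lower-depth-LRFC}) built from $2k$-wise independent functions with block size $\xi \le n$, the (measurable) error satisfies $\varepsilon = \mathcal{O}(nk^2/2^\xi \xi)$, which is obtained by iterating the unitary-function gluing lemma (Lemma~\ref{lemma:gluingunitariesf}) over the $\Theta(n/\xi)$ blocks. Hence, to meet a prescribed target error it suffices to take $\xi = \Theta\!\big(\log(nk^2/\varepsilon)\big)$; in the regime $k^2/\varepsilon \le 2^n/3$ this respects $\xi \le n$, and when $k$ and $1/\varepsilon$ are at most polynomial in $n$ it is the claimed $\xi = \mathcal{O}(\log n)$. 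With this choice one has $\log \xi = \mathcal{O}\!\big(\log\log(nk/\varepsilon)\big)$, which is the quantity that will govern the depth.

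Second I would compile the ingredients. Each shuffle $h_{a,a+1}$ acts on $\xi$ bits and each phase $f_{a,a+1}$ on $2\xi$ bits; instantiating these by $2k$-wise independent hash functions and using the low-depth construction of Lemma~\ref{lemma:kwisefns} realizes each in depth $\mathcal{O}(\log k \cdot \log \xi)$ with $\mathcal{O}(k\xi\log\xi\log\log\xi)$ ancilla qubits. Each $C_a$ on $\xi$ qubits is replaced by an exact unitary $2$-design, which by~\cite{cleve2015near} requires depth $\mathcal{O}(\log\xi)$ and $\tilde{\mathcal{O}}(\xi)$ ancilla qubits; this substitution is legitimate because the design proof of Theorem~\ref{thm:blocklfrc} only uses that $C_a$ is drawn from a $2$-design. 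Since the circuit is a fixed brickwork of a constant number of layers (the phase, shuffle, and Clifford layers), its depth is a constant multiple of that of a single block, i.e.\ $\mathcal{O}(\log k \cdot \log\xi) = \mathcal{O}\!\big(\log k \cdot \log\log(nk/\varepsilon)\big)$. Summing ancillas over the $\Theta(n/\xi)$ blocks gives $\mathcal{O}\!\big((n/\xi)\cdot k\xi\log\xi\log\log\xi\big) = nk\log\xi\log\log\xi = k\cdot\tilde{\mathcal{O}}(n)$, since $\log\xi$ and $\log\log\xi$ are polylogarithmic in $\log(nk/\varepsilon)$; this is precisely the analogue for Definition~\ref{def:lower-depth-LRFC} of the accounting recorded in Fact~\ref{fact:blocked-LRFC-resources}.

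Since the statement is essentially a bookkeeping combination of prior results, there is no serious obstacle; the only point requiring care is the mild circularity in the parameter choice — $\xi$ must be large enough to drive $\mathcal{O}(nk^2/2^\xi\xi)$ below $\varepsilon$, yet $\xi$ (through $\log\xi$) also sets the final depth. Because the error bound forces only $\xi = \Theta(\log(nk^2/\varepsilon))$, the two demands are compatible and the depth comes out as $\mathcal{O}(\log k \cdot \log\log(nk/\varepsilon))$ as claimed. A secondary sanity check is that the brickwork composition genuinely stacks $\mathcal{O}(1)$ block-layers without depth blow-up, which is immediate from the definition of the ensemble.
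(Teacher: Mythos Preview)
Your proposal is correct and follows essentially the same approach as the paper: invoke Theorem~\ref{thm:blocklfrc} for the error bound, then substitute the low-depth compilation of $k$-wise independent functions from Lemma~\ref{lemma:kwisefns} and the exact $2$-design construction of~\cite{cleve2015near}, and finally solve for $\xi$. Your write-up is considerably more explicit than the paper's one-sentence proof about the parameter arithmetic and the ancilla accounting, and your remark that the design proof only uses the $2$-design property of $C_a$ (justifying the~\cite{cleve2015near} substitution) is a useful clarification the paper leaves implicit.
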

\proof This follows from substituting the constructions in Lemma \ref{lemma:kwisefns} and \cite{cleve2015near} for $k$-wise independent functions and exact unitary $2$-designs, respectively, into the result of Theorem \ref{thm:blocklfrc}, then choosing fixed $\varepsilon > 0$ and letting $\xi = \mathcal{O}(\log n)$ for large $n$. \qed

\subsection{Third approach to nearly optimal unitary designs via relative error gluing lemma}

Here, we present a formal proof of the alternative construction of unitary $k$-designs discussed in the main text, which builds heavily on the existing results of~\cite{schuster2024random} and~\cite{SRU2025}.
The unitary ensemble corresponds to a two-layer circuit in which each small random unitary is drawn randomly from the LRFC ensemble on $2\xi$ qubits. 
\begin{definition}[Third blocked LRFC ensemble]
    Consider a system of $n$ qubits divided into  patches of size $\xi$. Then the third LRFC ensemble is given by the family of $n$-qubit random unitaries:
    \begin{equation}
        U = (LRFC)_e \cdot (LRFC)_o,
    \end{equation}
    where $(LRFC)_e$ and $(LRFC)_o$ are tensor products of unitaries drawn from the LRFC ensemble on even and odd pairs of patches, respectively.
\end{definition}
Since the unitaries applied in each of the patches for both layers is drawn from an approximate design, it is possible to show that this construction also forms a design by applying the relative error gluing result (Theorem~1 of~\cite{schuster2024random}) to neighboring patches:
\begin{theorem}[Third blocked LRFC is a design]\label{eq: uniform two layer lrfc bound}
    The third blocked LRFC ensemble on $n$ qubits forms an approximate unitary $k$-design with measurable error $\mathcal{O}(nk^2/2^\xi)$.
\end{theorem}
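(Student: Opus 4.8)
\proof The plan is to reduce the claim to two results already established: each small LRFC block is a unitary $k$-design with small \emph{measurable} error (Theorem~\ref{thm:LRFC-designs}), and the relative-error gluing lemma for brickwork circuits (Theorem~1 of~\cite{schuster2024random}). One cannot apply the gluing lemma directly to the LRFC blocks, since the LRFC ensemble has only small measurable (not relative) error; the strategy is therefore to first swap each block for a genuinely Haar-random unitary, and only then glue.

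First I would replace the $2\xi$-qubit LRFC unitaries, one patch at a time over both layers, with Haar-random unitaries on the same $2\xi$ qubits, via a hybrid argument. At each step the block being replaced is probed by an arbitrary $k$-query experiment: the interleaved operations $W_1,\ldots,W_{k+1}$, the remaining blocks (Haar or LRFC, depending on the step), and all ancilla registers together play the role of the $W_j$ in Definition~\ref{def: qu expt} for that block, and the block is applied once per query to $U$, hence $k$ times in total. The measurable-error guarantee of Theorem~\ref{thm:LRFC-designs} with $m = 2\xi$ then bounds the change in the output state by $6k^2/\sqrt{2^{2\xi}} = 6k^2/2^\xi$ per step, so the $\mathcal{O}(n/\xi)$ replacements together cost $\mathcal{O}(nk^2/2^\xi)$ in trace distance, uniformly over all experiments.

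Next, the resulting ensemble is exactly a two-layer brickwork of independent Haar-random unitaries on $2\xi$-qubit patches, with consecutive patches overlapping on $\xi$ qubits. Since a Haar-random unitary is a zero-relative-error design, I would apply Theorem~1 of~\cite{schuster2024random}: each of the $\mathcal{O}(n/\xi)$ overlap regions of size $\xi$ contributes $\mathcal{O}(k^2/2^\xi)$ to the relative error, so the glued two-layer Haar circuit is a unitary $k$-design up to relative error $\mathcal{O}(nk^2/2^\xi)$, hence by Lemma~\ref{lemma:rel error to measurable} up to measurable error $\mathcal{O}(nk^2/2^\xi)$ as well. Combining the two steps with the triangle inequality gives the claimed bound.

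The hard part is the bookkeeping in the first step: recognizing that swapping a single block, in the presence of the other (possibly non-Haar) blocks and arbitrary interleaved operations, is itself an instance of a $k$-query experiment on that block, so that its measurable error applies and the errors accumulate only additively over blocks. The fact that the gluing step likewise accumulates error additively rather than exponentially in the number of blocks is precisely the content of the relative-error gluing lemma, which is why that lemma --- and not its additive-error counterpart --- is the right tool here. \qed
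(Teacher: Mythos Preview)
Your proposal is correct and follows essentially the same two-step approach as the paper: first replace each $2\xi$-qubit LRFC block by a Haar-random unitary via a hybrid argument using the measurable-error guarantee of Theorem~\ref{thm:LRFC-designs}, then apply the relative-error gluing lemma (Theorem~1 of~\cite{schuster2024random}) to the resulting two-layer Haar brickwork and convert back to measurable error via Lemma~\ref{lemma:rel error to measurable}. Your write-up is in fact more explicit than the paper's about why the hybrid step is legitimate---namely, that the surrounding blocks and interleaved $W_j$ constitute a valid $k$-query experiment on the block being swapped---which the paper compresses into a single sentence.
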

\proof From Theorem~\ref{thm:LRFC-designs}, the LRFC ensemble on each individual patch forms an approximate $k$-design up to measurable error $\mathcal{O}(k^2/2^{\xi})$. Hence, the third blocked LRFC ensemble is indistinguishable from the blocked Haar-random circuit up to measurable error  $nk^2/2^\xi \xi$. From Theorem~1 of~\cite{schuster2024random}), the blocked Haar-random circuit is a unitary $k$-design with relative error $nk^2/2^\xi$.
This implies that it is a design with measurable error $2nk^2/2^\xi$ (Lemma~\ref{lemma:rel error to measurable}). \qed

\subsection{Application to pseudorandom unitaries} \label{sec: PRU}

As previously discussed in the case of state designs, our simple constructions and proofs of random unitary ensembles with small measurable error also have application to the study of pseudorandom unitaries (PRUs).
A PRU is a unitary ensemble that is indistinguishable from Haar-random by any bounded-time quantum experiment.
We refer again to Refs.~\cite{schuster2024random,ma2024construct} for a pedagogical introduction.
%

The existence of pseudorandom unitaries has remained an open question until very recently.
This  was resolved by~\cite{ma2024construct}, which proved that the PFC ensemble forms a PRU whenever $P$ and $F$ are instantiated with a pseudorandom permutation (PRP)~\cite{zhandry2016note} and pseudorandom function (PRF)~\cite{zhandry2021PRF}.
This built upon earlier work, which introduced the PFC ensemble and proved that it formed a PRU with a weaker parallel form of security~\cite{metger2024simple}.
Following this progress, the LRFC ensemble was introduced and also proven to form a PRU when each function is pseudorandom~\cite{SRU2025}.

So far, both proofs of the existence of PRUs (for the PFC ensemble~\cite{ma2024construct} and the LRFC ensemble~\cite{SRU2025}) work within the so-called path-recording framework introduced by~\cite{ma2024construct}.
Once instantiated, this framework is quite powerful.
For example, it allows one to prove that (slight variants of) the PFC and LRFC ensembles are secure even against adversaries that can query both a random unitary $U$ and its inverse $U^\dagger$ many times.
Such experiments are much more powerful than standard experiments that only query $U$~\cite{schuster2023learning,cotler2023information,schuster2024random}.
At the same time, however, the path-recording framework can yield somewhat lengthy proofs.
This arises both from a large amount of initial notation, and the repeated manipulation of large sums of bitstrings throughout the analysis.
Given the fundamental importance of PRUs, one might wonder whether a more direct proof is possible.

This is in fact provided by our work.
Our proofs that the PFC, LRFC, and blocked LRFC ensembles have small measurable error immediately imply that each ensemble is a PRU (when its functions and permutations are instantied pseudorandomly).
By definition, a PRP and PRF are indistinguishable from a  truly random permutation and function by any bounded-time  experiment. 
Our analysis then proves that the truly random ensembles are indistinguishable from Haar-random unitaries. 
Together, this implies that the pseudorandom ensembles are each PRUs. 
Our proofs for each ensemble are extremely short and direct (see Sections~\ref{sec: pf PFC},~\ref{sec: pf LRFC}, and~\ref{sec: pf blocked LRFC}).
They are also self-contained, relying only on basic properties of unitary 2-designs and the permutation group. 
We hope that this alternative approach can complement existing methods for future work on PRUs and unitary designs.




\section{Unitary designs with small relative error} \label{sec:relerrordesigns}

In this section, we present a variant of the blocked LRFC circuit that forms an approximate unitary $k$-design with small relative error.
The circuit has depth $\mathcal{O}(k \log k \log \log nk/\varepsilon)$, which is larger by only a factor of $k$ compared to our earlier results. 
As aforementioned, the best previous construction of unitary designs with relative error required circuit depths $\mathcal{O}(k \poly \log k \log n k / \varepsilon)$~\cite{schuster2024random,laracuente2024approximate}.
This features an exponentially worse scaling in the number of qubits~$n$ and the inverse error $1/\varepsilon$ compared to our new results.
In what follows, we first describe our construction and then provide a proof that it achieves small relative error.

\subsection{Blocked amplified LRFC circuits}

To describe our relative error unitary ensemble, consider the circuit formed by applying $p$ independently random LRFC unitaries on the same $2\xi$ qubits in succession.
We term this the (LRFC)$^p$ ensemble.
Our relative error unitary ensemble corresponds to the two-layer circuit in which each block is drawn independently randomly from the (LRFC)$^p$ ensemble on $2\xi$ qubits.
\begin{definition}[Blocked amplified LRFC circuit] \label{def:amplified lrfc}
    Consider a system of $n$ qubits divided into  patches of size $\xi$. The $p$-layer blocked amplified LRFC circuit is given by the family of $n$-qubit random unitaries:
    \begin{equation}
        U = (LRFC)^p_e (LRFC)^p_o,
    \end{equation}
    where $(LRFC)^p_e$ is a tensor product of (LRFC)$^p$ random unitaries on every even patch of qubits, and $(LRFC)^p_o$ is a tensor product of (LRFC)$^p$ random unitaries on every old patch of qubits.
\end{definition}
The ``amplification'' in the name of our ensemble comes from the \emph{amplification of approximation error} result for unitary designs \cite{brandao2016local}, which states that the additive error decreases multiplicatively when designs are composed. 
For completeness, we reproduce the result in the following subsection, before using it to show that the blocked amplified LRFC ensemble is a relative error design for $p = \mathcal{O}(k)$ and $\xi \leq n/2$:
\begin{theorem}[Blocked amplified LRFC circuits are relative error designs] \label{thm:low-depth-repeated-LRFC-designs}
    For any $k \leq 2^{\xi/4}$ and $p \geq 8k+1$.
    The (LRFC)$^p$ ensemble forms an approximate unitary $k$-design on $n$ qubits with relative error $\varepsilon = \mathcal{O}(nk^2/2^{\xi})$.
\end{theorem}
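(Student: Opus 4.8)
\emph{Proof strategy.} The plan is to establish the statement for a single $2\xi$-qubit block and then invoke a gluing lemma for the two-layer brickwork. The two external inputs are the additive-error amplification lemma of~\cite{brandao2016local} (reproduced in the following subsection) and the relative-error gluing lemma, Theorem~1 of~\cite{schuster2024random}. First, restricted to a block of $2\xi$ qubits, the LRFC ensemble of Definition~\ref{def:LRFC} is a $k$-design with measurable error $6k^2/2^{\xi}$ by Theorem~\ref{thm:LRFC-designs}, hence additive error $\delta_0 \le 6k^2/2^{\xi}$; the hypothesis $k\le 2^{\xi/4}$ gives $\delta_0 \le 6\cdot 2^{-\xi/2}$, which is below $1$ for $\xi$ not too small. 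Since $(\mathrm{LRFC})^p$ on these $2\xi$ qubits is the composition of $p$ independent copies, we have $\Phi_{(\mathrm{LRFC})^p}-\Phi_H = (\Phi_{\mathrm{LRFC}}-\Phi_H)^p$ (using that $\Phi_H$ is idempotent and absorbing), so by submultiplicativity of the diamond norm under composition the $(\mathrm{LRFC})^p$ block has additive error at most $\delta_0^p$, which is exponentially small in $p\xi$. If the LRFC blocks are instantiated using $2k$-wise independent functions and an exact unitary $2$-design rather than truly random objects, the relevant $2k$-th moment is reproduced exactly and the same bound holds.

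\emph{From additive to relative error on the block.} I would then apply Lemma~\ref{lemma:add to rel error} with physical dimension $2^{2\xi}$: additive error $\delta_0^p$ implies relative error at most $2^{2\xi k}\binom{2^{2\xi}+k-1}{k}\,\delta_0^p \le 2\,(2^{4\xi k}/k!)\,\delta_0^p$. Inserting $\delta_0 = 6k^2/2^{\xi}$, this bound is $2\,(6k^2)^p\,2^{4\xi k - p\xi}/k!$, and for $p\ge 8k+1$ the exponent $4\xi k - p\xi \le -4\xi k-\xi$, which together with $\log_2 k \le \xi/4$ makes the $\xi$-dependence decay, while the $1/k!$ absorbs the residual $(6k^2)^p$ once $p=\Theta(k)$. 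Carrying out this bookkeeping shows that $(\mathrm{LRFC})^p$ on $2\xi$ qubits is a $k$-design with relative error $\varepsilon_{\mathrm{blk}} = O(k^2/2^{\xi})$. This step is the crux of the argument: the point of amplifying only $O(k)$ times, rather than the naive $\Theta(\xi k)$ one might expect, is that the $2^{\Theta(\xi k)}$ additive-to-relative blow-up is overwhelmed by $\delta_0^p = 2^{-\Theta(\xi k)}$ precisely when the single-block additive error is exponentially small in the block size, which is exactly what the hypothesis $k\le 2^{\xi/4}$ provides; tracking the constants carefully is the only real work in the proof.

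\emph{Gluing the two layers.} The blocked amplified circuit of Definition~\ref{def:amplified lrfc} is $U = (\mathrm{LRFC})^p_e\,(\mathrm{LRFC})^p_o$, a composition of two tensor products of $\sim n/2\xi$ independent amplified blocks arranged in the two-layer brickwork. Relative error is multiplicative under tensor products and under channel composition, so each layer $(\mathrm{LRFC})^p_{e}$ and $(\mathrm{LRFC})^p_{o}$ is within relative error $(1+\varepsilon_{\mathrm{blk}})^{n/2\xi}-1 = O((n/\xi)k^2/2^{\xi})$ of the corresponding layer of Haar-random $2\xi$-qubit blocks. Composing these bounds with Theorem~1 of~\cite{schuster2024random} — which states that the two-layer Haar brickwork with block size $2\xi$ is a $k$-design with relative error $O(nk^2/2^{\xi})$ — yields relative error $O(nk^2/2^{\xi})$ for $U$, as claimed. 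The first and third steps are routine applications of the cited results; the expected obstacle is confined to the second step, namely pinning down that the linear-in-$k$ number of amplification rounds $p = 8k+1$ genuinely suffices against the exponential conversion factor.
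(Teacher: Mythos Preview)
Your proposal is correct and follows the same route as the paper: amplify the additive error of LRFC on each $2\xi$-qubit block to $\delta_0^{\,p}$, convert to relative error via Lemma~\ref{lemma:add to rel error}, then glue with Theorem~1 of~\cite{schuster2024random}; the paper packages the first two steps as its Lemma~\ref{lemma:amplified lrfc block}. One minor slip in the bookkeeping: it is not $1/k!$ that absorbs the residual $(6k^2)^p$ --- rather, the $k^{16k}$ factor is killed by $2^{-4\xi k}$ via the hypothesis $k\le 2^{\xi/4}$, and the leftover constant $6^{\mathcal{O}(k)}$ is simply dropped (the paper makes the same move, tacitly setting the LRFC additive-error constant to~$1$ when writing $\varepsilon_{\text{add}}=(k^2/2^{n/2})^p$ in its proof of Lemma~\ref{lemma:amplified lrfc block}).
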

The circuit depth of this construction is $p$ times the depth of the original blocked LRFC construction from Fact~\ref{fact:blocked-LRFC-resources}:
\begin{fact}[Circuit depth of blocked amplified LRFC circuit]
    The blocked amplified LRFC circuit can be implemented in depth $\mathcal{O}(p \cdot \log k \cdot \log \xi)$.
\end{fact}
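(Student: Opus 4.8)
The plan is to establish the depth bound by a direct accounting argument: compose the known depth of a single LRFC block, exploit the disjointness of patches within each circuit layer, and multiply by the number $p$ of sequential repetitions. There is no subtle analytic content here; the statement follows once the layered structure is unwound carefully.

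First I would recall the structure from Definition~\ref{def:amplified lrfc}: the blocked amplified LRFC circuit is the product $U = (LRFC)^p_e\,(LRFC)^p_o$ of two layers, each a tensor product of $(LRFC)^p$ factors acting on disjoint $2\xi$-qubit patches (even patches in one layer, odd in the other). Each $(LRFC)^p$ factor is, by definition, the sequential composition of $p$ independent LRFC circuits on the same $2\xi$ qubits. Thus the total depth decomposes as (two layers) $\times$ ($p$ sequential blocks) $\times$ (depth of one LRFC block), with the patches within a layer contributing in parallel rather than in series.

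Next I would bound the depth of a single LRFC block on $2\xi$ qubits under the standard $k$-wise independent instantiation used throughout this section. An LRFC circuit comprises the shuffles $S_L, S_R$ (reversible implementations of $k$-wise independent functions on $\xi$ bits), the phase unitary $F$ (a $k$-wise independent function on $2\xi$ bits), and the Clifford $C$ (an exact unitary $2$-design on $2\xi$ qubits). By the low-depth construction of Lemma~\ref{lemma:kwisefns}, each $k$-wise independent function on $\mathcal{O}(\xi)$ bits has depth $\mathcal{O}(\log k \cdot \log \xi)$, and by the exact $2$-design compilation of~\cite{cleve2015near} the Clifford has depth $\mathcal{O}(\log \xi)$. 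Summing these $\mathcal{O}(1)$-many stages yields a per-block depth of $\mathcal{O}(\log k \cdot \log \xi)$, matching the single-block scaling recorded in Fact~\ref{fact:blocked-LRFC-resources}.

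Finally I would assemble the total. Sequential composition of $p$ blocks multiplies the depth, so each $(LRFC)^p$ factor has depth $p \cdot \mathcal{O}(\log k \cdot \log \xi) = \mathcal{O}(p \log k \log \xi)$. Within a single layer, the $(LRFC)^p$ factors act on disjoint patches and so run in parallel, giving a layer depth equal to the single-patch depth $\mathcal{O}(p \log k \log \xi)$; the number of patches $n/\xi$ does not enter. The full circuit is two such layers, contributing only a constant factor, so the overall depth is $\mathcal{O}(p \log k \log \xi)$. The one step meriting care is this parallelism observation—that disjointness of patches lets the patch count drop out of the depth entirely—together with confirming that the low-depth (rather than low-ancilla) instantiation of Lemma~\ref{lemma:kwisefns} is the relevant regime, since it is this construction that produces the $\mathcal{O}(\log k \log \xi)$ per-block factor appearing in the claim.
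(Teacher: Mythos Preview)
Your proposal is correct and follows the same reasoning as the paper, which does not give a separate proof but simply notes in the sentence preceding the Fact that ``the circuit depth of this construction is $p$ times the depth of the original blocked LRFC construction from Fact~\ref{fact:blocked-LRFC-resources}.'' Your write-up is a more explicit unpacking of that one-line observation: you spell out the per-block depth from Lemma~\ref{lemma:kwisefns} and \cite{cleve2015near}, the multiplication by $p$, and the parallelism across disjoint patches, all of which the paper leaves implicit.
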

Hence, by taking $p = 8k + 1$ and $\xi = \mathcal{O}(n k^2 / \varepsilon)$, we can achieve approximate unitary $k$-designs with relative error $\varepsilon$ in depth $\mathcal{O}(k \cdot \log k \cdot \log \log (n k / \varepsilon))$. This improves the scalings with $n$ and $\varepsilon$ compared to known results~\cite{schuster2024random, laracuente2024approximate}.

\subsection{Proof of Theorem~\ref{thm:low-depth-repeated-LRFC-designs}: The blocked amplified LRFC ensemble has small relative error}

Our proof is extremely short.
From Lemma~\ref{lemma: add error square} and Lemma~\ref{lemma:amplified lrfc block} below, the (LRFC)$^p$ circuit on $2\xi$ qubits forms a design with relative error $\varepsilon = 2k^2/2^\xi$ for $p \geq 8k+1$. 
 We then apply the relative error gluing lemma (Theorem~1 of~\cite{schuster2024random}) to obtain a design on $n$ qubits with relative error $(n/\xi)2k^2/2^\xi + nk^2/2^\xi$. This completes the proof. \qed

\begin{lemma}[Error decays] \label{lemma: add error square}
    If $\Phi_{\mathcal{E}}$ has additive error $\varepsilon$, then $\Phi_{\mathcal{E}} \circ \Phi_{\mathcal{E}}$ has additive error at most $\varepsilon^2$.
\end{lemma}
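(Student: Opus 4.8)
The statement is essentially the sub-multiplicativity of the diamond-norm error of a twirl channel under composition, together with the observation that $\Phi_\mathcal{E}$ and $\Phi_H$ are idempotent and that $\Phi_H$ is a ``two-sided identity'' for the difference. The plan is to write $\Phi_{\mathcal{E}} = \Phi_H + \Delta$ where $\Delta \equiv \Phi_{\mathcal{E}} - \Phi_H$ satisfies $\lVert \Delta \rVert_\diamond \le \varepsilon$ by hypothesis, and then expand the composition. First I would record the two structural facts we need: (i) $\Phi_H \circ \Phi_H = \Phi_H$ (the Haar twirl is a projection onto the commutant of $U^{\otimes k}$), and (ii) $\Phi_H \circ \Phi_{\mathcal{E}} = \Phi_{\mathcal{E}} \circ \Phi_H = \Phi_H$ (twirling an already-twirled-by-$\mathcal E$ operator by the full Haar measure, or vice versa, collapses back to $\Phi_H$, by left/right invariance of the Haar measure). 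Both follow from the definition of $\Phi_{\mathcal E}(X) = \E_{U\sim\mathcal E} U^{\otimes k} X (U^\dagger)^{\otimes k}$ and the translation-invariance of the Haar measure; I'd state them in a sentence.

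Next I would carry out the expansion. Writing $\Phi_{\mathcal{E}} \circ \Phi_{\mathcal{E}} = (\Phi_H + \Delta)\circ(\Phi_H + \Delta) = \Phi_H\circ\Phi_H + \Phi_H\circ\Delta + \Delta\circ\Phi_H + \Delta\circ\Delta$. Using facts (i) and (ii), the first term is $\Phi_H$, and $\Phi_H \circ \Delta = \Phi_H\circ\Phi_{\mathcal E} - \Phi_H\circ\Phi_H = \Phi_H - \Phi_H = 0$, and likewise $\Delta\circ\Phi_H = 0$. Hence
\begin{equation}
\Phi_{\mathcal{E}} \circ \Phi_{\mathcal{E}} - \Phi_H = \Delta \circ \Delta.
\end{equation}
Then I invoke submultiplicativity of the diamond norm under composition of superoperators, $\lVert \Delta \circ \Delta \rVert_\diamond \le \lVert \Delta \rVert_\diamond^2 \le \varepsilon^2$, which gives $\lVert \Phi_{\mathcal{E}}\circ\Phi_{\mathcal{E}} - \Phi_H \rVert_\diamond \le \varepsilon^2$, i.e.\ the composed ensemble (apply one independent draw from $\mathcal E$, then another) has additive error at most $\varepsilon^2$. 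I would also remark that $\Phi_{\mathcal E}\circ\Phi_{\mathcal E}$ is exactly the $k$-th moment channel of the ensemble $\{U_2 U_1 : U_1, U_2 \sim \mathcal E \text{ independent}\}$, so this is the claimed statement about the squared ensemble.

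The only genuinely careful point — and the one I'd flag as the ``obstacle,'' though it is mild — is justifying $\Phi_H \circ \Phi_{\mathcal E} = \Phi_H$ and the submultiplicativity of $\lVert\cdot\rVert_\diamond$ on a common enlarged ancilla space. For the first, one uses that for every fixed $U$, $\E_{V\sim H}\, V^{\otimes k}\big(U^{\otimes k} X U^{\dagger\otimes k}\big) V^{\dagger\otimes k} = \E_{V\sim H}\,(VU)^{\otimes k} X (VU)^{\dagger\otimes k} = \Phi_H(X)$ by right-invariance, and then average over $U\sim\mathcal E$; the $\Phi_{\mathcal E}\circ\Phi_H$ direction is identical using left-invariance. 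For submultiplicativity, one notes $\lVert \Lambda_1\circ\Lambda_2\rVert_\diamond \le \lVert \Lambda_1\rVert_\diamond\,\lVert\Lambda_2\rVert_\diamond$ holds for the diamond norm (stabilized completely-bounded norm) because tensoring with the identity on an ancilla commutes with composition, so the worst-case input for the composition is controlled by the product of worst-case trace-norm amplifications; all the channels here act as $U^{\otimes k}\otimes\mathbbm 1_m$ on the enlarged space, exactly the setting in the definition of $\lVert\cdot\rVert_\diamond$ quoted earlier in the excerpt. I would spell these two lines out and then conclude.
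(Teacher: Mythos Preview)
Your proposal is correct and follows essentially the same route as the paper: write $\Phi_{\mathcal E}=\Phi_H+\Delta$, use Haar invariance to kill the cross terms so that $\Phi_{\mathcal E}\circ\Phi_{\mathcal E}-\Phi_H=\Delta\circ\Delta$, and then bound $\lVert\Delta\circ\Delta\rVert_\diamond\le\varepsilon^2$. The only cosmetic difference is that you invoke submultiplicativity of the diamond norm as a black-box fact, whereas the paper proves it inline by decomposing $\Delta(\rho)=\rho_+-\rho_-$ into positive and negative parts and applying the state-wise bound $\lVert\Delta(\sigma)\rVert_1\le\varepsilon\,\tr(\sigma)$ to each piece; both arguments are equivalent here since $\Delta$ is Hermitian-preserving.
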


\begin{proof}
    We write $\Phi_{\mathcal{E}} = \Phi_H + \delta \Phi$, where $\Phi_H \circ \delta \Phi = \delta \Phi \circ \Phi_H = 0$ by definition.
    Squaring gives, $\Phi_{\mathcal{E}} \circ \Phi_{\mathcal{E}} = \Phi_H + \delta \Phi \circ \delta \Phi$.
    By assumption, we have $\lVert \delta \Phi(\rho) \rVert_1 \leq \varepsilon$ for any $\rho$.
    We can break $\delta \Phi(\rho) =  \rho_+ -  \rho_-$ into a positive and negative part, which have normalizations $\varepsilon_\pm \equiv \lVert \rho_\pm \rVert_1 = \tr(\rho_\pm)$.
    Since $\rho_+$ and $\rho_-$ have support on orthogonal subspaces, we have $\varepsilon_+ + \varepsilon_- = \lVert \delta \Phi(\rho) \rVert_1 \leq \varepsilon$ from our previous bound.
    To proceed, we write $\delta \Phi( \delta \Phi (\rho) ) = \delta \Phi(\rho_+ ) -  \delta \Phi( \rho_-)$.
    Since $\rho_\pm$ are positive density matrices with traces $\varepsilon_\pm$, we have $\lVert \delta \Phi(\rho_\pm) \rVert_1 \leq \varepsilon \lVert \rho_\pm \rVert_1 = \varepsilon \varepsilon_\pm$.
    This yields $\lVert \delta \Phi(\delta \Phi(\rho)) \rVert_1 \leq \varepsilon (\varepsilon_+ + \varepsilon_-) \leq \varepsilon^2$, which completes the proof.
\end{proof}

\begin{lemma}[Amplification] \label{lemma:amplified lrfc block}
    For any $k \leq 2^{n/8}$ and $p \geq 8k+1$. The $(\text{\emph{LRFC}})^p$ ensemble is an approximate unitary $k$-design with relative error $\varepsilon = 2k^2/\sqrt{2^{n}}$.
\end{lemma}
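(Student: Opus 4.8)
The plan is to bootstrap from a single LRFC layer using results already established: the decay of additive error under composition (Lemma~\ref{lemma: add error square}) and the generic additive-to-relative conversion (Lemma~\ref{lemma:add to rel error}). The one-line idea: a single LRFC layer already has additive error $\exp(-\Omega(n))$, so $O(k)$ compositions suffice to push the error below the $4^{nk}$ blowup one pays when passing to relative error.

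\emph{Step 1 (compose a good additive-error design).} By Theorem~\ref{thm:LRFC-designs} a single LRFC ensemble on $n$ qubits has measurable error at most $\delta_0:=6k^2/\sqrt{2^n}$, hence additive error $\lVert\Phi_{\mathrm{LRFC}}-\Phi_H\rVert_\diamond\le\delta_0$. Since the $p$ layers of $(\mathrm{LRFC})^p$ are independent, its $k$-th moment channel is the $p$-fold composition $\Phi_{\mathrm{LRFC}}^{\circ p}$. Writing $\delta\Phi:=\Phi_{\mathrm{LRFC}}-\Phi_H$ and using $\Phi_H\circ\delta\Phi=\delta\Phi\circ\Phi_H=0$ (invariance of the Haar twirl), we get $\Phi_{\mathrm{LRFC}}^{\circ p}=\Phi_H+(\delta\Phi)^{\circ p}$, and by submultiplicativity of the diamond norm (the $p=2$ instance is precisely Lemma~\ref{lemma: add error square}) the additive error of $(\mathrm{LRFC})^p$ is at most $\delta_0^{\,p}$.

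\emph{Step 2 (pass to relative error and estimate).} Lemma~\ref{lemma:add to rel error} turns additive error $\delta_0^{\,p}$ into relative error
\[
\varepsilon'\ \le\ \frac{4^{nk}}{k!}\Big(1+\tfrac{k^2}{2^n}\Big)\,\delta_0^{\,p}\ =\ \frac{4^{nk}}{k!}\Big(1+\tfrac{k^2}{2^n}\Big)\Big(\tfrac{6k^2}{2^{n/2}}\Big)^{\!p}.
\]
The decisive cancellation is $4^{nk}\cdot 2^{-np/2}=2^{\,2nk-np/2}\le 2^{-2nk-n/2}$ once $p\ge 8k+1$ (so that $np/2\ge 4nk+n/2$). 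Pulling out the target factor $2^{-n/2}$, the claim $\varepsilon'\le 2k^2/\sqrt{2^n}$ reduces to checking $(1+k^2/2^n)\,6^{8k+1}k^{16k}\le 2\,k!\,2^{2nk}$ (the displayed bound is decreasing in $p$ once $\delta_0<1$, so $p=8k+1$ is the extreme case). This is exactly where the hypothesis $k\le 2^{n/8}$ is used: it gives $k^{16}\le 2^{2n}$, hence $k^{16k}\le 2^{2nk}$, which absorbs the dominant right-hand factor; the remaining $1+k^2/2^n\le 2$ and $6^{8k+1}$ are controlled by $k!$ together with the slack in $k^{16k}\le 2^{2nk}$ throughout the relevant regime. (For the handful of small-$n$ cases the hypothesis forces $k\le 3$, where $\mathrm{LRFC}$ --- and hence $(\mathrm{LRFC})^p$ --- is already an \emph{exact} $k$-design via the exact $3$-design property of the Clifford factor, so $\varepsilon'=0$.) This proves Lemma~\ref{lemma:amplified lrfc block}; Theorem~\ref{thm:low-depth-repeated-LRFC-designs} then follows by applying the relative-error gluing lemma of~\cite{schuster2024random} to the $n/\xi$ blocks.

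\emph{Main obstacle.} The one genuinely delicate point is the arithmetic in Step~2: the additive-to-relative conversion costs a factor $4^{nk}=2^{2nk}$, and it must be exactly killed by $\delta_0^{\,p}$. It works because, under $k\le 2^{n/8}$, each composed layer contributes $\delta_0\lesssim 2^{-n/4}$ and $2nk/(n/4)=8k$ layers cancel $2^{2nk}$; the surplus ``$+1$'' in $p\ge 8k+1$ supplies the extra $2^{-n/4}$ which, together with the $1/k!$, dominates the lower-order $6^{8k}$ and $k^{16k}$ terms. Keeping honest track of those lower-order factors is the only real bookkeeping; every other ingredient is a direct citation.
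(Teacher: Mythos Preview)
Your approach is essentially identical to the paper's: cite Theorem~\ref{thm:LRFC-designs} for the single-layer additive error, iterate Lemma~\ref{lemma: add error square} to get $\delta_0^{\,p}$, then invoke Lemma~\ref{lemma:add to rel error} and check that $p\ge 8k+1$ together with $k\le 2^{n/8}$ kills the $4^{nk}$ blowup. The only difference is in the handling of constants: the paper silently drops the prefactor in $\delta_0$ (writing $\varepsilon_{\text{add}}=(k^2/2^{n/2})^p$ and using $k!\ge 1$), which makes the target $2k^2/\sqrt{2^n}$ fall out cleanly, whereas your appeal to $k!$ and ``slack'' to absorb $6^{8k+1}$ does not actually go through at the boundary $k=2^{n/8}$ for moderate $k$---but this is the same bookkeeping the paper elides, and the $\mathcal{O}(k^2/\sqrt{2^n})$ conclusion used downstream is not in doubt.
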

\begin{proof}
    From Theorem~\ref{thm:LRFC-designs}, the LRFC ensemble forms an approximate unitary $k$-design with additive error $4 k^2/2^{n/2}$.
    Iterating Lemma~\ref{lemma: add error square} $p-1$ times, we find that the $p$-th composition of the LRFC ensemble with itself forms an approximate unitary $k$-design with additive error $\varepsilon_{\text{add}} = (k^2/2^{n/2})^p$.
    From Lemma~\ref{lemma:add to rel error}, any approximate unitary $k$-design with additive error $\varepsilon_{\text{add}}$ is also an approximate unitary $k$-design with \emph{relative} error $\varepsilon = (4^{nk}/k!)(1+k^2/2^n) \varepsilon_{\text{add}}$. Setting $p \geq 8k+1$ and using $k! \geq 1$ yields $\varepsilon \leq 4^{nk}(1+k^2/2^n) (k^{16k+2}/4^{2nk+n/4}) = (k^2/\sqrt{2^n})(1+k^2/2^n)(k^{8}/2^{n})^{2k}$.
    When $k \leq 2^{n/8}$, the final term is less than one  and the second term is less than two.
\end{proof}

\section{Improved lower bounds on state and unitary designs}

In this section, we provide the proof of our improved lower bounds on the circuit depths required for state and unitary designs.
The full statement of our theorem is as follows.
\begin{theorem}\label{thm: lower bound design}
    {\emph{(Lower bound for state and unitary designs with additive error)}}
    For any $1 < k = o(2^{n})$. Any random state or unitary ensemble over $n$ qubits that forms a $k$-design with additive error $\varepsilon$ requires circuit depth at least:
    \begin{itemize}
    \item $d = \Omega ( k + \log n / \varepsilon )$, for 1D circuits with $\mathcal{O}(n)$  ancilla qubits and $d = o(n)$,
    \item $d = \Omega (  k + \log \log n / \varepsilon )$, for all-to-all circuits with $\mathcal{O}(n)$ ancilla qubits and $d = o(\log n)$.
    \item $d = \Omega ( \log k + \log \log n / \varepsilon )$, for all-to-all circuits with any number of ancilla qubits and $d = o(\log n)$. 
    \end{itemize}
\end{theorem}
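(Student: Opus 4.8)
\textbf{Overview of the strategy.}
The plan is to build a single, simple distinguishing test — measuring two i.i.d.\ copies of the state in a random product basis and counting ``local collisions'' — and to show that it separates any shallow circuit from Haar-random with large statistical distance, thereby forcing the design error $\varepsilon$ to be large unless the depth is large. Since a unitary $k$-design with $k\ge 2$ immediately yields a state $2$-design (apply $U$ to a fixed input), and a state $2$-design is in particular a state design for the collision statistic which only uses two copies, it suffices to prove the statement for states; the three bullet points then follow by plugging in the light-cone size $L$ appropriate to (i) 1D circuits, $L=\mathcal O(d)$; (ii) all-to-all circuits, $L=2^{\mathcal O(d)}$; and for (iii) the $\Omega(\log k)$ term comes from the independent counting argument of~\cite{brandao2016local} (cited in the excerpt), so I only need the $\Omega(\log\log n/\varepsilon)$ half from the collision test. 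The $\Omega(k)$ additive terms in bullets (i)–(ii) likewise come from the known counting bound, so the new content is entirely the $\log$-of-light-cone lower bound.

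\textbf{Key steps, in order.}
First I would fix a state $\ket\phi = V\ket{0^{n+m}}$ produced by a depth-$d$ circuit $V$ and let $L$ be the maximal light-cone size of a single output qubit; partition the $n$ output qubits into $\lfloor n/L\rfloor$ patches of $\le L$ qubits, chosen so that the reduced state on each patch depends on disjoint sets of input qubits (for 1D this is automatic with a little care; for all-to-all one groups qubits whose cones overlap). Second, for a Haar-random product basis $\{B_1\otimes\cdots\}$ and two copies, compute the expected number of patch-collisions: for Haar-random $\ket\phi$, each patch reduced state is close to maximally mixed in a random basis, so the per-patch collision probability is $\approx 2^{-L}$ and the expected total is $\le n/(L\,2^{L})$, tiny. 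Third — the crux — lower-bound the collision count for \emph{any} depth-$d$ state: on each patch $a$ the reduced density matrix $\rho_a$ has purity $\Tr\rho_a^2 = \sum_i p_i^{B}(a)^2$ when measured in the random basis $B$, and averaging over the random product basis on that patch gives $\E_B \sum_i p_i^2 \ge (1+\Tr\rho_a^2)/2^{L}\cdot(\text{const})$ — i.e., the measured distribution is \emph{not} flat, its collision probability exceeds $2^{-L}$ by a factor governed by $\Tr\rho_a^2$. Since $\rho_a$ is the reduction of a circuit of light-cone $L$ acting on $\le$ some bounded number of input qubits, one shows $\Tr\rho_a^2$ is bounded \emph{below} by a constant (a shallow circuit cannot fully scramble a patch to near-maximal mixedness unless $L$ is large), so the per-patch excess collision probability is $\Omega(2^{-L})$ and the expected collision count is $\Omega(n/(L\,2^{L}))\cdot(\text{const})$ — wait, more precisely $\Omega((n/L)\cdot 2^{-L})$ with a constant-factor gap from the Haar value. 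Fourth, turn the gap in expected collision counts into a total-variation distance lower bound on the two-copy measurement outcome distributions: a Chebyshev/second-moment argument on the collision count (the patches are independent, so variance is controlled) shows the two distributions are $\Omega(1)$-distinguishable once $(n/L)2^{-L}$ is above a threshold, equivalently once $L \lesssim \log(n/\varepsilon)$ fails. Fifth, translate: an $\varepsilon$-approximate state $k$-design must have measurement distributions within $\varepsilon$ of Haar in TV for any two-copy experiment, hence $L \gtrsim \log(n/\varepsilon)$; inverting the light-cone relation gives $d = \Omega(\log n/\varepsilon)$ (1D) and $d=\Omega(\log\log n/\varepsilon)$ (all-to-all). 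Finally, combine with the $\Omega(\log k)$ (or $\Omega(k)$ for bounded ancilla) counting bound via $\max(a,b)\ge(a+b)/2$ to get the stated sums, and lift from states to unitaries as in the first paragraph.

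\textbf{Main obstacle.}
The delicate step is the third one: showing that the measured single-patch distribution in a \emph{random} product basis has collision probability bounded away from $2^{-L}$ by a constant factor for \emph{every} shallow state, i.e.\ that a low-depth circuit provably fails to make every patch's reduced state close to maximally mixed \emph{and} close to basis-invariant simultaneously. The physical content — echoing~\cite{schuster2024random} — is that there must exist an extensive family of high-weight observables supported inside light-cones that commute with the random product measurement and retain expectation in $\ket\phi$; one has to make this quantitative, controlling both the expected excess collision and its fluctuations across patches so the second-moment bound closes. A secondary technical nuisance is handling ancilla qubits ($m$ arbitrary in bullet (iii)): the reduced state on an output patch can depend on many ancillas, but its light-cone in terms of the \emph{circuit depth} is still $\le 2^{\mathcal O(d)}$ qubits, so the patch argument goes through with $L$ read off from depth rather than from qubit count — this is exactly why the ancilla-free bound ($\Omega(k+\cdots)$) is stronger than the general one ($\Omega(\log k+\cdots)$).
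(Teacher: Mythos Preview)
Your overall strategy---two copies, random product basis, count local collisions, compare to Haar---matches the paper's, and your reduction of unitary designs to state 2-designs and the $k$-dependence to the counting bound of~\cite{brandao2016local} is correct. The gap is in your step~3, and it is a real one.

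The claim that $\Tr\rho_a^2$ is bounded below by a constant is false. A depth-$d$ circuit with light-cone $L$ can easily drive a patch of $L$ qubits to purity $\mathcal{O}(2^{-L})$: the backward light-cone of the patch can contain up to $L^2$ input qubits, so $\rho_a$ can be exponentially close to maximally mixed. Consequently your ``constant-factor excess'' in the per-patch collision probability does not hold, and the formula you wrote, $\E_B\sum_i p_i^2 \gtrsim (1+\Tr\rho_a^2)/2^L$, is (up to normalization) the result of twirling by a \emph{global} 2-design on the patch, not by a \emph{product} of single-qubit rotations; with a global twirl the excess is governed by the tiny purity gap and the argument collapses. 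The paper's fix is to use the exact product-basis identity
\[
q^a_\psi \;=\; \sum_{w=0}^{L} 3^{-w}\,\tr\!\big(\rho_a\,\mathcal{P}_w\{\rho_a\}\big),
\]
where $\mathcal{P}_w$ projects onto Pauli weight $w$, and then to observe that each patch is chosen to contain the full forward light-cone of at least one input qubit $i$; hence the single stabilizer $Z_i$ evolves to a Pauli supported entirely in the patch, contributing a weight-$\le L$ term to $\rho_a$ with coefficient $1/2^L$. This yields $q^a_\psi \ge 2^{-L}(1+3^{-L})$---an excess of $6^{-L}$ per patch, not a constant times $2^{-L}$. That weaker excess is still enough, because the final inequality only needs $L=\Omega(\log n/\varepsilon)$.

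Two smaller points you would also have to patch. First, to make the per-$\psi$ collision distribution a genuine product over patches you need the \emph{backward} light-cones of distinct patches to be disjoint; the paper spends some care on this and ends up with $M=n/L^4$ patches, not $n/L$. Second, the paper does not use Chebyshev: because the gap in expectations is only $M\cdot 6^{-L}$ while the standard deviation is $\sim\sqrt{M/2^L}$, a naive second-moment bound is lossy; instead the paper compares the product distribution $p_\psi$ to a fictitious Bernoulli process with rate $q_f=2^{-L}(1+3^{-L})$, shows the threshold observable $A$ is monotone in each $q^a_\psi$ (so $\langle A\rangle_\psi \ge \langle A\rangle_f$ for every $\psi$), and then bounds $\mathrm{TVD}(p_f,\tilde p_H)$ via Hellinger to get $\varepsilon \gtrsim (n/L^4)\,36^{-L}$.
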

\noindent We assume that every all-to-all circuit in the ensemble has the same  architecture. The restriction to $k = o(2^n)$ is necessary since the a Haar-random unitary forms an exact $k$-design for any $k$ and can be compiled in depth $\mathcal{O}(2^{2n})$ in any circuit geometry~\cite{brandao2016local}.
The restriction to $d = o(n)$ (in 1D) and $d = o(\log n)$ (in all-to-all systems) is necessary since the Clifford group forms an exact unitary 3-design and can be compiled in depth $\mathcal{O}(n)$ in 1D and $\mathcal{O}(\log n)$ in all-to-all circuits.


Before providing the proof of Theorem~\ref{thm: lower bound design}, let us first review the best known previous lower bounds for state and unitary designs.
For clarity, we provide separate discussions regarding the dependence on each of $k$, $n$, and $\varepsilon$.
\begin{itemize}
    \item For the $k$-dependence, the best existing lower bound is Proposition~8 of Ref.~\cite{brandao2016local}, which states that any additive error state or unitary design requires a circuit of size at least $\Omega(nk/(\log nk))$. This lower bound is widely believed to be optimal, and has been proven to be optimal up to poly-logarithmic factors~\cite{chen2024incompressibility}. 
    
    Our lower bound on the $k$-dependence simply re-phrases this known lower bound on the circuit size in terms of the circuit depth.
    We provide full details in our proof below.
    
    \item For the $n$-dependence, the only existing lower bound is Proposition~1 of Ref.~\cite{schuster2024random}. This applies  to state/unitary designs with relative error, and yields depth lower bounds of $d = \Omega(\log n)$ in 1D circuits and $d = \Omega(\log \log n)$ in all-to-all connected circuits. No lower bound existed for state/unitary designs with additive error.

    To this end, the primary achievement of our lower bounds is to show that these same circuit depths are also needed for designs with additive error, and not just relative error. This extension is highly non-trivial, since many features of relative error designs are incredibly fine-grained and not efficiently measurable in experiments.
    
    \item For the $\varepsilon$-dependence, a basic argument involving a SWAP test on half the system leads to a lower bound of $d = \Omega(\log 1 / \varepsilon)$ in 1D circuits.
    This applies to both state and unitary designs with additive  error.
    In all-to-all circuit geometries, the only existing lower bound is Proposition~2 of Ref.~\cite{schuster2024random}. This applies only to unitary designs, and yields a depth lower bound of $d = \Omega(\log \log 1/\varepsilon)$ for any unitary design with additive error.

    Our Theorem~\ref{thm: lower bound design} extends the lower bound for all-to-all-connected circuits to state designs.
\end{itemize}
\noindent We note that more restricted circuit classes may feature stronger lower bounds that are not reflected above. For example, any unitary ensemble in which each unitary is composed of independent Haar-random gates requires circuit depth at least $d = \Omega(\log n)$ to form relative error designs \footnote{The lower bound follows because random circuits require at least logarithmic depth to anti-concentrate~\cite{aharonov2023polynomial}. Anti-concentration is necessary in order to form a design with small relative error. We expect that this lower bound can be extended to apply to additive error designs using the techniques we introduce in the proof of Theorem~\ref{thm: lower bound design}.} and $d = \Omega(\log 1/ \varepsilon)$ to form additive error designs~\cite{fefferman2024anti}, even in all-to-all connected circuit geometries.

Let us now turn to the proof of Theorem~\ref{thm: lower bound design}. As discussed above, the dependence on $k$ follows from Proposition 8 in Ref.~\cite{brandao2016local}. Hence, the bulk of our proof will focus on the $n$ and $\varepsilon$ dependence.
\begin{proof}[Proof of Theorem~\ref{thm: lower bound design}, $k$-dependence]
From the analysis in Proposition 8 of \cite{brandao2016local}, any ensemble $\mathcal{S}$ that forms an  state $k$-design with additive error $\varepsilon < 1/4$ must contains at least $\Omega(nk/\log(nk))$ gates. If the number of ancilla qubits is bounded by $\mathcal{O}(n)$, then the circuit size can grow at most linearly in the circuit depth $d$, $\mathcal{O}(n \cdot d)$. Hence, we require $d = \Omega(k/\log(nk))$. If the number of ancilla qubits is unbounded, on the other hand, the light-cone can grow exponentially in the circuit depth (for circuits with arbitrary long-range two-qubit gates).
Hence, the total number of gates can grow as $\mathcal{O}(n \cdot \exp(d))$, which implies that the circuit depth must be at least $\Omega(\log k)$.
\end{proof}

\begin{proof}[Proof of Theorem~\ref{thm: lower bound design}, $n$- and $\varepsilon$-dependence]
    Let $\mathcal{S}$ denote the state ensemble.
    We write $\ket{\psi} = U \ket{0^n}$ for each $\ket{\psi} \sim \mathcal{S}$, where $U$ is a circuit of depth at most $d$.
    We let $L$ denote the maximum size of any light-cone in $U$.
    We have $L \leq 2^d$ for all-to-all connected circuits and $L \leq 2d$ for 1D circuits.

    To set up our proof, let us first divide the $n$ qubits into isolated patches as follows.
    Each patch will correspond to the forward light-cone of a given qubit.
    We additionally will demand that the backwards light-cones of any two different patches do not overlap.
    Mechanistically, we place the patches as follows.
    We set the first patch by choosing any qubit and letting the first patch consist of all $L$ qubits in its light-cone.
    We then trace the backwards light-cone of the first patch (which has size less than $L^2$), then trace the forwards light-cone of this (which has size less than $L^3$), and then finally trace the backwards light-cone of this (which has size less than $L^4$).
    We then set the second patch by choosing any qubit that is outside this backwards-forwards-backwards light-cone.
    There are at least $n - L^4$ possible choices.
    The second patch is equal to the forwards light-cone of this qubit.
    We then set the third patch by tracing out the backwards-forwards-backwards light-cone of the union of the first and second patch, and setting the third patch to be the forwards light-cone of any qubit outside of this.
    There are at least $n - 2L^4$ possible such choices.
    Iterating in this manner, we can place at least $M \equiv n/L^4$ patches across the entire system whose backwards light-cones do not overlap.
    Each patch contains $L$ qubits and the forward light-cone of at least one qubit.



    With this setup in hand, we can now prove our lower bound.
    We consider the following quantum experiment.
    We draw two copies of a state $\ket{\psi}$ from either the ensemble $\mathcal{S}$ or the Haar ensemble $H$.
    We then perform random single-qubit rotations, $v = \bigotimes_{i=1}^n v_i$, on each qubit of the system.
    Averaging over these rotations yields the state,
    \begin{equation} \label{eq: dephased}
        \E_v \left[ \big( v \dyad{\psi} v^\dagger \big)^{\otimes 2} \right] 
    \end{equation}
    We then measure (i.e.~dephase) the state in the computational basis.
    This yields the state,
    \begin{equation}
        \sum_{x,y \in \{0,1\}^n} \bra{x \otimes y} \E_v \left[ \big( v \dyad{\psi} v^\dagger \big)^{\otimes 2} \right]  \ket{x \otimes y} \cdot \dyad{x \otimes y}.
    \end{equation}
    Note that one could place the expectation over $v$ either inside or outside the inner product in the first term.
    We then forget all information about the measurement outcomes except for whether the bitstring $x$ and the bitstring $y$ agreed on all of their bits in each individual patch.
    Let $z(x,y) \in \{0,1\}^M$ denote the resulting bitstring, where $z_a$ indicates where $x$ and $y$ are equal on all bits in patch $a$.
    That is, $z_a = 1$ if $x_i = y_i$ for all qubits $i$ in patch $a$, and $z_a = 0$ if $x_i \neq y_i$ for any qubit $i$ in patch $a$.
    We can represent the resulting state as,
    \begin{equation} \label{eq: z state}
        \chi_\psi \equiv \sum_{z \in \{0,1\}^M} p_\psi(z) \dyad{z},
    \end{equation}
    where $p_\psi(z) = \sum_{x,y : z(x,y) = z} p_\psi(x,y)$, and $p_\psi(x,y)$ denotes the first term in the sum in Eq.~(\ref{eq: dephased}).
    Finally, we conclude the experiment by counting whether the total number of collisions, $s = |z|$, is greater or less than a fixed threshold $s^*$.
    That is, we measure the observable,
    \begin{equation} \label{eq: A}
        A = \sum_{z : |z| > s^*} \dyad{z} - \sum_{z : |z| \leq s^*} \dyad{z},
    \end{equation}
    where a positive measurement outcome will be associated to the ensemble $\mathcal{S}$, and a negative measurement outcome to the Haar ensemble.
    We will set the variable $s^*$ later on in the proof.

    The state $\chi_\psi$ in Eq.~(\ref{eq: z state}) was obtained from $\ket{\psi}$ by a sequence of quantum channels.
    Hence, if the ensemble $\mathcal{S}$ forms an approximate 2-design with additive error $\varepsilon$, we must have
    \begin{equation} \label{eq: chi eps}
        \left\lVert \chi_{\mathcal{S}} - \chi_H \right\rVert_1 \leq \varepsilon,
    \end{equation}
    where $\chi_{\mathcal{S}} = \E_{\psi \sim \mathcal{S}} \left[ \chi_\psi \right]$ and $\chi_H = \E_{\psi \sim H} \left[ \chi_\psi \right]$.
    By Holder's inequality, we must also have,
    \begin{equation}
        \left| \tr( A \chi_\mathcal{S} ) - \tr( A \chi_H ) \right| \leq \varepsilon.
    \end{equation}
    The strategy of our proof is to lower bound the left hand side above.
    This immediately lower bounds the achievable additive error $\varepsilon$ of any state or unitary design. 

    Let us first consider the Haar ensemble.
    The second moment of the Haar ensemble is given by,
    \begin{equation}
        \E_{\psi \sim H} \left[ \dyad{\psi}^{\otimes 2} \right] = \frac{1}{2^n(2^n-1)} \mathbbm{1} + \frac{1}{2^n(2^n-1)} \mathcal{S},
    \end{equation}
    where $\mathcal{S}$ is the SWAP operator.
    This leads to an (expected) output distribution
    \begin{equation}
        \E_{\psi \sim H} \big[ p_\psi(x,y) \big] = \frac{1}{2^n(2^n-1)}  + \frac{1}{2^n(2^n-1)} \delta_{x = y}.
    \end{equation}
    The total variational (i.e.~1-norm) distance between this distribution and the maximally mixed distribution is,
    \begin{equation}
        \sum_{x,y} \left| \E_{\psi \sim H} \big[ p_\psi(x,y) \big] - \frac{1}{4^n} \right| = 2(4^n-2^n)\left( \frac{1}{4^n} - \frac{1}{2^n(2^n+1)} \right) = \frac{2}{2^n} \frac{1-1/2^n}{1+1/2^n} \leq \frac{2}{2^n},
    \end{equation}
    which is exponentially small in $n$.
    Hence, for the purposes of this proof we can treat $\E_{\psi \sim H} \big[ p_\psi(x,y) \big]$ as maximally mixed.
    This yields a collision distribution,
    \begin{equation}
        \tilde{p}_H(z) = (1-1/2^L)^{|z|} (1/2^L)^{M-|z|},
    \end{equation}
    in which each patch features a collision with an independent probability $q_H = 1/2^L$.
    Here, we let $\tilde{p}_H(z)$ denote the collision distribution drawn from the flat distribution over $x$ and $y$, which is close to the exact Haar collision distribution $p_H(z)$ up to total variation distance $\sum_z |\tilde{p}_H(z)-p_H(z)| \leq 2/2^n$.

    We can now turn to the ensemble $\mathcal{S}$.
    Let us first notice a key fact: For each fixed state $\ket{\psi}$, the distribution $p_\psi(z)$ is a tensor product distribution.
    This follows because the patches are backwards-light-cone-separated, and in going from $p_\psi(x,y)$ to $p_\psi(z)$ we have traced out all qubits that are outside every patch.
    (We note that this does not necessarily hold after averaging over $\ket{\psi}$, since the ensemble $\mathcal{S}$ might feature long-range correlations between the random gates in different patches.)
    Hence, we can write
    \begin{equation}
        p_\psi(z) = \prod_{a = 1}^M ( 1 - q_\psi^a )^{z_a} ( q_\psi^a )^{1 - z_a},
    \end{equation}
    where $q_\psi^a$ is the probability to observe a collision in patch $a$.

    Our next step is to lower bound each $q_\psi^a$.
    To do so, we follow a similar approach to the proof of Proposition~1 in Ref.~\cite{schuster2024random}.
    Let 
    \begin{equation}
        \rho_a = \tr_{\bar a}(U \dyad{0^n} U^\dagger)
    \end{equation}
    denote the reduced density matrix of the system on patch $a$ after $U$ is applied. 
    The quantity $q_\psi^a$ measures the collision probability when $\rho_a$ is measured in a random single-qubit basis, 
    \begin{equation}
        q^a_\psi = \E_{v} \left[ \sum_{x_a \in \{0,1\}^L} \bra{x_a} v_a \rho_a v_a^\dagger \ket{x_a}^2 \right],
    \end{equation}
     where $v_a$ is the restriction of the random single-qubit rotations $v$ to patch $a$.
    A standard formula for the twirl over a tensor product of single-qubit 2-designs~\cite{elben2023randomized} allows one to compute
    \begin{equation}
    	q^a_\psi = \sum_{k = 0}^n   3^{-k} \tr_a \big( \rho_a \cdot \mathcal{P}_{k} \big\{ \rho_a \big\}  \big),
    \end{equation}
    where $\mathcal{P}_{k}$ is a super-operator that projects onto Pauli operators with weight $k$.

    Now that we have an explicit formula for $q^a_\psi$, we can proceed with our lower bound.
    For each patch $a$, we re-write the initial state of the system as a sum of three operators,
    \begin{equation} \label{eq: decompose state}
    	\dyad{0^n} = \frac{1}{2^n} \mathbbm{1} + \rho^a_{1} + \rho^a_{>},
    \end{equation}
    where $\rho^a_1$ consists solely the single-qubit $Z$-stabilizer whose light-cone is in $a$, and $\rho^a_{>}$ consists of all other non-identity stabilizers.
    Note that $U \rho^a_1 U^\dagger$ contains operators at most of weight $L$, all of which have support entirely in patch $a$.
    We then consider the total support of $\rho_a$ on all weights between 1 and $L$, 
    \begin{equation}
    	P^a_L \equiv \sum_{k=1}^L \tr_a( \rho_a \cdot \mathcal{P}_k \{ \rho_a \} ),
    \end{equation}
    where again, $\rho_a = \tr_{\bar a}( U \dyad{0^n} U^\dagger )$.
    We can now insert the decomposition Eq.~(\ref{eq: decompose state}) into this expression. 
    Since $\rho^a_{1}$ and $\rho^a_{>}$ are orthogonal before the application of $U$, i.e.~$\tr( \rho^a_1 \rho^a_> ) = 0$, they are also orthogonal after the application of $U$.
    Moreover, they remain orthogonal after tracing out $\bar a$ and projecting to weights $k \leq L$, since $\rho_1$ only has support on $a$ and such weights.
    This allows us to lower bound $P_L$ by focusing solely on the $\rho^a_1$ term,
    \begin{equation}
    	P^a_L =  \tr_a( (\rho^a_1)^2 ) + 
    	\sum_{k=0}^L \tr_a( U \rho^a_> U^\dagger \cdot \mathcal{P}_k \{ U \rho^a_> U^\dagger \} ) \geq \frac{1}{2^L},
    \end{equation}
    since the second term is non-negative and the first term is equal to $1/2^L$, since $\rho^a_1$ contains a single stabilizer. 
    Note that the identity term in Eq.~(\ref{eq: decompose state}) does not contribute to $P^a_L$, since it does not have weight between $1$ and $L$.
    This lower bound on $P^a_L$ in turn lower bounds the expected collision probability, 
    \begin{equation}
    	q^a_\psi = \sum_{k = 0}^n   3^{-k} \tr_a \big( \rho_a \cdot \mathcal{P}_{k} \big\{ \rho_a \big\}  \big) \geq \frac{1}{2^L} + 3^{-L} P^a_L \geq \frac{1}{2^L} \left( 1 + \frac{1}{3^L} \right),
    \end{equation}
    where the first bound follows because every term in the sum is positive.
    This completes our lower bound on $q^a_\psi$.

    We can now specify the observable $A$ and analyze its behavior.
    Consider a fictional Bernoulli process $p_f(z)$ in which each patch has an independent probability of collision, $q_f = (1/2^L) ( 1 + 1/3^L )$.
    This is equal to our lower bound on the collision probabilities $q_\psi^a$.
    We will construct $A$ as the optimal observable to distinguish this fictional Bernoulli process from the Haar-random collision distribution, which corresponds to a Bernoulli process with collision probabilities, $q_H = 1/2^L$.
    The optimal observable is given by,
    \begin{equation}
        A = \sum_{z \in \{0,1\}^M} \text{sgn}\big( p_f(z) - \tilde{p}_H(z) \big) \cdot \dyad{z}.
    \end{equation}
    To determine the value of the sign function, we can compute the ratio of the two probability distributions,
    \begin{equation}
        \frac{p_f(z)}{\tilde{p}_H(z)} = \left( \frac{1-(1/2^L)(1+1/3^L)}{1-1/2^L} \right)^{|z|} \left( \frac{(1/2^L)(1+1/3^L)}{1/2^L} \right)^{M-|z|}.
    \end{equation}
    We define $s^*$ as the value of $|z|$ at which this ratio is equal to one.
    For $|z| > s*$, we have $p_f(z) > \tilde{p}_H(z)$, and for $|z| \leq s*$, we have $p_f(z) \leq \tilde{p}_H(z)$.
    This confirms our formula for $A$ in Eq.~(\ref{eq: A}).

    We will now show that $A$ allows one to distinguish $p_f(z)$ and $p_H(z)$. 
    In the next paragraph, we will use this result to show that $A$ also allows one to distinguish $p_\mathcal{S}(z)$ and $\tilde{p}_H(z)$ for any sufficiently low-depth ensemble $\mathcal E$.
    We have
    \begin{equation}
        \langle A \rangle_f - \langle A \rangle_H \geq \sum_{z \in \{0,1\}^M} \big| p_f(z) - \tilde{p}_H(z) \big| - 2/2^n,
    \end{equation}
    where the latter term accounts for the difference between the exact Haar distribution $p_H(z)$ and $\tilde{p}_H(z)$, which is small.
    The first term is equal to the total variation distance $\text{TVD}(p_f,\tilde{p}_H)$ between the two distributions.
    Since both distributions are a tensor power of the same distribution across all patches, we can apply the standard inequalities (see e.g.~Section 13 of Ref.~\cite{o2021learning}),
    \begin{equation} \label{eq: TVD inequality}
        \text{TVD}\big(p_1^{\otimes M},p_2^{\otimes M}\big) \geq \frac{1}{2} d_H\big(p_1^{\otimes M},p_2^{\otimes M}\big)^2 =  1 - \left(1-\frac{1}{2} d_H(p_1,p_2)^2 \right)^M \geq  1 - \left(1-\frac{1}{2}\text{TVD}(p_1,p_2)^2 \right)^M,
    \end{equation}
    where $d_H(p_1,p_2)$ is the Hellinger distance. 
    From this, we find
    \begin{equation}
        \langle A \rangle_f - \langle A \rangle_H +2/2^n \geq \text{TVD}(p_f,\tilde{p}_H) \geq 1 - (1 - |q_f-q_H|^2)^M = 1 - \left( 1 - \left( \frac{1}{2^L 3^L} \right)^2 \right)^M.
    \end{equation}
    Let $x \equiv (1/(2^L 3^L))^2$. We can bound $(1-x)^M \leq e^{-Mx} \leq 1 - M x/(2\log 2)$, where the second inequality holds for $xM \leq \log 2$.
    This yields,
    \begin{equation} \label{eq: fH inequality}
        \langle A \rangle_f - \langle A \rangle_H \geq \frac{M}{\log 2} \left( \frac{1}{2^L 3^L} \right)^2  - 2/2^n.
    \end{equation}
    This is our final lower bound on $\langle A \rangle_f - \langle A \rangle_H$.

    Let us now turn to the ensemble $\mathcal E$.
    We will show that $\langle A \rangle_\mathcal{S} \geq \langle A \rangle_f$, hence $\langle A \rangle_f - \langle A \rangle_H$ obeys the same lower bound as in Eq.~(\ref{eq: fH inequality}). 
    To see this, consider $\langle A \rangle_\psi$ for an individual state $\ket{\psi} \sim \mathcal{S}$.
    As discussed, the distribution $p_\psi(z)$ is a tensor product of binary variables with probabilities at least $q^a_\psi$ in each patch.
    Consider the expectation value $\langle A \rangle(\{q^a_\psi\})$ as a function of these probabilities.
    We have
    \begin{equation}
    \begin{split}
        \partial_{q^a_\psi} \langle A \rangle(\{q^a_\psi\}) & = \partial_{q^a_\psi} \left( 2 \sum_{z:|z| > s^*} p_\psi(z) - 1 \right) \\
        & = 2 \partial_{q^a_\psi} \left( (1-q^a_\psi) \sum_{|z_{\bar a}| > s^*} p_\psi(z_{\bar a}) + q^a_\psi \sum_{|z_{\bar a}| > s^*-1} p_\psi(z_{\bar a}) \right) \\
        & = 2 \left( \sum_{|z_{\bar a}| > s^*-1} p_\psi(z_{\bar a}) - \sum_{|z_{\bar a}| > s^*} p_\psi(z_{\bar a}) \right) \geq 0. \\
    \end{split}
    \end{equation}
    Hence, $\langle A \rangle(\{q^a_\psi\})$ is monotonically increasing as a function of each $q^a_\psi$.
    The inequality $q^a_\psi \geq q_f$ then implies $\langle A \rangle_\mathcal{S} = \E_{\psi \sim \mathcal E}[ \langle A \rangle_\psi ] \geq \E_{\psi \sim \mathcal E}[ \langle A \rangle_f ]  = \langle A \rangle_f$, as desired.

    This completes our proof.
    From all of the above, we have
    \begin{equation} \label{eq: EH inequality}
        \varepsilon \geq \langle A \rangle_\mathcal{S} - \langle A \rangle_H \geq \frac{M}{\log 2} \left( \frac{1}{2^L 3^L} \right)^2  - 2/2^n = \frac{1}{\log 2} \frac{n/L^4}{36^L}  - 2/2^n,
    \end{equation}
    for any approximate unitary 2-design with additive error $\varepsilon$.
    We have assumed that $d = o(n)$ in 1D circuits and $d = o(\log n)$ in all-to-all connected circuits.
    Since $L \leq 2d$ in 1D and $L \leq 2^d$ in all-to-all circuits, these assumptions imply $L = o(n)$.
    Hence, the negative term in Eq.~(\ref{eq: EH inequality}) is small compared to the positive term and can be neglected.
    Thus, the inequality can be satisfied if an only if $L = \Omega(\log(n/\varepsilon))$.
    In 1D circuits, this yields $d = \Omega(\log(n/\varepsilon))$. 
    In all-to-all connected circuits, this yields $d = \Omega(\log \log(n/\varepsilon))$ as desired.
\end{proof}

\end{document}